\newcommand{\RN}[1]{%
	\textup{\uppercase\expandafter{\romannumeral#1}}%
}
\def\pa{\partial}
\def\wt{\widetilde}
\def\AA{\textbf{A}}
\def\C{\mathbb{C}}
\def\R{\mathbb{R}}
\newcommand{\erfc}{\operatorname{erfc}}
\newcommand{\erf}{\operatorname{erf}}
\newcommand{\re}{\operatorname{Re}}
\newcommand{\im}{\operatorname{Im}}
\newcommand{\Tr}{\operatorname{Tr}}
\newcommand{\sgn}{\operatorname{sgn}}
\newcommand{\Var}{\operatorname{Var}}
\theoremstyle{plain}
\newtheorem*{thm*}{Theorem}
\newtheorem{thm}{Theorem}[section]
\newtheorem{lem}[thm]{Lemma}
\newtheorem{cor}[thm]{Corollary}
\newtheorem{prop}[thm]{Proposition}
\newtheorem*{prop*}{Proposition}
\newtheorem*{lem*}{Lemma}
\newtheorem{ex}[thm]{Example}
\newtheorem{rem}[thm]{Remark}
\theoremstyle{definition}
\newtheorem*{eg*}{Example}
\newtheorem*{egs*}{Examples}
\newtheorem*{def*}{Definition}
\newtheorem*{Q*}{Question}
\theoremstyle{remark}
\newtheorem*{rmk*}{Remark}
\newtheorem*{rmks*}{Remarks}
\numberwithin{equation}{section}
\begin{document}
\title[Large deviations and fluctuations of real elliptic random matrices]{
Large deviations and fluctuations of real eigenvalues of elliptic random matrices
}

\author{Sung-Soo Byun}
\address{Center for Mathematical Challenges, Korea Institute for Advanced Study, 85 Hoegiro, Dongdaemun-gu, Seoul 02455, Republic of Korea}
\email{sungsoobyun@kias.re.kr}

\author{Leslie Molag}
\address{Department of Mathematics, University of Sussex, Brighton, BN1 9RH, United Kingdom}
\email{L.D.Molag@sussex.ac.uk}

\author{Nick Simm}
\address{Department of Mathematics, University of Sussex, Brighton, BN1 9RH, United Kingdom}
\email{n.j.simm@sussex.ac.uk}

\date{\today}

\keywords{Real elliptic Ginibre matrices, real eigenvalues, strong/weak non-Hermiticity, central limit theorem, large deviation}

\subjclass[2020]{Primary 60B20; Secondary 33C45}

\begin{abstract}
We study real eigenvalues of $N\times N$ real elliptic Ginibre matrices indexed by a non-Hermiticity parameter $0\leq \tau<1$, in both the strong and weak non-Hermiticity regime. Here $N$ is assumed to be an even number. In both regimes, we prove a central limit theorem for the number of real eigenvalues. We also find the asymptotic behaviour of the probability $p_{N,k}^{(\tau)}$ that exactly $k$ eigenvalues are real. In the strong non-Hermiticity regime, where $\tau$ is fixed, we find 
\begin{align*}
    \lim_{N\to\infty} \frac{1}{\sqrt{N}} \log p_{N,k_N}^{(\tau)} = -\sqrt\frac{1+\tau}{1-\tau} \frac{\zeta(3/2)}{\sqrt{2\pi}}
\end{align*}
for any sequence $(k_N)_N$ of even numbers such that $k_N = o(\frac{\sqrt N}{\log N})$ as $N\to\infty$, where $\zeta$ is the Riemann zeta function. 
In the weak non-Hermiticity regime, where $\tau=1-\frac{\alpha^2}{N}$, we obtain
\begin{align*}
    \lim_{N\to\infty} \frac{1}{N} \log p_{N,k_N}^{(\tau)} \leq  \frac{2}{\pi} \int_0^1 \log\left(1-e^{-\alpha^2 s^2}\right) \sqrt{1-s^2} \, ds
\end{align*}
for any sequence $(k_N)_N$ of even numbers such that $k_N=o(\frac{N}{\log N})$ as $n\to\infty$. This inequality is expected to be an equality. 
\end{abstract}

\maketitle

\section{Introduction and Main results}

In 1965, Ginibre introduced three random matrix models that are essentially the unconstrained versions of the GOE, GUE and GSE, i.e. all entries are i.i.d. Gaussians without the requirement of Hermiticity \cite{ginibre1965statistical}. Due to the lack of Hermiticity, the eigenvalues are not confined to the real line, and live on the full complex plane. These ensembles consist of $N\times N$ matrices $M$, with real ($\beta=1$), complex ($\beta=2$) or (real) quaternion ($\beta=4$) entries, that are distributed according to the probability measure
\begin{align*}
    \displaystyle\frac{1}{Z_N^\beta} e^{-\frac{1}{2} \beta \Tr(M^\dagger M)} dM_N^\beta,
\end{align*}
where $Z_N^\beta$ is a normalisation constant, and $dM_N^\beta$ is the standard Lebesgue measure on the corresponding spaces of matrices of real dimension $\beta N^2$.
Nowadays these models are called the real, complex and quaternion Ginibre ensembles (denoted as GinOE, GinUE and GinSE), and they are well-studied in the past half century. For a recent review on the Ginibre ensembles, we refer to the papers \cite{byun2022progress, byun2023progress}.  

In the present paper we focus on $\beta=1$. In fact, we consider a one-parameter deformation of the GinOE, called the real elliptic Ginibre ensemble (eGinOE). Likely inspired by Girko \cite{girko1986elliptic}, the eGinOE was introduced in 1988 by Sommers, Crisanti, Sompolinsky and Stein \cite{SoCrSoSt}. The eGinOE with parameter $-1<\tau<1$ consists of  $N\times N$ real matrices $M$, with centered Gaussian entries that satisfy the correlation structure
\begin{align*}
    \mathbb E M_{ij}^2 =\frac{1}{N}, \qquad \mathbb E M_{ij} M_{ji} = \frac{\tau}{N}, \qquad \mathbb E M_{ii}^2 = \frac{1+\tau}{N}, \qquad i,j=1,\ldots,N\text{ and }i\neq j.
\end{align*}
These are precisely the real random matrices $M$ that are distributed according to the probability measure
\begin{align*}
\displaystyle\frac{1}{Z_N^{(\tau)}} e^{-\frac{1}{2(1-\tau^2)}\Tr(M^\dagger M-\tau M^2)} dM_N, \qquad  dM_N = \prod_{i,j=1}^N dM_{ij},
\end{align*}
where $Z_N^{(\tau)}$ is a normalisation constant. For $\tau=0$, we obtain the GinOE. On the other hand, in the limit $\tau\uparrow 1$, it is known that the eGinOE approaches the GOE. In the limit $\tau\downarrow -1$, the matrix $M$ is real and anti-symmetric. This is equivalent (after multiplication by $i$) to what is known as the anti-symmetric GUE, see \cite{MR2742822,MR2663989} and references therein for further details about this ensemble.  
One can define the eGinOE equivalently as the ensemble consisting of $N\times N$ matrices
\begin{equation}\label{X GOE antiGOE}
M := \sqrt{\frac{1+\tau}{2}}\, S + \sqrt{\frac{1-\tau}{2}} \, A,
\end{equation}
where $S$ and $A$ are matrices picked from the GOE and its anti-symmetric version. 
Nowadays, most authors require the parameter $\tau$ to be in $(0,1)$ or $[0,1)$ in the definition of the eGinOE. 

The first occurence of a GinOE matrix in an application was in a paper by May \cite{may1972will} in 1972, who investigated complex ecological systems. More precisely, May
considered the stability of the solutions to
\begin{align*}
\vec x' = \left(-\mathbb I+\alpha G_N\right) \vec x,
\end{align*}
where $G_N$ is a GinOE matrix. This allows to investigate general systems of differential equations $\vec x' = f(\vec x)$ where $f$ is unknown. 
More general systems of differential equations associated with the eGinOE were further investigated by Fyodorov and Khoruzhenko \cite{MR3521630}. 
Over the years many other applications have been introduced, ranging from dynamics to random networks and cortical electric activity to quantum chromodynamics, see \cite{MR2185860} for references. 

For general fixed $\tau \in [0, 1)$ and $N \to \infty$, it is well known that the eigenvalues are uniformly distributed on the ellipse
\begin{equation} \label{Ellipse}
\mathcal E^{(\tau)}:=\Big \{ z\in\mathbb C: \Big( \frac{\re z}{1+\tau} \Big)^2+\Big( \frac{\im z}{1-\tau} \Big)^2 \le 1 \Big\}, 
\end{equation}
see e.g. \cite{girko1986elliptic,SoCrSoSt,MR3403996}.
We also refer to \cite{alt2022local} for the local law for elliptic random matrices. 

\begin{figure}[h!]
    \begin{subfigure}{0.48\textwidth}
       \begin{center}
            \includegraphics[width=\textwidth]{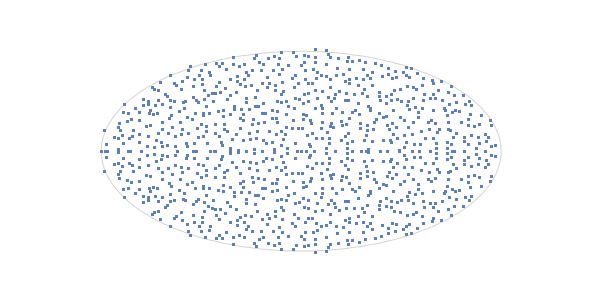}
        \end{center}
    	\subcaption{$N=1000$}
	\end{subfigure}
	\begin{subfigure}{0.48\textwidth}
		\begin{center}
		\includegraphics[width=\textwidth]{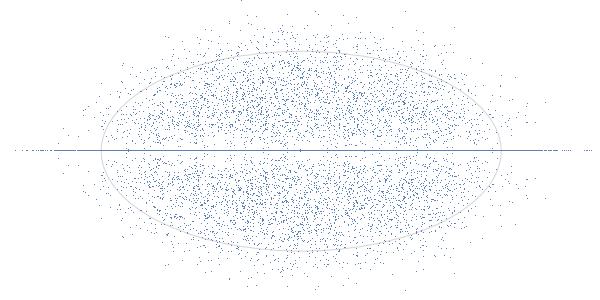}
		\end{center}
		\subcaption{ $N=10$ with $1000$ repetitions }
	\end{subfigure} \caption{Eigenvalues of the eGinOE.} \label{Fig_eGinOE}
\end{figure}

In his original paper, Ginibre only managed to derive the joint probability density function (JPDF) for the case that all $N$ eigenvalues are real. It took about a quarter century longer before the JPDF in the general case was determined \cite{MR1121461}.  When $N=2n$, and the number of real eigenvalues is $2k$, the corresponding JPDF is given by
\begin{align*}
C_n^{(\tau)} \binom{n}{k} 2^{n-k} \prod_{j=1}^k \sqrt{\omega^{(\tau)}(\lambda_j)} \prod_{\ell=1}^{n-k} \omega^{(\tau)}(z_\ell)
|\Delta\left(\lambda_1, \ldots, \lambda_k, z_1, \overline{z}_1, \ldots, z_{n-k}, \overline{z}_{n-k}\right)|,
\end{align*}
where $C_n^{(\tau)}$ is an explicit constant, $$\omega^{(\tau)}(z) = e^{-\frac{\re(z^2)}{1+\tau}} \erfc\Big(\sqrt\frac{2}{1-\tau^2} \im z\Big),$$ 
and $\Delta$ is a Vandermonde factor. We mention that the associated correlation functions exhibit a Pfaffian structure \cite{KanzieperAkemann07}. 

The difficulty in determining the JPDF for the GinOE and eGinOE, is due to the unique property, not seen in the complex and quaternion counterparts, that purely real eigenvalues occur with non-zero probability, see Figure~\ref{Fig_eGinOE}. 
In particular, the probabilities
\begin{align} \label{def:Pnktau}
    p_{N,k}^{(\tau)}, \qquad k = 0,1,\ldots, N
\end{align}
that a particular eigenvalue configuration of the eGinOE (and GinOE) has exactly $k$ real eigenvalues are non-zero, when $k$ has the same parity as $N$.

In the current paper, we shall focus on the case $N$ is even as the odd $N$ case requires separate treatment, see e.g. \cite{MR2485724,MR2341601}. We shall write
\begin{align}
    N = 2n, \qquad n=1,2,\ldots
\end{align}
henceforth.

\subsection{Main result for fluctuations of the number of real eigenvalues}

 Let $\mathcal N_N^{(\tau)}$ be the number of real eigenvalues of $M$.
For $\tau \in [0, 1)$ fixed, it was shown by Forrester and Nagao \cite{MR2430570} that the expected number of real eigenvalues is given by
\begin{equation}\label{EN tau<1}
	\mathbb E \mathcal N_N^{(\tau)}= \sqrt{\frac{1+\tau}{1-\tau}}\sqrt{\frac{2 N}{\pi}} (1+o(1)), \qquad ( N \to \infty).
\end{equation}
The formula \eqref{EN tau<1} was first proved by Edelman, Kostlan, and Shub for the GinOE case ($\tau=0$) \cite{MR1231689}.
In addition to the mean, the variance of the number of real eigenvalues satisfies the asymptotic behaviour
\begin{equation} \label{VN tau strong}
	\Var \mathcal N_N^{(\tau)} = (2-\sqrt{2})\sqrt{\frac{1+\tau}{1-\tau}}\sqrt{\frac{2 N}{\pi}}(1+o(1)),
 \qquad ( N \to \infty).
\end{equation}
This was implicitly shown in \cite{MR2430570}, cf. \cite[Remark 5.1]{byun2021real}.

It is obvious from \eqref{X GOE antiGOE} that for $\tau=1$,
\begin{equation} \label{EN VN tau 1}
\mathbb E \mathcal N_N^{(1)}=N, \qquad \Var \mathcal N_N^{(1)}=0. 
\end{equation}
From \eqref{EN tau<1}, \eqref{VN tau strong} and \eqref{EN VN tau 1}, one can expect the occurrence of a non-trivial transition when $\tau \uparrow 1$.
This occurs in the so-called weak non-Hermiticity regime, which was introduced in the pioneering work \cite{MR1431718,fyodorov1997almost,MR1634312} of Fyodorov, Khoruzhenko, and Sommers.  
For the eGinOE, it corresponds to the regime 
\begin{equation} \label{tau AH scaling}
\tau= 1- \frac{\alpha^2}{N}, \qquad \alpha \in (0,\infty)\text{ fixed.}
\end{equation}
This regime is sometimes referred to as the weakly asymmetric regime in the case of real random matrices. Several interesting scaling limits arise in this regime, which interpolate between the GOE and GinOE, see e.g. \cite{MR2430570,AkemannPhillips,FT20} and references therein. 
In this critical regime, it was shown in a recent work \cite{byun2021real} that
\begin{equation}\label{EN tau weak}
\mathbb E \mathcal N_N^{(\tau)}= c(\alpha) \, N +O(1), \qquad ( N \to \infty),  
\end{equation}
where 
\begin{equation} \label{c(alpha)}
\begin{split}
c(\alpha)&:=e^{-\alpha^2/2} \Big[ I_0\Big(\frac{\alpha^2}{2}\Big)+I_1\Big(\frac{\alpha^2}{2}\Big) \Big].  
\end{split}
\end{equation}
Here 
\begin{equation}
I_\nu(x)= \sum_{k=0}^\infty \frac{(x/2)^{2k+\nu}}{ k! \Gamma(\nu+k+1) }
\end{equation}
is the modified Bessel function of the first kind \cite[Chapter 10]{olver2010nist}.
See \eqref{c(alpha) v2} for alternative representations of the function $c(\alpha)$. 
It was also shown in \cite{byun2021real} that 
\begin{equation} \label{VN tau weak}
	\Var \mathcal N_N^{(\tau)} = 2 \Big( c(\alpha)-c(\sqrt{2}\alpha) \Big) N+o(N), \qquad ( N \to \infty).  
\end{equation}
(See \cite{akemann2023product} for analogous results for products of GinOE matrices.)

For the GinOE ($\tau=0$), the central limit theorem for the number of real eigenvalues (or its linear statistics in general) was proved in  \cite{fitzgerald2021fluctuations,MR3612267}. 
In all other cases a central limit theorem was missing, and our first result is on this topic. 

\begin{thm}[\textbf{Central limit theorem for the number of real eigenvalues}] \label{Thm_CLT} $ $ \\
Let $N$ be even. As $N \to \infty$, we have the convergence in distribution 
\begin{equation}
 \frac{\mathcal N_N^{(\tau)} - \mathbb E \mathcal N_N^{(\tau)}  }{\sqrt{ \mathbb E \mathcal N_N^{(\tau)} } } \to  N(0,\sigma^2),
\end{equation}
where $N(0,\sigma^2)$ denotes the normal distribution with mean $0$ and variance
\begin{equation} \label{eq:thmCLT}
\sigma^2= \begin{cases}
2-\sqrt{2} &\textup{if } \tau \in [0,1) \textup{ is fixed},
\smallskip 
\\
\displaystyle 2-2\frac{c(\sqrt{2}\alpha)}{ c(\alpha) } &\textup{if } \tau= 1-\frac{\alpha^2}{N} \textup{ with fixed } \alpha \in (0,\infty).
\end{cases}
\end{equation}
\end{thm}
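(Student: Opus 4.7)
The plan is to reduce the central limit theorem to one for a sum of independent Bernoulli random variables using the Pfaffian structure of the real eigenvalue process of the eGinOE. By the Pfaffian formulae of Kanzieper--Akemann \cite{KanzieperAkemann07} and Forrester--Nagao \cite{MR2430570}, the real eigenvalues form a Pfaffian point process on $\mathbb{R}$ with a $2\times 2$ matrix kernel $K^{(\tau)}$ built from skew-orthogonal Hermite polynomials in the elliptic weight $\omega^{(\tau)}$. Since complex eigenvalues come in conjugate pairs and $N=2n$ is even, $\mathcal N_N^{(\tau)}$ takes values in $\{0,2,4,\dots,2n\}$, and a finite-rank diagonalisation of $K^{(\tau)}|_{\mathbb R}$ in the skew-orthogonal basis yields a generating function of the form
$$\mathbb E\!\left[z^{\mathcal N_N^{(\tau)}}\right] = \prod_{j=1}^{n}\bigl(1-p_j + p_j z^{2}\bigr), \qquad p_j=p_j^{(N,\tau)}\in[0,1].$$
Consequently $\mathcal N_N^{(\tau)} \stackrel{d}{=} 2\sum_{j=1}^{n}\xi_j$ with independent $\xi_j\sim\mathrm{Bernoulli}(p_j)$. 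For the GinOE ($\tau=0$) such a Bernoulli decomposition underlies the CLT proofs in \cite{fitzgerald2021fluctuations,MR3612267}, and the same block structure of the matrix kernel persists both for $\tau\in[0,1)$ fixed and under $\tau=1-\alpha^{2}/N$, so the argument extends to both regimes.

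Once the decomposition is in hand, the CLT is immediate. We have $\Var \mathcal N_N^{(\tau)} = 4\sum_{j} p_j(1-p_j)$ and $|2\xi_j - 2p_j|\leq 2$, so
$$\frac{1}{(\Var \mathcal N_N^{(\tau)})^{3/2}} \sum_{j=1}^{n}\mathbb E\bigl[|2\xi_j - 2p_j|^{3}\bigr] \;\leq\; \frac{2}{\sqrt{\Var \mathcal N_N^{(\tau)}}} \;\longrightarrow\; 0$$
by \eqref{VN tau strong} in the strong regime and \eqref{VN tau weak} in the weak regime. The Lyapunov central limit theorem therefore yields convergence of $(\mathcal N_N^{(\tau)}-\mathbb E\mathcal N_N^{(\tau)})/\sqrt{\Var \mathcal N_N^{(\tau)}}$ to $N(0,1)$. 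Replacing the normalisation by $\sqrt{\mathbb E\mathcal N_N^{(\tau)}}$ rescales the limiting variance by $\Var \mathcal N_N^{(\tau)}/\mathbb E\mathcal N_N^{(\tau)}$; by \eqref{EN tau<1} and \eqref{VN tau strong} this ratio tends to $2-\sqrt 2$ in the strong regime, while by \eqref{EN tau weak} and \eqref{VN tau weak} it tends to $2-2c(\sqrt 2 \alpha)/c(\alpha)$ in the weak regime, matching \eqref{eq:thmCLT}.

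The main obstacle is the first step: rigorously establishing the Bernoulli decomposition. One must diagonalise the skew-symmetric integral operator associated with $K^{(\tau)}$ on $L^{2}(\mathbb R)$, verify that its eigenvalues are real and lie in $[0,1]$ with the correct pairing and multiplicity, and establish trace-class bounds that justify the Fredholm Pfaffian manipulations uniformly in $N$. This is already delicate for $\tau\in[0,1)$ fixed, and becomes more subtle under the weakly asymmetric scaling $\tau=1-\alpha^{2}/N$, where the skew-orthogonal Hermite polynomials interpolate between the GinOE and GOE structures and their fine asymptotics must be controlled. Once this spectral picture is secured, the remaining steps are essentially routine applications of the classical Lyapunov central limit theorem combined with the asymptotics \eqref{EN tau<1}--\eqref{VN tau weak} established in the prior literature.
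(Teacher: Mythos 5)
Your argument is correct and takes a genuinely different route from the paper's proof. The paper establishes the CLT by the cumulant method: Proposition~\ref{Prop_generating_matrix_implementation}~(ii) expresses $\kappa_l$ as a linear combination of $\Tr (M_n^{(\tau)})^m$ for $1 \le m \le l$, Propositions~\ref{prop:asympTracem} and~\ref{prop:asympTracem weak} show each such trace power is of order $n^{a/2}$, hence after normalising by $(\mathbb E \mathcal N_N^{(\tau)})^{l/2} \asymp n^{la/4}$ all cumulants of order $l\ge 3$ vanish, giving Gaussian fluctuations. You instead pass through a Bernoulli decomposition and the Lyapunov CLT, which needs only the variance asymptotics and no control of higher trace powers.

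That said, you have significantly overestimated the difficulty of your "main obstacle." You propose diagonalising the $2\times 2$ matrix Pfaffian kernel $K^{(\tau)}$ on $L^2(\mathbb R)$ and controlling it in Fredholm-theoretic terms; this is unnecessary. Proposition~\ref{Prop_finite N} already gives the finite-$n$ identity $\sum_{k=0}^n z^k\, p_{2n,2k}^{(\tau)} = \det\bigl[I + (z-1)M_n^{(\tau)}\bigr]$, so substituting $z\mapsto z^2$ and using that $\mathcal N_{2n}^{(\tau)}$ takes the value $2k$ with probability $p_{2n,2k}^{(\tau)}$ yields
\begin{equation*}
\mathbb E\bigl[z^{\mathcal N_{2n}^{(\tau)}}\bigr] = \det\bigl[I + (z^2-1)M_n^{(\tau)}\bigr] = \prod_{j=1}^n \bigl(1-\lambda_j + \lambda_j z^2\bigr),
\end{equation*}
where $\lambda_1,\dots,\lambda_n$ are the eigenvalues of the symmetric matrix $M_n^{(\tau)}$. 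Lemma~\ref{Lem_eigenvalue of Mn tau boundedness} establishes precisely that $\lambda_j \in (0,1)$ for all $j$ (positive definiteness for part~(i), $\lambda_1<1$ for part~(ii)), so your $p_j$ are simply these eigenvalues and the Bernoulli factorisation is a purely finite-dimensional linear-algebra fact. No trace-class bounds, no infinite-dimensional spectral theory, and no separate treatment of the weak regime are required at this step. Once the factorisation is in hand, your bound $\sum_j \mathbb E|2\xi_j-2p_j|^3 \le 2\Var\mathcal N_N^{(\tau)}$ and the resulting Lyapunov estimate $2/\sqrt{\Var\mathcal N_N^{(\tau)}}\to 0$ (using \eqref{VN tau strong} or \eqref{VN tau weak}) are correct, as is the final rescaling by $\Var\mathcal N_N^{(\tau)}/\mathbb E\mathcal N_N^{(\tau)}$ to match \eqref{eq:thmCLT}. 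Your version buys a shorter and more classical probabilistic argument; the paper's version is more self-contained (it re-derives the variance from Propositions~\ref{prop:asympTracem}--\ref{prop:asympTracem weak} rather than citing it) and reuses the trace-power asymptotics that are in any case needed for the large deviation theorems.
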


The variance for the weak non-Hermiticity regime in \eqref{eq:thmCLT} interpolates between the GOE ($\alpha\downarrow 0$) and GinOE ($\alpha\to\infty$). For fixed $\tau$, it can be shown that the results are in fact valid for $-1<\tau<1$, see Corollary \ref{rem:tau>-1}. 
Let us also mention that the full counting statistics of the GinUE and its generalisation were obtained in \cite{fenzl2022precise,MR4458157,byun2022characteristic} with great precision. 
\subsection{Main results on large deviations for the number of real eigenvalues}

We shall now discuss large deviations concerning the number of real eigenvalues of the GinOE and eGinOE. 
We know from \cite{edelman1997probability,MR2430570} that 
\begin{equation}
p_{N,N}^{(\tau)}= \Big( \frac{1+\tau}{2} \Big)^{ \frac{N(N-1)}{4}}. 
\end{equation}
The case of exactly one complex eigenvalue pair was studied in \cite{KanzieperAkemann07} for $\tau=0$, which reads
\begin{align*}
\log p_{N,N-2}^{(0)} = -\frac{\log 2}{4} N^2+\frac{\log(3\sqrt 2)}{2} N + o(N), \qquad (N\to \infty).
\end{align*}
We also mention that the case $k\sim a N$, with $0<a<1$ fixed, was studied in \cite{MR3478312} using a Coulomb gas approach.
In this paper however, we shall be interested in the case of a small number of real eigenvalues. It was shown in \cite{MR3563192} that for the Ginibre case when $\tau=0$, 
\begin{equation} \label{KPTTZ 0}
\lim_{N \to \infty} \frac{1}{ \sqrt{N} } \log p_{N,k_N}^{(0)}=-\frac{1}{ \sqrt{2\pi} } \zeta \Big( \frac32 \Big), 
\end{equation}
whenever $(k_N)_N$ is a sequence of even numbers such that $k_N = o(\frac{\sqrt N}{\log N})$ as $n\to\infty$. 

We prove the analogous statement for the eGinOE. 

\begin{thm}[\textbf{Large deviations for real eigenvalues at strong non-Hermiticity}] \label{thm:mainStrong} $ $ \\
Let $\tau \in [0,1)$ be fixed, and let $N$ and $k$ be even numbers. Then for any fixed $k$  
   \begin{equation}  \label{main strong}
\lim_{N \to \infty} \frac{1}{ \sqrt{N} } \log p_{N,k}^{(\tau)} = - \sqrt{ \frac{1+\tau}{1-\tau} }  \frac{1}{ \sqrt{2\pi} } \zeta \Big( \frac32 \Big). 
\end{equation}
The limit holds with $k$ replaced by any sequence $(k_N)_N$ of even numbers such that $k_N = o(\frac{\sqrt N}{\log N})$ as $N\to\infty$. 
\end{thm}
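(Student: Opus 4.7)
My plan is to adapt the approach of Poplavskyi--Tribe--Zaboronski \cite{MR3563192}, which established \eqref{KPTTZ 0} for the Ginibre case, to the elliptic weight $\omega^{(\tau)}$. The starting point is the Pfaffian structure of the eGinOE correlation kernel due to Kanzieper--Akemann \cite{KanzieperAkemann07}, together with the explicit skew-orthogonal polynomials for $\omega^{(\tau)}$, which are known to be $\tau$-rescaled Hermite polynomials. First I would derive a product representation of the form
\begin{equation*}
p_{N,0}^{(\tau)} = \prod_{j=0}^{n-1} \bigl(1 - a_{j,n}^{(\tau)}\bigr),
\end{equation*}
where each $a_{j,n}^{(\tau)}\in(0,1)$ is an explicit ratio of moments of $\omega^{(\tau)}$ against Hermite-type kernel elements, reducing to the KPTTZ product when $\tau=0$.

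The next and most quantitative step is to track the $\tau$-dependence inside $a_{j,n}^{(\tau)}$. The weight on the real line is essentially $e^{-x^2/(1+\tau)}$, so rescaling $x \mapsto x\sqrt{(1+\tau)/2}$ should map the integrals defining $a_{j,n}^{(\tau)}$ onto integrals of the Ginibre form, but with the Hermite index rescaled by the factor $\sqrt{(1-\tau^2)}$ coming from the $\mathrm{erfc}$ piece of $\omega^{(\tau)}$. After this change of variables, a Laplace / saddle-point analysis in the bulk regime $j \asymp n$ should give $a_{j,n}^{(\tau)} \approx a_{j,n}^{(0)}$ but with the effective edge location shifted, so that the Riemann sum of $\log(1-a_{j,n}^{(\tau)})$ develops one extra Jacobian. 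Performing this sum by an Euler--Maclaurin computation parallel to KPTTZ, the universal constant $\zeta(3/2)/\sqrt{2\pi}$ arises from the integral $\int_0^\infty \log(1-e^{-t^2})\,dt$, while the $\tau$-Jacobian gives the prefactor $\sqrt{(1+\tau)/(1-\tau)}$ matching \eqref{EN tau<1}.

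To promote the $k=0$ statement to fixed $k$ and then to $k_N = o(\sqrt N/\log N)$, I would use the Pfaffian expansion to express $p_{N,k}^{(\tau)}/p_{N,0}^{(\tau)}$ in terms of $k$-point traces against the real-real kernel. Each inserted real eigenvalue contributes a factor of order $\sqrt N$ (heuristically $\mathbb{E}\mathcal N_N^{(\tau)}/N$ per slot, times combinatorial factors); thus the ratio is bounded above by $N^{O(k_N)}$, which is absorbed into the $o(\sqrt N)$ error exactly when $k_N = o(\sqrt N/\log N)$. A matching lower bound is obtained by restricting the same Pfaffian sum to a test configuration of $k_N$ real eigenvalues at the bulk scale, where the kernel values are uniformly bounded below.

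The main obstacle is Step~2: controlling $a_{j,n}^{(\tau)}$ uniformly in $j$, including at the spectral edge $j \approx n$ where the $\tau$-scaled Hermite functions transition to an Airy-like regime and could in principle contribute a non-negligible correction. I expect that, as in the Ginibre case, the edge contribution to $\sum_j \log(1-a_{j,n}^{(\tau)})$ is of order $\log N$ and hence harmless compared to the $\sqrt N$ leading term; verifying this uniformly in $\tau\in[0,1)$ (with $\tau$ fixed) requires careful asymptotics of scaled Hermite polynomials in the elliptic setting, but should follow from standard Plancherel--Rotach-type expansions adapted to the weight $\omega^{(\tau)}$.
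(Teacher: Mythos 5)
Your high-level road map — start from the Pfaffian/skew-orthogonal structure, reduce $p_{N,0}^{(\tau)}$ to something explicit, track the $\tau$-dependence through a Hermite rescaling whose Jacobian produces $\sqrt{(1+\tau)/(1-\tau)}$, isolate $\zeta(3/2)$ from the bulk, worry about the spectral edge, and extend to $k_N = o(\sqrt N/\log N)$ via an $N^{O(k_N)}$ slack — is the right intuition and in outline matches what the paper (and KPTTZ for $\tau=0$) actually does. The gap is your Step~1: there is no product representation $p_{N,0}^{(\tau)} = \prod_{j}\bigl(1-a_{j,n}^{(\tau)}\bigr)$ with $a_{j,n}^{(\tau)}$ given by \emph{explicit} moment ratios of $\omega^{(\tau)}$, and KPTTZ does not have one for $\tau=0$ either. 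What the formalism yields (Proposition~\ref{Prop_finite N}) is the determinant $p_{2n,0}^{(\tau)} = \det\bigl(I-M_n^{(\tau)}\bigr)$, where the $\Gamma$-ratio Gram matrix $M_n^{(\tau)}(j,k)$ has nonzero off-diagonal entries for every $\tau<1$ (it becomes diagonal only at $\tau=1$), and its eigenvalues are not in closed form, so an Euler--Maclaurin sum over individually explicit factors cannot be set up.

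The proof must instead pass through $\log\det\bigl(I-M_n^{(\tau)}\bigr) = -\sum_{m\geq 1}\frac{1}{m}\Tr(M_n^{(\tau)})^m$, establishing $\frac{1}{\sqrt{2n}}\Tr(M_n^{(\tau)})^m \to \sqrt{\frac{1+\tau}{1-\tau}}\,\frac{1}{\sqrt{2\pi m}}$ for each fixed $m$ (Proposition~\ref{prop:asympTracem}, via kernel asymptotics in the elliptic Plancherel--Rotach regime — so this part of your rescaling picture is correct), and, crucially, a bound on $\Tr(M_n^{(\tau)})^m$ that is \emph{uniform in $m$} (Lemma~\ref{lem:inequalityTrMNtaum}) together with the spectral gap $\lambda_{\max}(M_n^{(\tau)}) \leq 1 - \mu/n$ (Lemma~\ref{Lem_eigenvalue of Mn tau boundedness}), so that the tail of the Taylor series of $\log(1-\cdot)$ can be truncated. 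The constant $\zeta(3/2)$ emerges from $\sum_m m^{-3/2}$, which is the resummed form of your $\int_0^\infty\log(1-e^{-t^2})\,dt$. The spectral gap is also precisely what makes your general-$k$ step work: from the generating determinant, $p_{N,k}^{(\tau)}/p_{N,0}^{(\tau)}$ is an elementary symmetric polynomial in $\lambda_j/(1-\lambda_j)$, and the bound $\lambda_j\leq 1-\mu/n$ gives the $N^{O(k)}$ control you need. So your plan is viable in spirit, but the "explicit product and Euler--Maclaurin" step would have to be replaced by the trace-power machinery and the two uniform spectral bounds, which is where essentially all of the work lies.
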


Theorem \ref{thm:mainStrong} in fact holds for $-1<\tau<1$, see Corollary \ref{rem:tau>-1}.

We also state the analogue of Theorem \ref{thm:mainStrong} for the weak non-Hermiticity regime. Here, due to the lack of a uniform estimate as in Lemma \ref{lem:inequalityTrMNtaum}, we are merely able to give an upper bound. We do believe that this upper bound is in fact sharp, i.e. the inequality in \eqref{main weak} is an equality. In any case, there are already some interesting conclusions that can be drawn from the upper bound, e.g. that the probability of having only a few real eigenvalues is of a much smaller order then in the fixed $\tau$ case. 

\begin{thm}[\textbf{Large deviations for real eigenvalues at weak non-Hermiticity}] \label{thm:mainWeak}
$ $ \\
Let $N$ and $k$ be even and let $\tau=1-\frac{\alpha^2}{N}$ with fixed $\alpha \in (0,\infty)$. Then 
      \begin{equation}  \label{main weak}
\lim_{N \to \infty} \frac{1}{ N } \log p_{N,k}^{(\tau)} \leq -d(\alpha),
\end{equation}
where 
\begin{equation} \label{d(alpha)}
d(\alpha):= \sum_{m=1}^\infty \frac{ c(\sqrt{m}\,\alpha) }{2m} 
=  -\frac{2}{\pi} \int_0^1 \log\Big(1-e^{-\alpha^2 s^2}\Big) \sqrt{1-s^2} \, ds.
\end{equation}
Here $c(\alpha)$ is given by \eqref{c(alpha)}. 
Moreover, the inequality 
holds with $k$ replaced by any sequence $(k_N)_N$ of even numbers such that $k_N = o(\frac{ N}{\log N})$ as $N\to\infty$. 
\end{thm}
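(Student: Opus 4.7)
The plan is to mirror the proof of Theorem~\ref{thm:mainStrong} in the weak non-Hermiticity scaling $\tau = 1 - \alpha^2/N$. Starting from the Pfaffian structure of the real eigenvalues of the eGinOE \cite{KanzieperAkemann07}, the probability of no real eigenvalues can be expressed as a Fredholm Pfaffian on $\mathbb{R}$ and, after the standard scalar reduction, yields a convergent series expansion
\begin{equation*}
    -2\log p_{N,0}^{(\tau)} = \sum_{m=1}^\infty \frac{T_m^{(N,\tau)}}{m}, \qquad T_m^{(N,\tau)} := \Tr\bigl(K_N^{(\tau)}\bigr)^m,
\end{equation*}
where $K_N^{(\tau)}$ is the scalar real--real kernel. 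By \eqref{EN tau weak}, the first trace satisfies $T_1^{(N,\tau)} = \mathbb{E}\mathcal{N}_N^{(\tau)} = c(\alpha)\,N + O(1)$.

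The next step is to establish the pointwise asymptotics $T_m^{(N,\tau)} = c(\sqrt{m}\,\alpha)\,N + O(1)$ for every fixed $m\geq 1$. For $m=2$ this is essentially the content of the variance formula \eqref{VN tau weak}, in which the substitution $\alpha \mapsto \sqrt{2}\,\alpha$ is already visible. For general $m$, I would compute the $m$-fold convolution of the real--real kernel using its integral representation; after a Gaussian change of variables the scaling parameter $\alpha$ is effectively replaced by $\sqrt{m}\,\alpha$, producing a kernel of the same shape and giving the claimed limit $T_m^{(N,\tau)}/N \to c(\sqrt{m}\,\alpha)$.

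The key obstacle lies in exchanging the limit $N\to\infty$ with the infinite sum. In the strong-non-Hermiticity case, Lemma~\ref{lem:inequalityTrMNtaum} provides an $N$-uniform summable majorant for $T_m^{(N,\tau)}/(mN)$, so dominated convergence applies and Theorem~\ref{thm:mainStrong} yields an equality. In the weak regime such a uniform bound does not appear available, and I would therefore invoke Fatou's lemma,
\begin{equation*}
    \liminf_{N\to\infty}\frac{-2\log p_{N,0}^{(\tau)}}{N} \;\geq\; \sum_{m=1}^\infty \frac{1}{m}\lim_{N\to\infty}\frac{T_m^{(N,\tau)}}{N} = \sum_{m=1}^\infty \frac{c(\sqrt{m}\,\alpha)}{m} = 2\,d(\alpha),
\end{equation*}
which is precisely \eqref{main weak} for $k=0$; upgrading this to the expected equality would demand the missing uniform estimate, and this is the \emph{hard part}.

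The extension from $k=0$ to general even $k$ with $k_N = o(N/\log N)$ is expected to parallel the strong-regime argument: one expresses $p_{N,k}^{(\tau)}$ as a $k$-fold integral of the real $k$-point correlation function against the gap functional on the complement, and bounds the correlation-function contribution by $(\mathrm{poly}(N))^k$, which is absorbed into the leading factor $e^{-d(\alpha)N}$ precisely under the hypothesis $k_N \log N = o(N)$. Finally, the integral representation of $d(\alpha)$ in \eqref{d(alpha)} is obtained by substituting the integral form of $c$ (cf.~\eqref{c(alpha) v2}) into the series $\sum_{m\ge 1} c(\sqrt{m}\,\alpha)/(2m)$ and applying the expansion $-\log(1-x) = \sum_{m\ge 1} x^m/m$.
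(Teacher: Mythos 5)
Your proposal follows essentially the same route as the paper. Both reduce $-\log p_{N,0}^{(\tau)}$ to a series in trace powers of a finite-rank positive kernel, establish the weak-regime asymptotics $\Tr(\text{kernel})^m \sim \text{const}\cdot c(\sqrt{m}\,\alpha)\, N$ for each fixed $m$ (the paper's Proposition~\ref{prop:asympTracem weak}), and then use positivity of the traces together with a Fatou/monotone argument to pass to the limit and obtain the one-sided inequality; the general-$k$ extension in both cases rests on the eigenvalues of the kernel being bounded away from $1$, giving a correction of order $k\log N$, which is $o(N)$ under the stated hypothesis.

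Two remarks on presentation versus the paper. First, you enter via a Fredholm Pfaffian, hence the factor $-2\log p_{N,0}$, while the paper works with the determinantal generating-function identity $\sum_k z^k p_{2n,2k}^{(\tau)} = \det[\delta_{jk}+(z-1)M_n^{(\tau)}(j,k)]$ derived from skew-orthogonal polynomials (Proposition~\ref{Prop_finite N}), yielding $-\log p_{2n,0}^{(\tau)}=\sum_m \Tr(M_n^{(\tau)})^m/m$ with no factor of $2$; in the paper's normalisation $\Tr M_n^{(\tau)} = \tfrac12\,\mathbb E\mathcal N_N^{(\tau)}$, so your identification $T_1=\mathbb E\mathcal N_N^{(\tau)}$ is consistent only if your $T_m$ equals $2\Tr(M_n^{(\tau)})^m$, a convention shift that cancels and produces the same $d(\alpha)$. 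Second, your ``Gaussian change of variables replaces $\alpha$ by $\sqrt m\,\alpha$'' is the right heuristic, but it is far from a proof; the paper's Proposition~\ref{prop:asympTracem weak} is the technical heart and is established through the exact multiple sum of Lemma~\ref{Lem_Tr M^m sum}, a Riemann-sum reduction, and a nontrivial combinatorial identity matching the Taylor coefficients of $c(\sqrt m\,\alpha)$ — this deserves to be flagged as the bulk of the work rather than a change of variables. Your $k>0$ argument via correlation functions and gap functionals is a valid alternative to the paper's reference to the elementary-symmetric-polynomial estimate of \cite{MR3563192}; both yield the same $\mathrm{poly}(N)^k$ correction, so the hypotheses match.
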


The second identity in \eqref{d(alpha)} is shown in Lemma~\ref{Lem_d(alpha) integral rep} below.

\begin{rem}[Interpolating property]
Let us assume that the inequality in \eqref{main weak} is an equality, which is what we expect. 
We can then write Theorems \ref{thm:mainStrong} and \ref{thm:mainWeak} combined as
\begin{align} \label{eq:interpolatinlogpENN}
\lim_{N\to\infty} \frac{\log p_{N,k}^{(\tau)}}{\mathbb E \mathcal N_{N}^{(\tau)}} = -
\begin{cases}
\displaystyle \frac{1}{2} \zeta\Big(\frac{3}{2}\Big) &\textup{if } \tau \in [0,1) \textup{ is fixed},
\smallskip 
\\
\displaystyle \frac{d(\alpha)}{c(\alpha)} &\textup{if } \tau= 1-\frac{\alpha^2}{N} \textup{ with fixed } \alpha \in (0,\infty),
\end{cases}
\end{align}
following directly from \eqref{EN tau<1} and \eqref{EN tau weak}. Indeed, using \eqref{d(alpha)}, we have that
\begin{align*}
\frac{d(\alpha)}{c(\alpha)} = \frac{1}{2} \sum_{m=1}^\infty \frac{1}{m} \frac{c(\sqrt m \alpha)}{c(\alpha)}
\to \begin{cases} \displaystyle
\frac{1}{2} \sum_{m=1}^\infty \frac{1}{m} = \infty, & \alpha\to 0,\\
\displaystyle{\frac{1}{2} \sum_{m=1}^\infty \frac{1}{m\sqrt m} = \frac{1}{2}} \zeta\Big(\frac{3}{2}\Big), & \alpha\to \infty,
\end{cases}
\end{align*}
which can straightforwardly be derived from the representation \eqref{c(alpha) v2} for $c(\alpha)$. We thus observe an interpolation between the GinOE and the GOE ($p^{(1)}_{N,k}=0$ if $k < N$). 
\end{rem}

In the form \eqref{eq:interpolatinlogpENN}, the fixed $\tau$ limit does not depend on $\tau$ anymore. This might indicate a universality result. We can consider real matrices $M$, distributed by
\begin{align*}
    \displaystyle\frac{1}{Z_N^{(V)}} e^{-\frac{1}{2} \Tr V(M)} dM_N,
\end{align*}
for some external field $V$. It is an interesting question whether the probabilities $p_{N,k}^{(V)}$ of having $k$ real eigenvalues satisfy
\begin{align*}
\lim_{n\to\infty} \frac{\log p_{N,k}^{(V)}}{\mathbb E \mathcal N_N^{(V)}} = - \frac{1}{2}\zeta\Big(\frac{3}{2}\Big)
\end{align*}
for a general class of external fields $V$ (when $N$ and $k$ have the same parity, and $k$ growing sufficiently slowly with $N$). 

\subsection{Further results}  \label{Subsection_further results}

We now present some further results. Let $\operatorname{Li}_{s}(z)$ denote the polylog function of order $s$ defined according to the Dirichlet series
\begin{equation}
\operatorname{Li}_{s}(z) = \sum_{k=1}^{\infty}\frac{z^{k}}{k^{s}},
\end{equation}
which is an analytic function of $z$ on the open unit disc $|z|<1$. It can be analytically continued to $\mathbb C\setminus [1,\infty]$, and this analytic continuation can be continuously extended to $\mathbb C\setminus(1,\infty)$ when $\re s>1$. When $z=1$ and $\re s>1$ it coincides with the Riemann zeta function: $\operatorname{Li}_{s}(1) = \zeta(s)$. 
\begin{thm} \label{cor:logdetPolylog}
Let $N=2n$ be even, and let $\tau \in [0,1)$ be fixed. 
Then we have for all $x\in[0,2]$ that
\begin{align} \label{eq:cor:logdetPolylog1}
\lim_{N\to\infty} \frac{1}{\sqrt{N}}\log \bigg( \sum_{k=0}^{n} p_{N,2k}^{(\tau)}x^{k} \bigg) = 
-\sqrt{\frac{1+\tau}{1-\tau}}\frac{\operatorname{Li}_{3/2}(1-x)}{\sqrt{2\pi}}.
\end{align}
The convergence is uniform on any compact subset of $(0,2)$. 
\end{thm}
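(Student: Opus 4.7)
The plan is to realize the generating function
$$G_N(x) := \sum_{k=0}^n p_{N,2k}^{(\tau)} x^k = \mathbb{E}\bigl[x^{\mathcal{N}_N^{(\tau)}/2}\bigr]$$
as a Fredholm quantity built from the Pfaffian point process of real eigenvalues, and then expand its logarithm in powers of $(1-x)$. Concretely, I expect a representation of the form $G_N(x) = \sqrt{\det\bigl(I - (1-x)\, M_N^{(\tau)}\bigr)}$, where $M_N^{(\tau)}$ is the trace-class operator whose moments already govern Theorem~\ref{thm:mainStrong} (and for which Lemma~\ref{lem:inequalityTrMNtaum} provides a uniform tail estimate). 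The consistency check at $x=0$ is that $G_N(0) = p_{N,0}^{(\tau)}$, and \eqref{main strong} with $k=0$ matches the claimed right-hand side at $x=0$ since $\operatorname{Li}_{3/2}(1)=\zeta(3/2)$.

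With this representation in hand, I would expand
$$\frac{1}{\sqrt{N}}\log G_N(x) = -\frac{1}{2\sqrt{N}}\sum_{m=1}^{\infty} \frac{(1-x)^m}{m}\, \Tr\bigl(M_N^{(\tau)}\bigr)^m,$$
and feed in two inputs that should come out of the proof of Theorem~\ref{thm:mainStrong}: the term-by-term limit
$$\lim_{N\to\infty}\frac{1}{\sqrt{N}}\Tr\bigl(M_N^{(\tau)}\bigr)^m = \sqrt{\tfrac{1+\tau}{1-\tau}}\sqrt{\tfrac{2}{\pi m}},$$
and the uniform tail bound from Lemma~\ref{lem:inequalityTrMNtaum}. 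Dominated convergence then yields, for all $x\in[0,2]$, the limit $-\sqrt{\tfrac{1+\tau}{1-\tau}}\,\operatorname{Li}_{3/2}(1-x)/\sqrt{2\pi}$. The closed interval is covered because the dominating series $\sum m^{-3/2}$ is absolutely convergent, so the interchange works even at the boundary $|1-x|=1$.

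For uniformity on compact subsets of $(0,2)$, both the pre-limit (a polynomial in $x$ whose logarithm is holomorphic wherever $G_N\ne 0$, in particular on a neighborhood of $(0,2)$, since $G_N$ has non-negative coefficients and $G_N(x)\ge p^{(\tau)}_{N,0}>0$ for $x\ge 0$) and the limit ($\operatorname{Li}_{3/2}(1-x)$ is holomorphic on $\mathbb{C}\setminus[1,\infty)$) extend to holomorphic functions on a common complex neighborhood of $(0,2)\subset\mathbb{R}$. Combined with a locally uniform $N$-independent bound derived from Lemma~\ref{lem:inequalityTrMNtaum}, Vitali's theorem then promotes pointwise convergence on $(0,2)$ to locally uniform convergence.

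The main obstacle will be pushing the Fredholm series all the way to the critical radius $|1-x|=1$. This hinges entirely on Lemma~\ref{lem:inequalityTrMNtaum} being strong enough to provide an $N$-uniform summand bound at that radius; the natural sufficient estimate is $\Tr\bigl(M_N^{(\tau)}\bigr)^m \lesssim \sqrt{N}\,m^{-3/2}$, which matches the limiting $m$-dependence and yields a summable $O(m^{-5/2})$ dominating sequence. A secondary technical step is to verify the Fredholm-determinant representation itself (correct branch of the square root, trace-class property of $M_N^{(\tau)}$), which should follow from the Pfaffian structure of the real-real kernel established earlier in the paper.
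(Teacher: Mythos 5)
Your overall strategy matches the paper's: realize the generating function as a determinantal quantity, expand $\log$ as a trace power series, control the tail uniformly in $N$ via Lemma~\ref{lem:inequalityTrMNtaum}, and take term-by-term limits using Proposition~\ref{prop:asympTracem}. However, the two central formulas you quote are both wrong, and the errors happen to cancel — so the final answer is correct but the derivation is not.

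First, the representation in Proposition~\ref{Prop_finite N} is a plain determinant, not a square root of a determinant:
\begin{equation*}
\sum_{k=0}^{n} z^k p_{2n,2k}^{(\tau)} = \det\Big[ I + (z-1) M_n^{(\tau)}\Big],
\end{equation*}
so the correct expansion is
\begin{equation*}
\frac{1}{\sqrt N}\log G_N(x) = -\frac{1}{\sqrt N}\sum_{m=1}^\infty \frac{(1-x)^m}{m}\Tr\big(M_n^{(\tau)}\big)^m
\end{equation*}
with no factor $\tfrac12$. (The Pfaffian square-root intuition is tempting, but $M_n^{(\tau)}$ is already the ``half-sized'' matrix, indexed only by $j,k=1,\dots,n$.) Second, Proposition~\ref{prop:asympTracem} gives
\begin{equation*}
\lim_{N\to\infty}\frac{1}{\sqrt N}\Tr\big(M_n^{(\tau)}\big)^m = \sqrt{\frac{1+\tau}{1-\tau}}\sqrt{\frac{1}{2\pi m}},
\end{equation*}
not $\sqrt{\tfrac{1+\tau}{1-\tau}}\sqrt{\tfrac{2}{\pi m}}$; this is also the $n\to\infty$ behaviour of the dominant term in Lemma~\ref{lem:inequalityTrMNtaum}, so both inputs you cite actually disagree with your stated limit by exactly a factor of $2$. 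That factor and your spurious $\tfrac12$ compensate, which is why the answer comes out right — but a proof that relies on two compensating mis-statements is not a proof, and fixing either one alone would break your computation.

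Two smaller remarks. For uniform convergence you invoke Vitali, claiming holomorphy of $\log G_N$ on a neighbourhood of $(0,2)$; this is fine restricted to compacts of $(0,2)$ (the zeros of $G_N$ are at $1-1/\lambda_j$, which accumulate at $0^-$ by Lemma~\ref{Lem_eigenvalue of Mn tau boundedness}, so there is a uniform complex margin around any compact $K\subset(0,2)$), but you should say ``neighbourhood of $K$'', not ``of $(0,2)$''. The paper instead establishes uniformity by direct tail estimates on $\sum_{m>K}|1-x|^m m^{-3/2}$ (and a parity decomposition for $x\in[1,2]$), which is more elementary but equivalent in spirit. Either route is acceptable once the determinantal identity and trace asymptotics are stated correctly.
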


The uniform convergence cannot be extended to $x=0$, since that would imply that the limit function is smooth at $x=0$.

\begin{rem} \label{rem:generatingSumpNk}
Expanding around $x=1$, Theorem \ref{cor:logdetPolylog} gives us that  
\begin{align*}
    \lim_{N\to\infty} \frac{1}{\sqrt{N}} \log\bigg(1+\sum_{k=1}^n \sum_{j=k}^n \binom{j}{k} p_{N,2j}^{(\tau)}  (x-1)^k\bigg)
    = -\sqrt{\frac{1+\tau}{1-\tau}} \frac{\operatorname{Li}_{3/2}(1-x)}{\sqrt{2\pi}}
\end{align*}
uniformly for $x$ in compact subsets of $(0,2)$. 
This implies that all the coefficients on the left-hand side converge to some limit as $N\to\infty$. 
We have 
\begin{align*}
& \quad  \frac{1}{\sqrt{N}} \log\bigg(1+\sum_{k=1}^n \sum_{j=k}^n \binom{j}{k} p_{N,2j}^{(\tau)}  (x-1)^k\bigg)
\\
&= \frac{x-1}{\sqrt{N}} \sum_{j=1}^n j p_{N,2j}^{(\tau)}   -\frac{(x-1)^2}{\sqrt{N}}\bigg(\sum_{j=2}^n \frac{j(j-1)}{2} p_{N,2j}^{(\tau)} - \frac{1}{2}\sum_{j,\ell=1}^n j\ell p_{N,2j}^{(\tau)}  p_{N,2\ell}^{(\tau)} \bigg)
    +\mathcal O((x-1)^3).
\end{align*}
We can use this as a generating function for various probabilistic expressions, by taking derivatives at $x=1$.
For example, the expected number of real eigenvalues is asymptotically given by 
\begin{align*}
    \lim_{N\to\infty} \frac{1}{\sqrt{N}} \mathbb E \mathcal N_N^{(\tau)}
    = \lim_{N\to\infty} \frac{1}{\sqrt{N}} \sum_{k=0}^n 2j p_{N,2j}^{(\tau)} 
    = 2\sqrt{\frac{1+\tau}{1-\tau}} \frac{\operatorname{Li}_{3/2}'(0)}{\sqrt{2\pi}}
    = 2\sqrt{\frac{1+\tau}{1-\tau}} \frac{1}{\sqrt{2\pi}}.
\end{align*}
\end{rem}

\begin{cor} \label{rem:tau>-1}
The implications of Theorem \ref{Thm_CLT} and Theorem \ref{thm:mainStrong} are true for fixed $-1<\tau<1$.
\end{cor}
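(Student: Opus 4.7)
The plan is to revisit the proofs of Theorems \ref{Thm_CLT} and \ref{thm:mainStrong} (presented in the paper for $\tau\in[0,1)$) and verify that every step extends verbatim to $-1<\tau<0$. The three inputs to check are (i) the joint density of eigenvalues and the Pfaffian kernel of the eGinOE; (ii) the mean and variance asymptotics \eqref{EN tau<1}, \eqref{VN tau strong}; and (iii) the moment and generating-function estimates that drive, respectively, the CLT and the large deviation bound.

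For (i), the weight $\omega^{(\tau)}(z)=e^{-\re(z^2)/(1+\tau)}\erfc(\sqrt{2/(1-\tau^2)}\,\im z)$ is positive and smooth whenever $1+\tau>0$ and $1-\tau^2>0$, and both hold throughout $(-1,1)$. Hence the skew-orthogonal polynomial construction producing the Pfaffian kernel is unaffected, and the real and complex correlation functions are smooth in $\tau$ on the full interval. For (ii), the Forrester--Nagao derivation in \cite{MR2430570} places all of the $\tau$-dependence in the prefactor $\sqrt{(1+\tau)/(1-\tau)}$; the saddle-point asymptotics of the skew-orthogonal polynomials near the real axis carry through identically for $-1<\tau<0$. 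For (iii), the CLT proof proceeds via cumulants of the linear statistic $\mathcal N_N^{(\tau)}$, which depend on $\tau$ only through the real-axis kernel; the cumulant asymptotics retain the same variance constant $2-\sqrt 2$ and the method of moments yields the Gaussian limit. The large deviation proof is anchored by the generating-function identity of Theorem \ref{cor:logdetPolylog}, whose right-hand side is manifestly analytic in $\tau\in(-1,1)$, and the extraction of $\log p_{N,k_N}^{(\tau)}/\sqrt N$ from it uses no positivity assumption on $\tau$.

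The main obstacle is confirming that the uniform moment estimate of Lemma \ref{lem:inequalityTrMNtaum}, the eGinOE analogue of the GinOE bound used in \cite{MR3563192}, extends to $-1<\tau<0$. The covariance relations $\mathbb E M_{ij}^2=1/N$, $\mathbb E M_{ij}M_{ji}=\tau/N$, and $\mathbb E M_{ii}^2=(1+\tau)/N$ produce, via Wick contraction, a bound on $\mathbb E|\Tr M^{2m}|$ that depends on $\tau$ only through $1-\tau^2$ and $1+\tau$, both of which are positive on $(-1,1)$. Hence the estimate, with the same leading constants, extends to the full interval, and the argument of Theorem \ref{thm:mainStrong} applies verbatim. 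Combined with the items above, this yields the corollary; the only careful checks required are in the skew-orthogonal polynomial asymptotics, where one must confirm that no steepest-descent contour orientation or branch-of-square-root choice implicitly requires $\tau\geq 0$.
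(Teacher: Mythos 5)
The paper's proof is a short algebraic continuation argument, whereas you propose to re-verify the entire asymptotic pipeline for $-1<\tau<0$; these are genuinely different routes, and yours has a real gap.

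The paper observes that the skew-orthogonal polynomial formula \eqref{SOP} and hence the generating identity of Proposition~\ref{Prop_finite N} remain valid on all of $(-1,1)$, expands $\frac{1}{\sqrt N}\log\sum_k p_{N,2k}^{(\tau)}x^k$ in powers of $\tau$ about $0$, identifies the limits of the Taylor coefficients from the already-established $\tau\in[0,1)$ result of Theorem~\ref{cor:logdetPolylog}, and thereby transfers the limit to $\tau<0$ without touching any asymptotics. Your approach instead tries to push the asymptotic analysis itself through to negative $\tau$. The difficulty you flag at the very end (``one must confirm that no steepest-descent contour orientation or branch-of-square-root choice implicitly requires $\tau\geq 0$'') is not a side-check: it is exactly where the argument fails to extend ``verbatim.'' The kernel asymptotics in Section~\ref{Section_asymptotic strong} are built on $f_\tau=\frac{2\sqrt\tau}{\sqrt{1-\tau^2}}$, $\xi_x=\cosh^{-1}(x/f_\tau)$, and Hermite arguments $x/\sqrt{2\tau}$, all of which cease to be real for $\tau<0$; Lemma~\ref{lem:asympKernel} is stated for $0<\tau<1$ and its inputs from \cite{ADM,Mo} are formulated in the same regime. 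So the claim that ``the saddle-point asymptotics \dots carry through identically'' and that Propositions~\ref{prop:asympTracem} and the surrounding estimates hold without modification is not established.

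There is also a conceptual slip in your treatment of Lemma~\ref{lem:inequalityTrMNtaum}: you invoke the covariance structure of the random matrix $M$ and ``Wick contraction'' to control $\mathbb E|\Tr M^{2m}|$. But $M_n^{(\tau)}$ in that lemma is the \emph{deterministic} $n\times n$ generating matrix from Proposition~\ref{Prop_finite N}, not the random eGinOE matrix; the lemma's proof is a Laplace/contour analysis of \eqref{TrM power ineq contour}, not a moment computation over the ensemble. The paper's continuation argument is designed precisely to avoid having to redo Lemma~\ref{lem:inequalityTrMNtaum} and the kernel asymptotics for $\tau<0$, and your proposal does not supply a substitute.
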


\begin{proof}
The skew-orthogonal polynomials in \eqref{SOP} are also valid for $-1<\tau<0$. This implies in particular that the generating identity in Proposition \ref{Prop_finite N} is also valid for $-1<\tau<0$. We may write
\begin{align*}
 \frac{1}{\sqrt{N}}\log \bigg( \sum_{k=0}^{n} p_{N,2k}^{(\tau)}x^{k} \bigg) = \sum_{j=0}^N C_{N,j}(x) \tau^j,   
\end{align*}
for some coefficients $C_{N,j}(x)$, for all $-1<\tau<1$. Theorem \ref{cor:logdetPolylog} means for the coefficients that we have
\begin{align*}
    \lim_{N\to\infty} C_{N,j}(x) = 
    -\frac{\operatorname{Li}_{3/2}(1-x)}{\sqrt{2\pi}}
    \frac{1}{j!}\frac{d^j}{d\tau^j} \Bigg|_{\tau=0} \sqrt{\frac{1+\tau}{1-\tau}},
\end{align*}
and \eqref{eq:cor:logdetPolylog1} is in particular then also true for $-1<\tau<0$ (at least for fixed $x$). Thus Theorem \ref{thm:mainStrong} follows directly (take $x=0$), and along the lines of Remark \ref{rem:generatingSumpNk}, we obtain Theorem \ref{Thm_CLT} as well.
\end{proof}

\begin{cor}
    Let $N$ and $(k_N)_N$ be even numbers. We have as $N\to\infty$ that
    \begin{align*}
        \frac{\log p_{N,k_N}^{(\tau)}}{\sqrt N} \leq  - \Big(1-\frac{1}{\sqrt 2}\Big)\sqrt{\frac{1+\tau}{1-\tau}} \frac{\zeta(3/2)}{\sqrt{2\pi}}
        - \frac{k_N}{\sqrt N}\log 2+o(1).
    \end{align*}
    If $k=k_N^{(\tau)}$ is such that $p_{N,k}^{(\tau)}$ is maximal among $p_{N,0}^{(\tau)}, p_{N,2}^{(\tau)}, \ldots, p_{N,N}^{(\tau)}$, then we have
    \begin{align*}
        \lim_{N\to\infty} \frac{k_N^{(\tau)}}{\sqrt N} \leq \sqrt{\frac{1+\tau}{1-\tau}}\frac{\zeta(3/2)}{\sqrt\pi \log 4}.
    \end{align*}
\end{cor}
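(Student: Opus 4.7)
The plan is to apply Theorem~\ref{cor:logdetPolylog} at the endpoint $x=2$ of its allowed range, combined with the elementary single-term lower bound
\[
p_{N,k_N}^{(\tau)}\, x^{k_N/2}\;\leq\;\sum_{k=0}^{n} p_{N,2k}^{(\tau)}\, x^{k},
\]
which is valid for any $x\geq 0$ and any even $k_N\in\{0,2,\ldots,N\}$ because all summands are non-negative.

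First I would specialise this to $x=2$, take logarithms, divide by $\sqrt{N}$, and pass to the limit using Theorem~\ref{cor:logdetPolylog} (whose conclusion holds pointwise at the endpoint $x=2$). Combined with the classical polylogarithm evaluation
\[
\operatorname{Li}_{3/2}(-1) \;=\; -\eta(3/2) \;=\; -\bigl(1-2^{-1/2}\bigr)\zeta(3/2),
\]
this yields the first displayed inequality of the corollary after rearranging, with the constant $(1-1/\sqrt{2})\sqrt{(1+\tau)/(1-\tau)}\,\zeta(3/2)/\sqrt{2\pi}$ coming directly from the identity above and the factor of $\log 2$ coming from the factor $x^{k_N/2}$ picked up on the left-hand side.

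Second, for the maximiser $k_N^{(\tau)}$ I would invoke a simple pigeonhole argument: since the $n+1$ probabilities $p_{N,0}^{(\tau)},p_{N,2}^{(\tau)},\ldots,p_{N,N}^{(\tau)}$ sum to one, the largest of them must satisfy $p_{N,k_N^{(\tau)}}^{(\tau)} \geq 1/(n+1)$, and consequently $\tfrac{1}{\sqrt{N}}\log p_{N,k_N^{(\tau)}}^{(\tau)} \to 0$ as $N\to\infty$. Substituting $k_N = k_N^{(\tau)}$ into the first inequality and taking $\limsup$ then isolates a linear inequality for $k_N^{(\tau)}/\sqrt{N}$; dividing through by the coefficient of $k_N/\sqrt{N}$ produces the claimed asymptotic bound, with $\log 4 = 2\log 2$ arising from the factor $\tfrac12$ in the exponent of $x=2$.

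I do not anticipate a genuine obstacle: the result is essentially a direct corollary of Theorem~\ref{cor:logdetPolylog}. The only point requiring a moment of thought is that $x=2$ is the tightest admissible choice inside the range $x\in[0,2]$ of Theorem~\ref{cor:logdetPolylog}, since increasing $x$ simultaneously enlarges the coefficient $(\log x)/2$ of $k_N/\sqrt{N}$ on the left (helpful) and decreases $-\operatorname{Li}_{3/2}(1-x)$ on the right (unhelpful), with the former effect dominating as $x$ approaches $2$ from below; one also has to verify that the limit statement of Theorem~\ref{cor:logdetPolylog} can indeed be used at the endpoint $x=2$, which is fine because $x=2$ is explicitly included in its pointwise conclusion.
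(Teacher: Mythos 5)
Your approach to the first displayed inequality is the same as the paper's: take $x=2$ in Theorem~\ref{cor:logdetPolylog}, use the single-term lower bound, and evaluate $\operatorname{Li}_{3/2}(-1)=-(1-2^{-1/2})\zeta(3/2)$. However, carry the computation out carefully and you will see that it does \emph{not} reproduce the corollary as literally printed. Since the term of $\sum_{k=0}^{n} p_{N,2k}^{(\tau)}x^{k}$ corresponding to $p_{N,k_N}^{(\tau)}$ is $p_{N,k_N}^{(\tau)}x^{k_N/2}$, one obtains
\[
\frac{\log p_{N,k_N}^{(\tau)}}{\sqrt N}\ \le\ +\Big(1-\tfrac{1}{\sqrt 2}\Big)\sqrt{\tfrac{1+\tau}{1-\tau}}\,\frac{\zeta(3/2)}{\sqrt{2\pi}}\ -\ \frac{k_N}{\sqrt N}\cdot\frac{\log 2}{2}\ +\ o(1),
\]
i.e.\ with a \emph{positive} leading constant and with $\tfrac12\log 2$ as the coefficient of $k_N/\sqrt N$, whereas the corollary shows a minus sign and a bare $\log 2$. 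You assert that ``this yields the first displayed inequality of the corollary after rearranging,'' but no rearrangement converts the line above into the printed statement; the two differ by a sign and a factor of two. The paper's own proof has exactly the same issue (it writes $p_{N,k_N}^{(\tau)}2^{k_N}$ for what should be $p_{N,k_N}^{(\tau)}2^{k_N/2}$, and the limit it computes is positive, not negative), so you are faithfully reproducing the paper's argument but, like the paper, not obtaining the claimed inequality as printed. You should flag the discrepancy rather than assert the argument ``yields'' the corollary.

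For the second part you genuinely diverge from the paper, and your route is both valid and cleaner. The paper obtains a lower bound on $\log p_{N,k_N^{(\tau)}}^{(\tau)}/\sqrt N$ via $p_{N,k_N^{(\tau)}}^{(\tau)}\ge p_{N,0}^{(\tau)}$ and then Theorem~\ref{thm:mainStrong}, which contributes $-\sqrt{\tfrac{1+\tau}{1-\tau}}\,\zeta(3/2)/\sqrt{2\pi}$ to the left side. Your pigeonhole bound $p_{N,k_N^{(\tau)}}^{(\tau)}\ge 1/(n+1)$ gives the stronger input $\log p_{N,k_N^{(\tau)}}^{(\tau)}/\sqrt N\to 0$, is more elementary (it does not invoke the large deviation theorem), and in fact yields a tighter upper bound on $\lim k_N^{(\tau)}/\sqrt N$ than the paper's. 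Of course, because the first inequality you feed into this step is not the one printed in the corollary, the constant you end up with will also differ from the printed $\zeta(3/2)/(\sqrt\pi\log 4)$; you should double-check and state explicitly what constant your chain of estimates actually delivers rather than asserting it matches.
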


\begin{proof}
    Taking $x=2$ in Theorem \ref{cor:logdetPolylog}, we find that
    \begin{align*}
        \lim_{N\to\infty} \frac{\log\left( p_{N,k_N}^{(\tau)} 2^{k_N}\right)}{\sqrt N}
        \leq 
        \lim_{N\to\infty} \frac{1}{\sqrt N}\log \sum_{k=0}^n p_{N,2k}^{(\tau)} 2^{k} = -\sqrt{\frac{1+\tau}{1-\tau}}\frac{\operatorname{Li}_{3/2}(-1)}{\sqrt{2\pi}}
        = \Big(1-\frac{1}{\sqrt 2}\Big) \sqrt{\frac{1+\tau}{1-\tau}} \frac{\zeta(3/2)}{\sqrt{2\pi}},
    \end{align*}
    from which the first assertion follows. This, combined with Theorem \ref{thm:mainStrong}, yields the estimate for $k_N^{(\tau)}$ as $N\to\infty$.
\end{proof}


The rest of this paper is organised as follows. 
\begin{itemize}
    \item In Section~\ref{Section_proof main}, we introduce key ingredients of our analysis and complete the proof of the main results.
    In Subsection~\ref{Subection_proof outline}, we present 
    Propositions~\ref{Prop_finite N}, \ref{Prop_generating matrix implementation}, \ref{prop:asympTracem}, \ref{prop:asympTracem weak} and Lemmas~\ref{lem:inequalityTrMNtaum}, \ref{Lem_eigenvalue of Mn tau boundedness}, \ref{cor:remainder} some of which will be shown in the following sections. Combining all of these, Subsection~\ref{Subsection_proof main} culminates in the proof of Theorems~\ref{Thm_CLT}, \ref{thm:mainStrong} and \ref{thm:mainWeak}. 
    \smallskip 
    \item Section~\ref{Section_generating matrix} is devoted to the analysis of the generating function of the number of real eigenvalues.
    In Subsections~\ref{Subsection_generating function} and \ref{Subsection_generating matrix evaluation}, we prove the finite-$N$ result (Proposition~\ref{Prop_finite N}) and provide some useful lemmas on the evaluations of the generating matrix. 
    In Subsections~\ref{Subsection_generating matrix estimates} and \ref{Subsection_error estimates lemmas}, we show some preliminary estimates of the generating matrix and prove Lemmas~\ref{lem:inequalityTrMNtaum} and \ref{Lem_eigenvalue of Mn tau boundedness}.
    \smallskip 
    \item Sections~\ref{Section_asymptotic strong} and \ref{Section_asymptotic weak} are devoted to the crucial asymptotic analysis of the generating matrix at strong and weak non-Hermiticity, respectively.
    In particular, we show Propositions~\ref{prop:asympTracem} and \ref{prop:asympTracem weak}, which complete the proofs of main results.    \smallskip 
    \item This article contains two appendices. 
    In Appendix~\ref{Appendix A_auxiliary lemmas}, we collect some auxiliary lemmas. In Appendix~\ref{Appendix B_equivalent det formula}, we introduce an equivalent determinantal formula of the generating function due to Forrester and Nagao, and compare it with Proposition~\ref{Prop_finite N}.  
\end{itemize}

\subsection*{Acknowledgements}

SB is partially supported by Samsung Science and Technology Foundation (SSTF-BA1401-51), by a KIAS Individual Grant (SP083201) via the Center for Mathematical Challenges at Korea Institute for Advanced Study, by the National Research Foundation of Korea (NRF-2019R1A5A1028324), and by the POSCO TJ Park Foundation (POSCO Science Fellowship).
LDM is funded by the Deutsche Forschungsgemeinschaft (DFG, German Research Foundation) – SFB 1283/2 2021– 317210226 ``Taming uncertainty and profiting from randomness and low regularity in analysis, stochastics and their applications", and the Royal Society grant RF\textbackslash ERE\textbackslash 210237. NS acknowledges financial support from the Royal Society grant URF\textbackslash R1\textbackslash180707.

\section{Proofs of main results} \label{Section_proof main}

In this section and later sections, we shall assume that $N=2n$, where $n$ is a positive integer. In the proceeding, we shall mostly express our results and proofs in terms of $n$ rather than $N$.

For reader's convenience, we briefly explain the overall strategy of the proof.
\begin{enumerate}
    \item[(i)] We first derive a determinantal formula for the generating function of the number of real eigenvalues (Proposition~\ref{Prop_finite N}) that holds for any $\tau$ and $n$. 
    The generating matrix appearing in Proposition~\ref{Prop_finite N} can be implemented to express the probability that there is no real eigenvalue and cumulants of the number of real eigenvalues (Proposition~\ref{Prop_generating matrix implementation}). 
    \smallskip
    \item[(ii)] We then derive asymptotic behaviours of trace powers of the generating matrix both in the strong (Proposition~\ref{prop:asympTracem}) and weak (Proposition~\ref{prop:asympTracem weak}) non-Hermiticity. 
    Together with Proposition~\ref{Prop_generating matrix implementation} (ii), these lead to Theorem~\ref{Thm_CLT}. 
    \smallskip 
    \item[(iii)] To complete the proof of Theorems~\ref{thm:mainStrong} and \ref{thm:mainWeak}, we perform required error estimates (Lemmas~\ref{lem:inequalityTrMNtaum} and \ref{Lem_eigenvalue of Mn tau boundedness}). 
\end{enumerate}

\subsection{Key ingredients} \label{Subection_proof outline}
The first step of the proofs is a determinant formula for $p_{2n,2k}$. 
For this purpose, recall that the $k$-th Hermite polynomial $H_k$ is given by
\begin{equation}
H_k(z):=(-1)^k e^{z^2} \frac{d^k}{dz^k} e^{-z^2}=k! \sum_{m=0}^{ \lfloor k/2 \rfloor } \frac{(-1)^m}{ m! (k-2m)! } (2z)^{k-2m}. 
\end{equation}
and that the (regularised) hypergeometric function is defined by the Gauss series
\begin{align}
{}_2 F_1(a,b;c;z):=\frac{\Gamma(c)}{\Gamma(a)\Gamma(b)} \sum_{s=0}^\infty \frac{\Gamma(a+s) \Gamma(b+s) }{ \Gamma(c+s) s! } z^s, \quad (|z|<1)
\end{align}
and by analytic continuation elsewhere.

\begin{prop}[\textbf{Determinantal formula for the generating function}]\label{Prop_finite N}
We have 
\begin{equation} \label{generating function det}
\sum_{k=0}^{n} z^k p_{2n,2k}^{(\tau)}= \det\Big[ \delta_{j,k}+(z-1) M_n^{(\tau)}(j,k)  \Big]_{j,k=1}^n,
\end{equation}
where 
\begin{align}
\label{Mn tau jk}
\begin{split}
M_n^{(\tau)}(j,k) &= \frac{1}{ \sqrt{2\pi} } \frac{  ( \tau/2 )^{j+k-2} }{ \sqrt{ \Gamma(2j-1)\Gamma(2k-1) } }  \int_\R e^{ -\frac{x^2}{1+\tau} } H_{2j-2}\Big( \frac{x}{ \sqrt{2\tau} } \Big) H_{2k-2}\Big( \frac{x}{ \sqrt{2\tau} } \Big)   \,dx
\\
&= \frac{1}{ \sqrt{2\pi} }  \Big( \frac{1+\tau}{1-\tau} \Big)^{\frac12} \frac{ \Gamma(j+k-\frac32){}_2F_1(k-j+\frac{1}{2},j-k+\frac{1}{2};-j-k+\frac52;-\frac{\tau}{1-\tau}) }{ \sqrt{ \Gamma(2j-1) \Gamma(2k-1) } } . 
\end{split}
\end{align}
In particular, we have 
\begin{equation}
\begin{split} \label{p 2n 0 Mn tau}
 p_{2n,0}^{(\tau)}   = \det \Big[ I-M_n^{(\tau)} \Big], \qquad M_n^{(\tau)}=\Big[ M_n^{(\tau)}(j,k) \Big]_{j,k=1}^n. 
\end{split}
\end{equation}
\end{prop}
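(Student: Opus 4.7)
The strategy is to derive \eqref{generating function det} via a Pfaffian representation of the generating function that collapses to a determinant once the basis is cleverly chosen.

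\textbf{Step 1 (Pfaffian form).} The identity $\sum_{k=0}^n z^k p_{2n,2k}^{(\tau)} = \mathbb{E}[z^{\mathcal{N}_N^{(\tau)}/2}]$ means each real eigenvalue contributes a factor $z^{1/2}$. Inserting these factors into the standard de Bruijn/Pfaffian integration formula for the JPDF recalled in the excerpt gives, for any basis $\{q_j\}_{j=0}^{2n-1}$ of polynomials of degree at most $2n-1$,
\begin{equation*}
\sum_{k=0}^n z^k p_{2n,2k}^{(\tau)} = \Pf\!\left[\alpha_{j,k} + (z-1)\beta_{j,k}\right]_{j,k=0}^{2n-1},
\end{equation*}
where $\alpha$ is the full antisymmetric bilinear pairing (real line $+$ upper half-plane) against the eGinOE weight and $\beta$ arises solely from the real-line piece. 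The factors $z^{1/2}$ combine pairwise because de Bruijn's identity pairs eigenvalues of the same type, producing $(z-1)$ on real--real pairings and $1$ otherwise.

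\textbf{Step 2 (Collapse to a determinant).} I would choose $\{q_j\}$ to be the skew-orthogonal polynomials (SOPs) for the eGinOE, which are known explicitly and essentially of the form $\tau^{\lfloor j/2\rfloor} H_j(x/\sqrt{2\tau})$ up to normalisation; see \cite{MR2430570, KanzieperAkemann07}. Then $\alpha = J$ is the standard symplectic form. The parity of Hermite polynomials combined with the even real-line weight $e^{-x^2/(1+\tau)}$ forces $\beta$ to have block anti-diagonal structure $\left(\begin{smallmatrix} 0 & M \\ -M^T & 0\end{smallmatrix}\right)$ after reordering the basis so that the even-degree SOPs $q_{2j-2}$ come first. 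The classical identity $\Pf\!\left(\begin{smallmatrix} 0 & I+A \\ -(I+A)^T & 0\end{smallmatrix}\right) = \det(I+A)$ then yields \eqref{generating function det} with $M_n^{(\tau)}(j,k)$ given by the real-line inner product of the $(2j-2)$th and $(2k-2)$th SOPs, matching the first line of \eqref{Mn tau jk}; setting $z=0$ gives \eqref{p 2n 0 Mn tau}.

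\textbf{Step 3 (Hypergeometric evaluation and main obstacle).} To pass to the second line of \eqref{Mn tau jk}, substitute $u = x/\sqrt{2\tau}$, expand one Hermite polynomial as a sum of monomials, and use $\int_{\mathbb R} u^{2s} e^{-cu^2}\,du = \Gamma(s+\tfrac12)/c^{s+1/2}$ with $c = 2\tau/(1+\tau)$. The resulting double sum can be reorganised into a terminating ${}_2F_1$ series via the Chu--Vandermonde identity, and a Pfaff transformation converts the argument into $-\tau/(1-\tau)$. The main obstacle lies in Step 2: one must carefully track signs in the de Bruijn identity (including the $\erfc$ factor arising from integrating out the imaginary parts of complex-eigenvalue pairs) and verify that $\beta$ has the precise block anti-diagonal structure needed to collapse the $2n\times 2n$ Pfaffian into the stated $n\times n$ determinant.
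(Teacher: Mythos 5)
Your overall strategy (Pfaffian via de~Bruijn, skew-orthogonal basis, parity-driven block anti-diagonality of the real pairing, $\Pf\to\det$ collapse) is the same as the paper's, which delegates Steps~1--2 to the proof of \cite[Lemma~2.1]{MR3563192} adapted to general $\tau$. However there are two real problems.

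First, an index error: in the reordered basis with even-parity SOPs $q_0,q_2,\dots,q_{2n-2}$ first and odd-parity SOPs $q_1,q_3,\dots,q_{2n-1}$ second, the parity argument kills $\langle q_{2j-2},q_{2k-2}\rangle_\R$ and $\langle q_{2j-1},q_{2k-1}\rangle_\R$, so the nonzero block $M$ consists of $\langle q_{2j-2},q_{2k-1}\rangle_\R$, \emph{not} $\langle q_{2j-2},q_{2k-2}\rangle_\R$ as you write. With your indices the entries are identically zero.

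Second, and more seriously, even with the correct indices you assert that $\langle q_{2j-2},q_{2k-1}\rangle_\R$ (divided by $\sqrt{r_{j-1}r_{k-1}}$) ``matches the first line of \eqref{Mn tau jk}''. That first line is a \emph{single} integral $\int_\R e^{-x^2/(1+\tau)}H_{2j-2}(x/\sqrt{2\tau})H_{2k-2}(x/\sqrt{2\tau})\,dx$, whereas $\langle q_{2j-2},q_{2k-1}\rangle_\R$ is a \emph{double} integral carrying a $\sgn(y-x)$ factor and involving the odd SOP $q_{2k-1}=C_{2k-1}-(2k-2)C_{2k-3}$. This is not an identity you get for free; it is the central computation in the paper's proof. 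There one writes $\AA_{2j-1,2k}=\langle q_{2j-2},q_{2k-1}\rangle_\R$ in terms of $I_{j,k}=\int_{\R^2}e^{-(x^2+y^2)/(2(1+\tau))}C_{2j+1}(x)C_{2k}(y)\sgn(y-x)\,dx\,dy$, and then invokes the Forrester--Nagao recurrence \cite[Eq.(5.9)]{MR2430570}, $I_{j+1,k}=(2j+2)I_{j,k}-2\xi_{j,k}$, which cancels the $I$'s and leaves the single $\sgn$-free integral $\xi_{k-2,j-1}$. Without some argument of this kind your Step~3 is evaluating the wrong object: you describe how to reduce the single integral to a ${}_2F_1$, but you have not shown that the single integral is what $M_n^{(\tau)}(j,k)$ actually equals. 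The ``main obstacle'' you flag (sign-tracking in de~Bruijn and verifying block structure) is real but secondary; the missing piece is the conversion of the skew pairing to the one-dimensional Hermite integral.
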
 
For $\tau=0$, a similar formula was first derived by Kanzieper and Akemann \cite{MR2185860}. We stress that the determinantal formula \eqref{generating function det} is equivalent to Proposition~\ref{Prop_FN finite N} due to Forrester and Nagao. 
In general, the determinantal formula for the generating function of number of real eigenvalues follows from the skew-orthogonal polynomial formalism of the generalised partition functions \cite{forrester2007eigenvalue,forrester2013skew}. 
Let us also mention that similar formulas can be found in the context of induced Ginibre, spherical Ginibre, and truncated orthogonal matrices, see e.g.  \cite{forrester2012pfaffian,fischmann2012induced,fischmann2011one,forrester2010limiting,khoruzhenko2010truncations,MR4364735,MR4134827,MR3685239} and also \cite[Section 4]{byun2023progress} for a comprehensive review.

\begin{rem}[Extremal cases $\tau=0,1$]
For $\tau=0$, using that ${}_2 F_1(a,b;c;0)=1$ and that $H_{k}(x)$ has leading coefficient $2^{k}$ we get 
\begin{equation} \label{Hermite to monomial}
\Big( \frac{\tau}{2} \Big)^{k/2} H_k \Big( \frac{x}{ \sqrt{2\tau} } \Big) \to x^k, \qquad \tau \to 0,
\end{equation}
and consequently
\begin{align}
\label{Mn 0 jk}
M_n^{(0)}(j,k) &= \frac{1}{ \sqrt{2\pi} }  \frac{ \Gamma(j+k-\frac32) }{ \sqrt{ \Gamma(2j-1) \Gamma(2k-1) } } 
= \frac{1}{ \sqrt{2\pi} }  \int_0^\infty  \frac{ e^{ -x } }{ x^{5/2} } \frac{ x^{j} }{ \sqrt{\Gamma(2j-1)} } \frac{ x^k }{ \sqrt{\Gamma(2k-1)} } \,dx . 
\end{align}
Thus one can observe that for $\tau=0$, Proposition~\ref{Prop_finite N} corresponds to \cite[Lemma 2.1]{MR3563192}, see also \cite[Proposition 1]{forrester2015diffusion}.
The second expression also agrees with the integral representation in \cite[Eq.(A.19)]{MR3563192}. 
We mention that for $\tau=0$, this follows from exact formulas in \cite{KanzieperAkemann07,MR2185860,forrester2007eigenvalue}, cf. see \cite{borodin2007note} for more on the Pfaffian integration theorem. 

On the other hand, if $\tau=1$, it follows from the orthogonality of the Hermite polynomials
\begin{equation}
\int_\R H_n(x) H_m(x) e^{-x^2}\,dx= \sqrt{\pi} \, 2^n \, n! \, \delta_{nm}
\end{equation}
that $M_n^{(1)}(j,k) = \delta_{jk}.$
Thus one can observe that 
\begin{equation}
\sum_{k=0}^{n} z^k p_{2n,2k}(1)= \det[ zI  ]= z^n, \qquad \textup{i.e.} \quad 
p_{2n,2k}=\begin{cases}
    0 &\textup{if } k=0,1,\dots, n-1,
    \\
    1 & \textup{if } k=n 
\end{cases}
\end{equation}
as expected. 
\end{rem}


\begin{prop}[\textbf{Zero probabilities and cumulants in terms of the generating matrix}] \label{Prop_generating matrix implementation}
For any $\tau$ and $n$, we have the following. 
\begin{itemize}
    \item[(i)] For any natural number $K_n \ge 1$, we have
\begin{equation} \label{log p2n 0 Trunc}
\log p_{2n,0}^{(\tau)}=\Tr \log (I-M_n^{(\tau)})=- \sum_{m=1}^{K_n} \frac{1}{m} \Tr (M_n^{(\tau)})^m - R_n(K_n),
\end{equation}
where
\begin{equation} \label{Rn K integral}
R_n(K)=\int_0^1 \Tr\bigg(  \frac{ (M_n^{(\tau)})^{K+1}  }{  (1-x M_n^{(\tau)})^{K+1}  }  \bigg) (1-x)^K \,dx.  
\end{equation}
\item[(ii)] The $l$-th order cumulant $\kappa_l$ of $\mathcal{N}_{2n}^{(\tau)}$ is given by 
\begin{equation}
\kappa_l= 2^l \sum_{m=1}^l \frac{(-1)^{m+1} }{ m }   \sum_{\substack{\nu_1+\dots+ \nu_m=l \\ \nu_j \ge1 }} \frac{l!}{\nu_1 ! \dots \nu_m !} \Tr (M_n^{(\tau)})^m.
\end{equation}
\end{itemize}
\end{prop}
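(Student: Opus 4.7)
The plan is to derive both identities directly from the determinantal generating function identity \eqref{generating function det} in Proposition~\ref{Prop_finite N}. Part (i) will follow from specializing to $z=0$ and applying Taylor's theorem with integral remainder to $\log(1-y)$ in the functional calculus sense. Part (ii) will follow from treating \eqref{generating function det} as a moment generating function after substituting $z=e^{2t}$, taking a logarithm, and organizing the resulting expansion by powers of $t$.

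For part (i), I start from the identity $p_{2n,0}^{(\tau)}=\det(I-M_n^{(\tau)})$ in \eqref{p 2n 0 Mn tau} and use $\log\det = \Tr\log$ to write $\log p_{2n,0}^{(\tau)} = \Tr\log(I-M_n^{(\tau)})$. The key analytic input is the scalar identity
\begin{equation*}
\log(1-y) = -\sum_{m=1}^{K}\frac{y^m}{m} - y^{K+1}\int_0^1 \frac{(1-x)^K}{(1-xy)^{K+1}}\,dx,
\end{equation*}
which comes from Taylor's theorem with integral remainder applied to $f(y)=\log(1-y)$ around $y=0$. Applying this identity in the functional calculus sense (e.g.\ by diagonalizing $M_n^{(\tau)}$, or more simply by writing the trace as a sum over eigenvalues and applying the scalar identity on each eigenvalue) and taking traces yields precisely \eqref{log p2n 0 Trunc}--\eqref{Rn K integral}. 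This requires that all eigenvalues $\lambda$ of $M_n^{(\tau)}$ satisfy $1-x\lambda \neq 0$ for $x\in[0,1]$, which is guaranteed by the spectral estimate in Lemma~\ref{Lem_eigenvalue of Mn tau boundedness}.

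For part (ii), I use \eqref{generating function det} with $z=e^{2t}$: since $\mathcal N_{2n}^{(\tau)}\in\{0,2,\dots,2n\}$, the moment generating function is
\begin{equation*}
\mathbb E\bigl[e^{t\mathcal N_{2n}^{(\tau)}}\bigr] = \sum_{k=0}^{n} e^{2tk}\,p_{2n,2k}^{(\tau)} = \det\bigl[I+(e^{2t}-1)M_n^{(\tau)}\bigr].
\end{equation*}
Taking the logarithm and expanding $\Tr\log\bigl[I+(e^{2t}-1)M_n^{(\tau)}\bigr] = \sum_{m\ge 1}\frac{(-1)^{m+1}}{m}(e^{2t}-1)^m\,\Tr(M_n^{(\tau)})^m$, which is legitimate for $|t|$ sufficiently small (the finite-dimensionality of $M_n^{(\tau)}$ ensures convergence once $|e^{2t}-1|$ is smaller than the inverse spectral radius). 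I then expand
\begin{equation*}
(e^{2t}-1)^m = \sum_{l\ge m} (2t)^l \sum_{\substack{\nu_1+\cdots+\nu_m=l\\ \nu_j\ge 1}} \frac{1}{\nu_1!\cdots\nu_m!},
\end{equation*}
interchange the summations over $m$ and $l$, and read off the coefficient of $t^l$. Multiplying by $l!$ recovers the $l$-th cumulant and yields exactly the claimed formula.

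I do not anticipate a serious obstacle: the main technical points are the spectral condition in (i), which is handed to us by Lemma~\ref{Lem_eigenvalue of Mn tau boundedness}, and the bookkeeping in the double-sum rearrangement in (ii), which is absolutely convergent in a neighborhood of $t=0$ and therefore admits term-by-term extraction of Taylor coefficients. The only choice to make is whether to present part (i) via spectral calculus or via the Jordan/Schur form of $M_n^{(\tau)}$; the former is cleaner and fully sufficient.
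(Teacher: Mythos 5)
Your approach is essentially the same as the paper's (the paper's proof of this proposition is a one-liner citing the determinantal formula for (i) and \cite[Lemma 2.4]{MR3612267} for (ii); the latter reference proceeds exactly as you do, via the moment generating function $\mathbb E[e^{t\mathcal N}]=\det[I+(e^{2t}-1)M_n^{(\tau)}]$, $\Tr\log$ expansion, and expansion of $(e^{2t}-1)^m$). One small point worth flagging: for part (i) you justify $0\le\lambda<1$ by citing Lemma~\ref{Lem_eigenvalue of Mn tau boundedness}, but part (ii) of that lemma only asserts $\lambda_1<1$ \emph{for sufficiently large $n$}, whereas the proposition claims validity for any $n$. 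The uniform fact $\lambda_j<1$ for all $n$ can instead be read directly off $\sum_k z^k p_{2n,2k}^{(\tau)}=\prod_j(1+(z-1)\lambda_j)$: the left side is a degree-$n$ polynomial in $z$ with nonnegative coefficients and $p_{2n,0}^{(\tau)}>0$, so all $\lambda_j>0$, all roots $z_j=1-1/\lambda_j$ are nonpositive, and none equals $0$, giving $0<\lambda_j<1$. With that replacement your argument is airtight.
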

\begin{proof}
  The first assertion is an immediate consequence of Proposition~\ref{Prop_finite N}, whereas the second one was shown in \cite[Lemma 2.4]{MR3612267}.  
\end{proof}

We need to show the following. 

\begin{prop}[\textbf{Asymptotics of trace powers at strong non-Hermiticity}] \label{prop:asympTracem}
For a fixed $\tau \in [0,1)$, and for any fixed integer $m>0$, 
\begin{equation} \label{Tr M power conv strong}
\lim_{n \to \infty} \frac{1}{ \sqrt{2n} } \Tr (M_n^{(\tau)})^m = \sqrt{ \frac{1+\tau}{1-\tau}  \frac{1}{2\pi m} }.
\end{equation}
\end{prop}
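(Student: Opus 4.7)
The plan is to express $\Tr (M_n^{(\tau)})^m$ as an $m$-fold integral against an explicit kernel and then exploit the Gaussian structure of its $n \to \infty$ limit. Starting from the first integral representation in \eqref{Mn tau jk}, I factor $M_n^{(\tau)}(j,k) = \int_\R g_j^{(\tau)}(x)\, g_k^{(\tau)}(x)\, dx$ with
\begin{align*}
g_j^{(\tau)}(x) := \frac{(\tau/2)^{j-1}}{(2\pi)^{1/4}\sqrt{(2j-2)!}}\, e^{-\frac{x^2}{2(1+\tau)}}\, H_{2j-2}\!\left(\frac{x}{\sqrt{2\tau}}\right),
\end{align*}
and a standard cyclic manipulation yields
\begin{align*}
\Tr (M_n^{(\tau)})^m = \int_{\R^m} K_n^{(\tau)}(x_1, x_2)\, K_n^{(\tau)}(x_2, x_3) \cdots K_n^{(\tau)}(x_m, x_1)\, dx_1 \cdots dx_m,
\end{align*}
where $K_n^{(\tau)}(x, y) := \sum_{j=1}^{n} g_j^{(\tau)}(x)\, g_j^{(\tau)}(y)$.

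Next I apply Mehler's formula at parameters $t = \tau$ and $t = -\tau$ and average to isolate the even-index contributions. After simplification this produces the explicit infinite-sum kernel
\begin{align*}
K_\infty^{(\tau)}(x, y) := \sum_{j=1}^{\infty} g_j^{(\tau)}(x)\, g_j^{(\tau)}(y) = \frac{1}{2\sqrt{2\pi(1-\tau^2)}}\left[e^{-\frac{(x-y)^2}{2(1-\tau^2)}} + e^{-\frac{(x+y)^2}{2(1-\tau^2)}}\right].
\end{align*}
In operator language, the integral operator with this kernel annihilates odd functions and acts on the even subspace of $L^2(\R)$ as convolution with the centered Gaussian density $G$ of variance $1-\tau^2$.

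The main obstacle is the next step: a bulk-edge analysis that effectively localizes $K_n^{(\tau)}$ to the interval $[-R_n, R_n]$ with $R_n := (1+\tau)\sqrt{2n}$. I would rely on Plancherel-Rotach asymptotics for the rescaled Hermite functions $H_{2j-2}(x/\sqrt{2\tau})\, e^{-x^2/(2(1+\tau))}$ to establish (i) locally uniform convergence $K_n^{(\tau)}(x,y) \to K_\infty^{(\tau)}(x,y)$ throughout the deep bulk $|x|, |y| \le (1-\eta) R_n$, and (ii) rapid decay of $K_n^{(\tau)}(x,y)$ once $|x|$ or $|y|$ exceeds $(1+\eta) R_n$, with an edge transition layer contributing only $O(1)$ to the trace. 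The edge location $R_n = (1+\tau)\sqrt{2n}$ is dictated by consistency with the mean asymptotic \eqref{EN tau<1} via Proposition~\ref{Prop_generating matrix implementation}(ii) at $l = 1$, which forces $\int K_n^{(\tau)}(x,x)\,dx \sim R_n/\sqrt{2\pi(1-\tau^2)}$ as $n \to \infty$.

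Once $K_n^{(\tau)}$ is effectively localized, the leading trace reduces to a Gaussian computation. Writing $K_\infty^{(\tau)}(x,y) = \tfrac{1}{2}[G(x-y) + G(x+y)]$, expanding the product in the trace integral produces $2^m$ terms $\prod_i G(x_i + \epsilon_i x_{i+1})$ indexed by sign patterns $\vec\epsilon \in \{\pm 1\}^m$. The cyclic quadratic form $Q_{\vec\epsilon}(x) := \sum_{i=1}^{m}(x_i + \epsilon_i x_{i+1})^2$ has a one-dimensional null space exactly when $\prod_i \epsilon_i = (-1)^m$, which selects $2^{m-1}$ \emph{singular} patterns. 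For each such pattern the sign substitution $y_i = s_i z_i$ with $s_1 = 1$ and $s_{i+1} = -s_i \epsilon_i$ (well-defined by the null condition) maps the restriction of $Q_{\vec\epsilon}$ to the orthogonal complement of the null direction onto the discrete Dirichlet Laplacian on a path of length $m$, whose determinant equals $m$. Thus each singular pattern contributes $2R_n/\sqrt{2\pi(1-\tau^2)m}$, the factor $2R_n$ coming from the null direction restricted to the bulk and the remainder from the Gaussian integral in $m-1$ orthogonal variables; the $2^{m-1}$ non-singular patterns give only bounded corrections. Summing these contributions,
\begin{align*}
\Tr (M_n^{(\tau)})^m = \frac{2^{m-1}}{2^m} \cdot \frac{2R_n}{\sqrt{2\pi m(1-\tau^2)}} + O(1) = \sqrt{2n}\,\sqrt{\frac{1+\tau}{1-\tau}\cdot\frac{1}{2\pi m}} + o(\sqrt n),
\end{align*}
and dividing by $\sqrt{2n}$ yields \eqref{Tr M power conv strong}.
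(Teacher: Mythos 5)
Your overall strategy is the same as the paper's: represent $\Tr (M_n^{(\tau)})^m$ as an $m$-fold integral against a rank-$n$ Hermite-type kernel (this is the paper's Lemma~\ref{lem:TrMnmIntegral}), identify the formal $n\to\infty$ limit via the even part of Mehler's formula (as in the paper's Lemma~\ref{lem:TrMnmMono}), and reduce the leading order to a Gaussian computation over a cycle whose quadratic form has a one-dimensional null space. Your sign-pattern bookkeeping is a repackaging of the paper's change of variables $y_m=\sum_j x_j$, $y_j=x_{j+1}-x_j$, and the determinant bookkeeping (which you call a Dirichlet Laplacian on a path, and the paper realizes as $I+\mathbf{1}\mathbf{1}^T$ with determinant $m$) gives the same normalization $1/\sqrt{m}$; your final arithmetic recovers the stated limit. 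The edge location $R_n=(1+\tau)\sqrt{2n}$ also matches the paper's $e_\tau=\frac{1+\tau}{\sqrt{1-\tau^2}}$ after the rescaling by $\sqrt{1-\tau^2}$ built into $\mathcal K_n^{(\tau)}$.

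The substantive gap is the step you yourself flag as ``the main obstacle.'' Locally uniform convergence $K_n^{(\tau)}\to K_\infty^{(\tau)}$ on the deep bulk plus rapid decay past the edge is not enough: the bulk integration region has volume $\sim R_n^m$, so one needs \emph{quantitative} error bounds uniform over that growing region, together with explicit control of the transition layer around $|x|\approx R_n$, where the kernel can spike to order $n^\mu$. The paper supplies exactly this in Lemma~\ref{lem:asympKernel}, with two-sided bounds featuring the indicator $H(2\xi_{e_\tau}\pm n^{-\mu}-\xi_x-\xi_y)$ and correction terms $Cn^\mu e^{-cn(|x|-e_\tau)^2}e^{-cn(|y|-e_\tau)^2}$, obtained from the Plancherel--Rotach-type results of Akemann--Duits--Molag and Molag (cited as \cite{ADM,Mo}). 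Without an analogue, your ``edge transition layer contributes only $O(1)$'' is asserted, not proved. A second, smaller issue: for the upper bound the paper does not attack the $m$-fold integral head-on; it first collapses $m-1$ of the $m$ kernels to $\mathcal K_\infty^{(\tau)}$ via the monotonicity argument of Lemma~\ref{lem:TrMnmMono} (and the Gaussian chain integral of Lemma~\ref{lem:eCoshProdInt2}), reducing to a two-variable integral before applying the kernel asymptotics. Your sketch would have to control the $m$-fold integral from above directly, which is harder precisely because the bulk region is large and the pointwise bound $K_n\le K_\infty$ does not hold in the transition layer.

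Two minor points. The phrase ``restriction to the orthogonal complement'' combined with ``determinant equals $m$'' is internally inconsistent: the pseudo-determinant of the cycle Laplacian restricted to $\mathbf{1}^\perp$ in an orthonormal basis is $m^2$ (matrix-tree theorem), and the factor from the null direction is then $2\sqrt{m}R_n$; only in a non-orthogonal parametrization such as the paper's does the $(m-1)$-dimensional form have determinant $m$ with null range $2R_n$. The two bookkeepings agree on the final answer, so this is presentational, but the wording should be fixed. Finally, using the mean asymptotic \eqref{EN tau<1} to ``dictate'' the edge is a heuristic; in the actual proof the edge must emerge from the kernel asymptotics themselves (as it does in \cite{ADM}), since the proposition at $m=1$ is precisely one of the statements being proved.
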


For $\tau=0$, this proposition coincides with \cite[Lemma 2.3]{MR3563192}.

\begin{prop}[\textbf{Asymptotics of trace powers at weak non-Hermiticity}]  \label{prop:asympTracem weak}
For $\tau= 1-\alpha^2/(2n)$ with a fixed $\alpha \in (0,\infty)$, and for any fixed integer $m>0$, 
\begin{equation} \label{Tr M power conv weak}
\begin{split}
\lim_{n \to \infty} \frac{1}{ 2n } \Tr (M_n^{(\tau)})^m & = \frac{c(\sqrt{m}\,\alpha)}{2}= \frac{e^{-m\alpha^2/2}}{2}  \Big[ I_0\Big(\frac{m\alpha^2}{2}\Big)+I_1\Big(\frac{m\alpha^2}{2}\Big) \Big]. 
\end{split}
\end{equation}
\end{prop}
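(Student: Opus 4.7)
The plan is to express $\Tr(M_n^{(\tau)})^m$ as an $m$-fold kernel integral via the Gram-matrix structure of $M_n^{(\tau)}$, then carry out bulk asymptotic analysis in the weak non-Hermiticity scaling. Setting
\[
\phi_j^{(\tau)}(x) = \frac{(\tau/2)^{j-1}}{(2\pi)^{1/4}\sqrt{\Gamma(2j-1)}}\, H_{2j-2}(x/\sqrt{2\tau})\, e^{-x^2/(2(1+\tau))},
\]
one has $M_n^{(\tau)}(j,k) = \langle \phi_j^{(\tau)}, \phi_k^{(\tau)}\rangle_{L^2(\R)}$, and, writing $K_n^{(\tau)}(x,y) = \sum_{j=1}^n \phi_j^{(\tau)}(x) \phi_j^{(\tau)}(y)$,
\[
\Tr(M_n^{(\tau)})^m = \int_{\R^m} K_n^{(\tau)}(x_1, x_2)\, K_n^{(\tau)}(x_2, x_3) \cdots K_n^{(\tau)}(x_m, x_1)\, dx_1 \cdots dx_m.
\]
The even-index Mehler identity gives the untruncated kernel in closed form as
\[
K_\infty^{(\tau)}(x,y) = \frac{1}{\sqrt{2\pi(1-\tau^2)}}\, e^{-(x^2+y^2)/(2(1-\tau^2))}\, \cosh\frac{xy}{1-\tau^2},
\]
which acts on even functions as the Gaussian heat kernel at time $1-\tau^2 \sim \alpha^2/n$; this operator is not Hilbert-Schmidt, so the truncation at $j = n$ is essential and encodes the harmonic-oscillator energy cutoff.

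Next, I rescale $x_i = 2\sqrt{2n}\,\xi_i$ and apply bulk Plancherel-Rotach asymptotics for $H_{2j-2}(\cdot/\sqrt{2\tau})$ at indices $j = tn$, $t \in (0,1]$. The $j$-th mode is localised on $|\xi| \leq \sqrt{t}$ with weight $(\tau/2)^{2(j-1)} \sim e^{-\alpha^2 t}$. A direct diagonal calculation yields
\[
K_n^{(\tau)}(2\sqrt{2n}\,s, 2\sqrt{2n}\,s) \approx \frac{\sqrt n}{\pi\sqrt{2}}\int_0^{\sqrt{1-s^2}} e^{-\alpha^2 r^2}\, dr, \qquad s \in (-1,1),
\]
and integration over $s$ recovers $\Tr M_n^{(\tau)} \approx n\, c(\alpha)$, consistent with \eqref{EN tau weak}, serving as a sanity check. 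For general $m$, the off-diagonal kernel decays on the scale $|x-y| = O(n^{-1/2})$ with a sine-kernel-like oscillatory structure, and a stationary phase analysis of the $m$-fold integrand concentrates the integral on configurations where all $x_i$ lie in a common microscopic cell parametrised by $s \in [-1,1]$. Each of the $m$ kernel factors then contributes an effective weight $e^{-\alpha^2 s^2}$, yielding
\[
\frac{1}{2n}\Tr(M_n^{(\tau)})^m \to \frac{1}{2}\int_{-1}^1 e^{-m\alpha^2 s^2}\,\frac{2}{\pi}\sqrt{1-s^2}\, ds = \frac{c(\sqrt m\, \alpha)}{2},
\]
using $c(\sqrt m \alpha) = (4/\pi)\int_0^1 e^{-m\alpha^2 s^2}\sqrt{1-s^2}\, ds$, which follows from \eqref{d(alpha)} by expanding the logarithm in its series.

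The main obstacle is the rigorous implementation of the bulk stationary phase reduction for the $m$-fold integral: one must establish uniform Plancherel-Rotach control across the bulk, precise off-diagonal decay estimates for the truncated kernel, and edge cut-off arguments near $|s| = 1$ where the bulk asymptotics degenerate and Airy-type scaling would take over. Operator-theoretically, the task is to justify that $M_n^{(\tau)}$ is asymptotically spectrally equivalent to the compression of the heat semigroup $e^{(1-\tau^2)\Delta/2}$ (restricted to even functions) to the span of the first $n$ even Hermite functions, whose semiclassical (Wigner) analysis produces $n$ eigenvalues distributed as $\{e^{-\alpha^2 s^2} : s \in [-1,1]\}$ with the GOE semicircle density, from which the claim is immediate.
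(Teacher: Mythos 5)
Your starting point — the $m$-fold kernel integral representation — is indeed Lemma~\ref{lem:TrMnmIntegral} of the paper, and the closed form $c(\sqrt m\,\alpha)=\frac{4}{\pi}\int_0^1 e^{-m\alpha^2 s^2}\sqrt{1-s^2}\,ds$ is the content of Lemma~\ref{Lem_d(alpha) integral rep} (via \eqref{c(alpha) v2}). From there, however, you and the paper diverge completely. The paper does not attempt any bulk Plancherel--Rotach or stationary phase analysis in the weak regime. Instead it derives an exact \emph{combinatorial sum} formula for $\Tr(M_n^{(\tau)})^m$ via contour-integral (residue) manipulations of the Mehler-type generating series (Lemma~\ref{Lem_Tr M^m sum}), observes that in the $\tau=1-\alpha^2/(2n)$ scaling the factorial weights force the multi-indices $l_k$ and the increments $M_k=j_k-j_{k-1}$ to stay bounded, reduces the remaining $j$-sum to a Riemann sum producing an incomplete gamma function $\gamma(2l+1, m\alpha^2/2)$, and finishes with a binomial combinatorial identity extracted from expanding $(x^2/m+\cdots+2xy+\cdots+y^2/m)^k=(x+y)^{2k}$. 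This route is elementary (no uniform asymptotics of Hermite polynomials at macroscopic degree are needed) and fully rigorous.

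Your proposal, as written, has a genuine gap. The central assertion — that a stationary-phase reduction of the $m$-fold integral concentrates onto configurations parametrised by a single bulk variable $s$, with each kernel factor contributing $e^{-\alpha^2 s^2}$ and hence the product yielding $e^{-m\alpha^2 s^2}$ against the semicircle density — is exactly what a proof must establish, and you leave it as a list of unresolved ``obstacles.'' In the weak non-Hermiticity regime the truncated kernel is genuinely oscillatory on the relevant microscopic scale (the bandwidth $1-\tau^2\sim\alpha^2/n$ is \emph{not} small relative to the reciprocal of the Hermite degree, which is the hard feature), and it is not clear that the off-diagonal coupling of the $m$ factors factorises in the way your heuristic supposes; nor is your asserted asymptotic spectral equivalence of $M_n^{(\tau)}$ with a compressed heat semigroup (with semicircle eigenvalue density under $s\mapsto e^{-\alpha^2 s^2}$) a known fact — it is, rather, a reformulation of the proposition that would itself require proof. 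Your picture is consistent with the answer and makes a good sanity check for $m=1$, but it does not constitute a proof of \eqref{Tr M power conv weak}; the paper's combinatorial/Riemann-sum route is the one that actually closes the argument.
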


\begin{ex}[The first three cumulants]
As a consequence of Proposition~\ref{Prop_generating matrix implementation} (ii), for $l=1,2,3$, we have 
\begin{align}
 \label{Tr expectation}
 \mathbb E \mathcal N_{2n}^{(\tau)} &= 2 \Tr M_n^{(\tau)}, 
 \\
 \label{Tr variance}
	\Var \mathcal{N}_{2n}^{(\tau)} & = 4 \Big( \Tr  M_n^{(\tau)}  -\Tr  (M_n^{(\tau)})^2 \Big) ,
 \\
\mathbb{E} [ (\mathcal{N}_{2n}^{(\tau)}-\mathbb E \mathcal N_{2n}^{(\tau)})^3 ] & =8\Big( \Tr M_n^{(\tau)}-3 \Tr (M_n^{(\tau)})^2  + 2 \Tr(M_n^{(\tau)})^3 \Big) . 
\end{align}
We emphasise that for $m=1,2$, Propositions~\ref{prop:asympTracem} and ~\ref{prop:asympTracem weak} recover important results in \cite{MR2430570,byun2021real}. 
To be more precise, for $m=1$, the asymptotic behaviours of the expected numbers \eqref{EN tau<1} and \eqref{EN tau weak} follow from \eqref{Tr expectation}, whereas for $m=2$, the variance asymptotics \eqref{VN tau strong} and \eqref{VN tau weak} follow from \eqref{Tr variance}.  
\end{ex}

To prove Theorem \ref{thm:mainStrong}, we need a version of Proposition \ref{prop:asympTracem} that is uniform in $m$, rather than for fixed $m$. We shall need the following more refined bound.

\begin{lem} \label{lem:inequalityTrMNtaum}
    For any $\tau\in[0,1)$ and integers $n\geq 1$ and $m \geq \frac{1+\tau}{1-\tau}$ we have the inequality
    \begin{equation}
      \Tr (M_n^{(\tau)})^m 
        \leq \frac{1}{4} + \sqrt{\frac{1+\tau}{1-\tau}} \sqrt{\frac{n}{\pi m}} (1+2/n)
        + \frac{1}{8} \frac{1-\tau}{1+\tau} \sqrt{\frac{m}{\pi n}} (1+1/n).
    \end{equation}
\end{lem}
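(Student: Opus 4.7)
The plan is to express $\Tr (M_n^{(\tau)})^m$ as an $m$-fold integral via the integral representation of the matrix entries in Proposition~\ref{Prop_finite N}, and then to estimate this integral using Mehler's formula combined with a careful truncation analysis. Explicitly, I would write
\[
\Tr (M_n^{(\tau)})^m = \int_{\R^m} \prod_{i=1}^m w(x_i) \prod_{i=1}^m K_n(x_{i-1}, x_i)\, dx_1 \cdots dx_m,
\]
with cyclic convention $x_0 = x_m$, where $w(x) = (2\pi)^{-1/2} e^{-x^2/(1+\tau)}$ and
\[
K_n(x,y) = \sum_{j=1}^n \frac{(\tau/2)^{2(j-1)}}{(2j-2)!} H_{2j-2}\!\Bigl(\tfrac{x}{\sqrt{2\tau}}\Bigr) H_{2j-2}\!\Bigl(\tfrac{y}{\sqrt{2\tau}}\Bigr)
\]
is the partial-sum kernel arising from the entries of $M_n^{(\tau)}$.

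Applying Mehler's formula to the even-indexed Hermite polynomials gives the closed-form $n\to\infty$ kernel
\[
K_\infty(x,y) = \frac{e^{-\tau(x^2+y^2)/(2(1-\tau^2))}}{\sqrt{1-\tau^2}} \cosh\!\Bigl(\tfrac{xy}{1-\tau^2}\Bigr).
\]
Writing $\cosh = (e^{\cdot} + e^{-\cdot})/2$ and combining with $\prod w(x_i)$ reduces the integrand to a sum of Gaussian densities governed by a cyclic quadratic form on $\R^m$. The resulting Gaussian integral along the $(m-1)$-dimensional non-degenerate subspace contributes a factor of order $(m(1-\tau^2))^{-1/2}$, while the remaining one-dimensional translation mode (from the $+$ branch of $\cosh$) is cut off by the truncation at $n$: Plancherel--Rotach-type concentration of high-index Hermite polynomials shows that $K_n(x,y)$ is essentially supported on $|x|,|y| \lesssim \sqrt{2n(1+\tau)}$. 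The effective truncation length along the degenerate direction, multiplied by the non-degenerate Gaussian factor, produces the announced leading term $\sqrt{(1+\tau)/(1-\tau)}\sqrt{n/(\pi m)}$. The constant $1/4$, the $(1+2/n)$ factor, and the secondary $\tfrac{1}{8}\tfrac{1-\tau}{1+\tau}\sqrt{m/(\pi n)}(1+1/n)$ correction arise from the boundary and endpoint contributions together with the sub-leading $-$ branch of $\cosh$.

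The main obstacle is the rigorous implementation of this truncation-based regularisation: one cannot simply dominate $\Tr (M_n^{(\tau)})^m$ by the trace of the infinite-$n$ Mehler operator, since the latter is infinite because of the translation mode. Instead, the $n$-dependence of the partial Hermite sum must be exploited to extract the effective cutoff scale along the degenerate direction, while simultaneously retaining constants sharp enough to match the stated bound. The hypothesis $m \geq (1+\tau)/(1-\tau)$ ensures that the Gaussian width $\sqrt{m(1-\tau^2)}$ along the non-degenerate directions exceeds the correlation scale $(1+\tau)$, so the leading-order extraction is uniformly valid and the higher-order corrections are of the declared magnitude.
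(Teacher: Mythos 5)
Your high-level picture of why the bound should hold is reasonable, but the proposal does not constitute a proof: you have identified the central difficulty --- that one cannot simply replace the truncated kernel $K_n$ by the Mehler kernel $K_\infty$, because the latter gives a divergent trace due to the translation mode --- but you have not resolved it. Saying ``the $n$-dependence of the partial Hermite sum must be exploited to extract the effective cutoff'' is a restatement of the problem, not a solution. The paper's route resolves this cleanly in two steps that are absent from your sketch. First, a \emph{monotonicity lemma} (Lemma~\ref{lem:TrMnmMono}): by writing $\mathcal K_n^{(\tau)}$ in terms of Hermite functions with a matrix of strictly positive coefficients $\hat M_{n,jk}$, one sees that the trace is a sum of products of positive quantities, so increasing $n$ only increases the trace. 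This justifies replacing exactly $m-1$ of the $m$ kernel factors by the Mehler kernel while \emph{retaining one truncated factor} $\mathcal K_n^{(\tau)}$, which then supplies the needed cutoff; after Gaussian integration (Lemma~\ref{lem:eCoshProdInt2}) this collapses the $m$-fold integral to a double integral with a single $\mathcal K_n^{(\tau)}$ factor. Your proposal gives no mechanism for why $m-1$ of the kernels may be dropped to $K_\infty$ while one remains truncated.

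Second, the resulting double integral is evaluated not by ``Plancherel--Rotach-type concentration'' but by plugging in the \emph{single contour-integral representation} of the kernel \eqref{eq:ADMintegralRep} and deforming the contour (Lemma~\ref{Lem_TrM power ineq contour} and the proof of Lemma~\ref{lem:inequalityTrMNtaum}). Plancherel--Rotach asymptotics give error terms that are small only as $n\to\infty$; they cannot produce the non-asymptotic inequality stated here, with its exact constants $1/4$, $1+2/n$, $1+1/n$, and the explicit subleading term $\tfrac{1}{8}\tfrac{1-\tau}{1+\tau}\sqrt{m/(\pi n)}$. Those terms in the paper come, respectively, from the residue at $s=-1$, the elementary bound $\int_0^\infty (1+s/(2n))^{-2n}/\sqrt{\pi s}\,ds\le 1+1/n$, and the estimate over the band $L_-$ near the extra singularity $a$. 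The hypothesis $m\ge\tfrac{1+\tau}{1-\tau}$ is used precisely to guarantee that $a\le -1$, so the contour can be deformed as described --- your proposed explanation of this hypothesis (``the Gaussian width $\dots$ exceeds the correlation scale'') is a plausible heuristic but is not where the condition actually enters. As written, your proposal would need all of this machinery supplied before it could be called a proof.
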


Note that for $\tau=0$, this lemma gives  \cite[Lemma 2.3]{MR3563192}.

\begin{lem}[\textbf{Bounds for eigenvalues of $M_n^{(\tau)}$}] \label{Lem_eigenvalue of Mn tau boundedness}
Let $\lambda_1>\lambda_2 > \dots >\lambda_n$ $(j=1,2,\dots,n)$ be the eigenvalues of $M_n^{(\tau)}$.  
Then we have the following. 
\begin{enumerate}
    \item[(i)] The matrix $M_n^{(\tau)}$ is positive definite, i.e. $\lambda_n>0$.
    \item[(ii)] There exists $\mu>0$ such that for sufficiently large $n$, 
\begin{equation} \label{lambda1 upper bdd}
\lambda_1 \le 1-\frac{ \mu }{ n^a }, \qquad  a= \begin{cases}
    1 &\textup{if }\tau \textup{ is fixed},
    \\
    2 &\textup{if }\tau= 1-\frac{\alpha^2}{2n}. 
\end{cases}
\end{equation}
\end{enumerate}
\end{lem}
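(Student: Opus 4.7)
The plan is to deduce (i) from the Gram-matrix structure of $M_n^{(\tau)}$ visible in the first line of \eqref{Mn tau jk}, and to deduce (ii) by combining the elementary spectral bound $\lambda_1^m\le \Tr(M_n^{(\tau)})^m$ with Lemma \ref{lem:inequalityTrMNtaum}, with $m=m(n,\tau)$ chosen appropriately in each regime.

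For part (i), I would rewrite the integral in \eqref{Mn tau jk} as
\begin{align*}
M_n^{(\tau)}(j,k) = \frac{1}{\sqrt{2\pi}}\int_\R f_j(x)\,f_k(x)\,dx, \quad f_j(x):=\frac{(\tau/2)^{j-1}}{\sqrt{\Gamma(2j-1)}}\, H_{2j-2}\Big(\tfrac{x}{\sqrt{2\tau}}\Big)\,e^{-x^2/(2(1+\tau))}
\end{align*}
for $\tau\in(0,1)$, with the analogous $f_j(x)=x^{2j-2}e^{-x^2/2}/\sqrt{\Gamma(2j-1)}$ at $\tau=0$ read off from \eqref{Mn 0 jk}. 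The $f_j$ are linearly independent in $L^2(\R)$ since the $H_{2j-2}$ are polynomials of pairwise distinct even degree and the Gaussian weight is strictly positive, so $M_n^{(\tau)}$ is the Gram matrix of $n$ linearly independent vectors and is strictly positive definite.

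For part (ii), set $u:=(1+\tau)/(1-\tau)$ and $v:=m/n$. For any $m\ge u$, Lemma \ref{lem:inequalityTrMNtaum} together with $\lambda_1^m\le \Tr(M_n^{(\tau)})^m$ yields
\begin{align*}
\lambda_1^m \;\le\; \tfrac14 + \sqrt{\tfrac{u}{\pi v}}\,(1+2/n) + \tfrac{1}{8u}\sqrt{\tfrac{v}{\pi}}\,(1+1/n).
\end{align*}
For fixed $\tau$, $u$ is a constant; taking $v$ at (or near) the AM--GM optimum $v_\ast=8u^{3/2}$ bounds the sum of the last two terms by $\tfrac{1}{\sqrt{2\pi}}u^{-1/4}(1+o(1))\le \tfrac{1}{\sqrt{2\pi}}(1+o(1))$, making the right-hand side at most some $q(\tau)<1$ for $n$ large. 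Since $m\asymp n$, $\lambda_1\le q^{1/m}\le 1-\mu/n$, giving $a=1$. In the weak regime one has $u\sim 4n/\alpha^2$, and I would instead take $m=\lceil A n^2\rceil$ with $A=A(\alpha)$ large enough that $m\ge u$ and $2/(\alpha\sqrt{\pi A})\le 1/2$: the second term then equals $2/(\alpha\sqrt{\pi A})(1+o(1))$, the third is $O(n^{-1/2})\to 0$, the right-hand side is at most $3/4+o(1)<1$, and $\lambda_1\le (3/4)^{1/m}(1+o(1))\le 1-\mu/n^2$, giving $a=2$.

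The hard part will be the weak regime, where one must not use the unconstrained AM--GM optimum $v_\ast=8u^{3/2}$, which there corresponds to $m_\ast\asymp n^{5/2}$ and would only yield the weaker bound $\lambda_1\le 1-\mu/n^{5/2}$. To produce $a=2$ one has to accept a larger (but still strictly sub-unit) value of $\Tr(M_n^{(\tau)})^m$ in exchange for the smaller exponent $1/m$, by choosing $m$ at the minimal scaling $m\asymp n^2$ that is compatible both with the hypothesis $m\ge u$ and with the second term $\sqrt{u/(\pi v)}=n/(\alpha\sqrt{\pi m})$ remaining bounded; this trade-off is exactly what produces the exponent $a=2$ in \eqref{lambda1 upper bdd}.
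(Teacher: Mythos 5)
Your proposal is correct and follows essentially the same strategy as the paper: part (i) is the paper's positive-definiteness argument (the paper writes the quadratic form $\langle v, M_n^{(\tau)}v\rangle$ explicitly as a strictly positive integral, which is your Gram-matrix observation in disguise), and part (ii) is exactly the paper's route of combining $\lambda_1^m\le\Tr(M_n^{(\tau)})^m$ with Lemma~\ref{lem:inequalityTrMNtaum} for $m\asymp n$ (fixed $\tau$) and $m\asymp n^2$ (weak regime). The only cosmetic difference is that for fixed $\tau$ the paper takes $m$ just above $\frac{1+\tau}{1-\tau}n$ rather than the AM--GM optimum $m\approx 8\big(\frac{1+\tau}{1-\tau}\big)^{3/2}n$; since both are $\asymp n$ this is immaterial to the exponent $a=1$, and your extra remarks about why $m\asymp n^2$ (rather than the unconstrained optimum $\asymp n^{5/2}$) is forced in the weak regime are accurate though left implicit in the paper.
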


\begin{lem}\label{cor:remainder}
Let $a$ be given in \eqref{lambda1 upper bdd}. 
Then we have 
\begin{equation}
\frac{1}{(2n)^{a/2}}|R_n(K)|  \leq \frac{1}{\sqrt{2^a\mu}} \frac{1}{\sqrt{K+1}} \Tr\left((M_n^{(\tau)})^{K+1}\right).
\end{equation}
\end{lem}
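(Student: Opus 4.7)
The plan is to bound the integrand of $R_n(K)$ eigenvalue-by-eigenvalue, exploiting the two ingredients already at hand: the positivity of $M_n^{(\tau)}$ and the spectral gap $1-\lambda_1 \geq \mu/n^a$ from Lemma~\ref{Lem_eigenvalue of Mn tau boundedness}. Writing $0<\lambda_n\leq\cdots\leq\lambda_1<1$ for the eigenvalues of $M_n^{(\tau)}$, the spectral decomposition turns the integral representation \eqref{Rn K integral} into
\begin{equation*}
R_n(K) = \sum_{j=1}^n \lambda_j^{K+1}\int_0^1 \frac{(1-x)^K}{(1-x\lambda_j)^{K+1}}\,dx,
\end{equation*}
and every summand is non-negative, so $|R_n(K)|=R_n(K)$.

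For each inner integral I would perform the M\"obius-type change of variables $y = (1-x)/(1-x\lambda_j)$, which sends $[0,1]$ onto $[0,1]$ in an orientation-reversing way. A direct calculation gives $1-x\lambda_j = (1-\lambda_j)/(1-y\lambda_j)$ and $dx = -(1-\lambda_j)(1-y\lambda_j)^{-2}\,dy$, and the factors involving $1-\lambda_j$ cancel cleanly to leave
\begin{equation*}
\int_0^1 \frac{(1-x)^K}{(1-x\lambda_j)^{K+1}}\,dx = \int_0^1 \frac{y^K}{1-y\lambda_j}\,dy.
\end{equation*}

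Now apply Cauchy--Schwarz to the factorisation $y^K \cdot (1-y\lambda_j)^{-1}$, using the elementary evaluations $\int_0^1 y^{2K}\,dy = 1/(2K+1)$ and $\int_0^1 (1-y\lambda_j)^{-2}\,dy = 1/(1-\lambda_j)$; since $2K+1 \geq K+1$, this yields
\begin{equation*}
\int_0^1 \frac{y^K}{1-y\lambda_j}\,dy \leq \frac{1}{\sqrt{(K+1)(1-\lambda_j)}}.
\end{equation*}
Invoking Lemma~\ref{Lem_eigenvalue of Mn tau boundedness}(ii) in the form $1-\lambda_j \geq 1-\lambda_1 \geq \mu/n^a$ lets me pull $1/\sqrt{1-\lambda_j} \leq \sqrt{n^a/\mu}$ out of the sum, leaving $\sum_j \lambda_j^{K+1} = \Tr (M_n^{(\tau)})^{K+1}$. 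Dividing by $(2n)^{a/2}$ then combines the $n^a$ and $2^a$ factors to produce the claimed inequality.

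There is no serious obstacle here; the only mildly delicate step is verifying the M\"obius substitution, after which everything reduces to one well-chosen application of Cauchy--Schwarz. A naive bound $(1-y\lambda_j)^{-1}\le (1-\lambda_j)^{-1}$ would lose the $\sqrt{K+1}$ factor and produce the weaker estimate $\mathcal O(n^a/(\mu(K+1)))$; it is precisely the splitting in Cauchy--Schwarz that promotes $1/(K+1)$ to $1/\sqrt{K+1}$ at the cost of $\sqrt{1-\lambda_j}$ instead of $1-\lambda_j$, which is exactly the trade-off required by the statement.
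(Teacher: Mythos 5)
Your proof is correct and follows the same overall strategy as the paper: diagonalise $M_n^{(\tau)}$, reduce to bounding $\int_0^1 (1-x)^K(1-\lambda x)^{-(K+1)}\,dx$ for a single eigenvalue $\lambda$, apply Cauchy--Schwarz to obtain the $\sqrt{K+1}$ gain, and then invoke the spectral gap from Lemma~\ref{Lem_eigenvalue of Mn tau boundedness}. The one place you deviate is the intermediate manipulation of the inner integral: the paper expands $(\lambda-1+x)^K/x^{K+1}$ binomially, sums the resulting telescoping identity to land on $-\log(1-\lambda)-\sum_{j=1}^K\lambda^j/j=\int_0^\lambda x^K(1-x)^{-1}\,dx$, and only then applies Cauchy--Schwarz. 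Your involutive M\"obius substitution $y=(1-x)/(1-x\lambda)$ gets to the equivalent integral $\lambda^{K+1}\int_0^1 y^K(1-y\lambda)^{-1}\,dy$ in one step, avoiding the binomial identity entirely; the two integrals are related by the trivial rescaling $x=\lambda y$. So your argument is genuinely cleaner at that intermediate stage, but both proofs then perform exactly the same Cauchy--Schwarz split $y^K\cdot(1-y\lambda)^{-1}$ and the same $2K+1\ge K+1$ relaxation. Everything in your write-up checks out, including the bookkeeping that produces the $1/\sqrt{2^a\mu}$ prefactor.
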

\begin{proof}
By \eqref{Rn K integral} and Lemma~\ref{Lem_eigenvalue of Mn tau boundedness}, it suffices to show that for $0<\lambda<1$,
\begin{equation}
 \lambda^{K+1} \int_0^1 \frac{(1-x)^K}{(1-\lambda x)^{K+1}} dx
        = -\log(1-\lambda) - \sum_{j=1}^K \frac{\lambda^j}{j}
        \leq \frac{\lambda^{K+1}}{\sqrt{2K+1}} \frac{1}{\sqrt{1-\lambda}}.
\end{equation}
For this, note that 
\begin{align*}
    \lambda^{K+1} \int_0^1 \frac{(1-x)^K}{(1-\lambda x)^{K+1}} dx
    &= \int_{0}^\lambda \frac{(\lambda-x)^K}{(1-x)^{K+1}} dx = \int_{1-\lambda}^1 \frac{(\lambda-1+x)^K}{x^{K+1}} dx\\
    &= \int_{1-\lambda}^1 \sum_{j=0}^K \binom{K}{j} (\lambda-1)^j x^{-j-1} dx\\
    &= -\log(1-\lambda) 
    - \sum_{j=1}^K \frac{1}{j} \binom{K}{j} \left((\lambda-1)^j-(-1)^j\right)= -\log(1-\lambda) - \sum_{j=1}^K \frac{\lambda^j}{j}.
\end{align*}
Here, the identity 
$$
   \sum_{j=1}^K \frac{1}{j} \binom{K}{j} \left((\lambda-1)^j-(-1)^j\right)= \sum_{j=1}^K \frac{\lambda^j}{j} 
$$
follows from
$$
  \sum_{j=1}^K  \binom{K}{j} (\lambda-1)^{j-1}=
  \frac{(1+\lambda-1)^K-1}{\lambda-1}
  =\sum_{j=1}^K \lambda^{j-1}. 
$$
The inequality follows by Cauchy-Schwarz, as follows.
\begin{align*}
    -\log(1-\lambda)-\sum_{j=1}^K \frac{\lambda^j}{j}
   & = \sum_{j=K+1}^\infty \frac{\lambda^j}{j}
    = \int_0^\lambda \frac{x^K}{1-x} dx
   \\
   &\leq  \sqrt{\int_0^\lambda x^{2K} dx \int_0^\lambda \frac{dx}{(1-x)^2}}
    = \frac{\lambda^{K+1/2}}{\sqrt{2K+1}} \frac{\sqrt \lambda}{\sqrt{1-\lambda}}.
\end{align*}
\end{proof}

\subsection{Proofs of Theorems~\ref{Thm_CLT}, \ref{thm:mainStrong}, \ref{thm:mainWeak} and \ref{cor:logdetPolylog}} \label{Subsection_proof main}

This section culminates in the proofs of our main results.

\begin{proof}[Proof of Theorem~\ref{Thm_CLT}]
As a consequence of Propositions~\ref{prop:asympTracem} and ~\ref{prop:asympTracem weak}, for any fixed $m$, we have that $\Tr (M_n^{(\tau)})^m$ is of order $n^{a/2}$, where $a$ is given by \eqref{lambda1 upper bdd}.
Then it follows from Proposition~\ref{Prop_generating matrix implementation} (ii) that all the cumulants have order $n^{a/2}$. 
This shows the desired central limit theorem. 
\end{proof}

\begin{proof}[Proof of Theorem~\ref{thm:mainStrong}]
 It is enough to show Theorem~\ref{thm:mainStrong} for $k=0$: 
    \begin{equation*}
        \lim_{n \to \infty} \frac{1}{ \sqrt{2n} } \log p_{2n,0}^{(\tau)}  = - \sqrt{ \frac{1+\tau}{1-\tau} }  \frac{1}{ \sqrt{2\pi} } \zeta \Big( \frac32 \Big).
    \end{equation*}
    The statement for general $k$ can be proved entirely analogously to \cite[Section 2.2]{MR3563192}, which only needs the result of Lemma \ref{Lem_eigenvalue of Mn tau boundedness} and Proposition \ref{prop:asympTracem}.

Let $m_\tau\geq \frac{1+\tau}{1-\tau}$ be an integer. For any positive integer $K>m_\tau$, we can use the inequality from Lemma \ref{lem:inequalityTrMNtaum} to show that
\begin{align*}
    \frac{1}{\sqrt{2n}}\sum_{m=m_\tau}^K \frac{1}{m} \Tr (M_n^{(\tau)})^m 
    \leq \sqrt{\frac{1+\tau}{1-\tau}}\frac{1+2/n}{\sqrt{2\pi}}  \sum_{m=m_\tau}^K \frac{1}{m^{3/2}}
    + \frac{\log K}{4\sqrt{2n}}
    + \frac{\sqrt K}{8 n \sqrt{2\pi}} \frac{1-\tau}{1+\tau}.
\end{align*}
Now we take $K=K_n$ such that both $n^{-1/2} K_n\to\infty$ and $K_n = \mathcal O(n \log^2 n)$ as $n\to\infty$. Then we obtain
\begin{align*}
    \frac{1}{\sqrt{2n}}\sum_{m=m_\tau}^{K_n} \frac{1}{m} \Tr (M_n^{(\tau)})^m \leq 
    \sqrt{\frac{1+\tau}{1-\tau}}\frac{1+2/n}{\sqrt{2\pi}}  \sum_{m=m_\tau}^K \frac{1}{m^{3/2}} + C \frac{\log n}{\sqrt n}
\end{align*}
for some constant $C>0$. By Lemma~\ref{cor:remainder} the remainder satisfies
\begin{align*}
    \frac{1}{\sqrt{2n}} |R_n(K_n)| \leq c \frac{\sqrt n}{K_n}
\end{align*}
for some constant $c>0$. We infer that
$$
  -\frac{1}{\sqrt{2n}}\log p_{2n,0}^{(\tau)} 
    \leq \sum_{m=1}^{m_\tau-1} \frac{1}{m} \Tr (M_n^{(\tau)})^m +
    \sqrt{\frac{1+\tau}{1-\tau}}\frac{1+2/n}{\sqrt{2\pi}}  \sum_{m=m_\tau}^{K_n} \frac{1}{m^{3/2}} + C \frac{\log n}{\sqrt n}
    + c \frac{\sqrt n}{K_n}.
$$
For any fixed number $K$, we have
\begin{align*}
   -\frac{1}{\sqrt{2n}}\log p_{2n,0}^{(\tau)}  \geq  
   \frac{1}{\sqrt{2n}}\sum_{m=1}^{K} \frac{1}{m} \Tr (M_n^{(\tau)})^m.
\end{align*}
Taking the limit $n\to\infty$, and using Proposition \ref{prop:asympTracem}, we find that
\begin{align*}
    \sqrt{\frac{1+\tau}{1-\tau}}\frac{1}{\sqrt{2\pi}}  \sum_{m=1}^K \frac{1}{m^{3/2}} \leq 
    -\lim_{n\to\infty} \frac{1}{\sqrt{2n}}\log p_{2n,0}^{(\tau)}  \leq
    \sqrt{\frac{1+\tau}{1-\tau}} \frac{\zeta(3/2)}{\sqrt{2\pi}}.
\end{align*}
Since this is true for any positive integer $K$, we conclude that
\begin{align*}
\lim_{n\to\infty} \frac{1}{\sqrt{2n}}\log p_{2n,0}^{(\tau)}  = -\sqrt{\frac{1+\tau}{1-\tau}}\frac{\zeta(3/2)}{\sqrt{2\pi}}.
\end{align*}
\end{proof}

\begin{proof}[Proof of Theorem~\ref{thm:mainWeak}]
   For $k=0$, this immediately follows from Lemma~\ref{Lem_eigenvalue of Mn tau boundedness} and Proposition~\ref{prop:asympTracem weak}. 
   Furthermore, the statement for general $k$ can be again proved along the same lines to \cite[Section 2.2]{MR3563192}, which only needs the result of Lemma \ref{Lem_eigenvalue of Mn tau boundedness} and Proposition \ref{prop:asympTracem weak}. 
\end{proof}

\begin{proof}[Proof of Theorem \ref{cor:logdetPolylog}]
    We already proved the statement for $x=0$ in Theorem~\ref{thm:mainStrong}. By Lemma \ref{lem:inequalityTrMNtaum}, for every $x\in(0,1]$, we have for $m_\tau\geq\frac{1+\tau}{1-\tau}$ that
    \begin{align}
    \begin{split}
    &\quad \frac{1}{\sqrt{2n}}\sum_{m=m_\tau}^\infty  \Tr (M_n^{(\tau)})^m \frac{(1-x)^m}{m}
    \\ 
    &\leq \sqrt{\frac{1+\tau}{1-\tau}}\frac{1+2/n}{\sqrt{2\pi}}  \sum_{m=m_\tau}^\infty \frac{(1-x)^m}{m^{3/2}} + \frac{1}{4\sqrt{2n}} \sum_{m=1}^\infty \frac{(1-x)^m}{m}
    + \frac{1-\tau}{1+\tau} \frac{1+1/n}{8 n \sqrt{2\pi}} \sum_{m=1}^\infty \frac{(1-x)^m}{\sqrt m}
    \\ \label{eq:estimateWith1-x}
    &= \sqrt{\frac{1+\tau}{1-\tau}}\frac{1+2/n}{\sqrt{2\pi}}  \sum_{m=m_\tau}^\infty \frac{(1-x)^m}{m^{3/2}}
    + \frac{\log 1/x}{4\sqrt{2n}}
    +\sqrt{\frac{1+\tau}{1-\tau}}\frac{\operatorname{Li}_{1/2}(1-x)}{8n\sqrt{2\pi}}(1+1/n).   
    \end{split}
\end{align}
Hence for every integer $K>0$ we have
\begin{align*}
    \sqrt{\frac{1+\tau}{1-\tau}} \frac{1}{\sqrt{2\pi}} \sum_{m=1}^K \frac{(1-x)^m}{m^{3/2}}
    \leq -\lim_{n\to\infty} \log \bigg(\sum_{k=0}^{2n} p_{2n,2k}^{(\tau)}  x^{2k}\bigg)
    \leq \sqrt{\frac{1+\tau}{1-\tau}}\frac{\operatorname{Li}_{3/2}(1-x)}{\sqrt{2\pi}}.
\end{align*}
Since this is true for any $K$, we conclude that for any $x\in(0,1]$
\begin{align*}
    \lim_{n\to\infty} \log \bigg(\sum_{k=0}^{2n} p_{2n,2k}^{(\tau)} x^{2k}\bigg)
    = -\sqrt{\frac{1+\tau}{1-\tau}}\frac{\operatorname{Li}_{3/2}(1-x)}{\sqrt{2\pi}}.
\end{align*}
Furthermore, we notice that for all $x\in[0,1]$
\begin{align*}
    \sum_{m=K+1}^\infty \frac{(1-x)^m}{m^{3/2}}
    \leq \int_K^\infty \frac{(1-x)^m}{m^{3/2}} dm
    = \frac{(1-x)^K}{\sqrt K} + \sqrt \pi \sqrt{-\log(1-x)} \erfc(\sqrt{- K \log(1-x)}),
\end{align*}
where we read the expression on the right-hand side as a limit (which is $0$) when $x=1$. This, combined with \eqref{eq:estimateWith1-x}, gives a uniform bound for all $x\in [r,1]$ for any fixed $r>0$.

Let us now turn to the case $x\in[1,2]$. We can write
\begin{align*}
    \sum_{m=1}^\infty \Tr (M_n^{(\tau)})^m \frac{(1-x)^m}{m}
    &= \sum_{m=1}^\infty (-1)^m \Tr (M_n^{(\tau)})^m \frac{|1-x|^m}{m}\\
    &= \sum_{m=1}^\infty \Tr (M_n^{(\tau)})^m \frac{|1-x|^m}{m}
    - 2 \sum_{m=1}^\infty \Tr (M_n^{(\tau)})^{2m} \frac{|1-x|^{2m}}{2m}.
\end{align*}
Both series in the last line can be treated with the same arguments that we used for the $x\in(0,1]$ case, yielding
\begin{align*}
    \lim_{n\to\infty} \frac{1}{\sqrt{2n}} \sum_{m=1}^\infty \Tr (M_n^{(\tau)})^m \frac{(1-x)^m}{m}
     &= \sqrt{\frac{1+\tau}{1-\tau}} \frac{1}{\sqrt{2\pi}}
    \left(\operatorname{Li}_{3/2}(|1-x|)-\frac{1}{\sqrt 2} \operatorname{Li}_{3/2}(|1-x|^2)\right)\\
    &= \sqrt{\frac{1+\tau}{1-\tau}} \frac{\operatorname{Li}_{3/2}(1-x)}{\sqrt{2\pi}}.
\end{align*}
The uniform convergence follows along similar lines. 
\end{proof}

\section{Analysis of the generating function} \label{Section_generating matrix}

The focus of this section is on examining the generating function for the number of real eigenvalues.

\subsection{Proof of Proposition~\ref{Prop_finite N}}  \label{Subsection_generating function}

We first prove Proposition~\ref{Prop_finite N}.

\begin{proof}[Proof of Proposition~\ref{Prop_finite N}]
Let us write 
\begin{equation} \label{inner product}
\langle f,g \rangle := \langle f,g \rangle_\R+\langle f,g \rangle_\C,
\end{equation}
where
\begin{align}
&\langle f,g \rangle_\R := \frac12 \int_{\R^2} \,dx \,dy \, e^{-\frac{x^2+y^2}{2(1+\tau)} } \sgn(y-x) f(x) g(y),
\\
&\langle f,g \rangle_\C := i \int_{\R} \,dx \int_0^\infty \,dy \, e^{\frac{y^2-x^2}{1+\tau} } \erfc\Big( \sqrt{ \frac{2}{1-\tau^2} y}\Big) [f(x+iy)g(x-iy)- g(x+iy)f(x-iy)].
\end{align}
In terms of the scaled monic Hermite polynomials 
\begin{equation} \label{Ck scaled Hermite}
C_k(z):= \Big( \frac{\tau}{2} \Big)^{k/2} H_k\Big( \frac{z}{ \sqrt{2\tau} } \Big),
\end{equation}
we define
\begin{equation} \label{SOP}
q_{2j}(x):=C_{2j}(x), \qquad q_{2j+1}(x):=C_{2j+1}(x)-2j C_{2j-1}(x). 
\end{equation}
Then by \cite[Theorem 1]{MR2430570}, $\{q_j\}$ forms a family of monic skew-orthogonal polynomials with respect to \eqref{inner product}. 
Here, the skew-norm $r_j$ is given by  
\begin{equation} \label{rj skew norm}
r_j:= \langle  q_{2j},q_{2j+1}  \rangle =\sqrt{2\pi} \,(1+\tau) \Gamma(2j+1). 
\end{equation}
We also write 
\begin{equation} \label{A skew norm real}
\AA_{j,k}=\langle q_{j-1},q_{k-1} \rangle_\R.  
\end{equation}

Let 
\begin{equation} \label{def of g}
g(z) \equiv g^{(\tau)}_{2n}(z):=\sum_{k=0}^{n} z^k p_{2n,2k}^{(\tau)} . 
\end{equation}
Then along the lines of the proof of \cite[Lemma 2.1]{MR3563192}, it follows that 
\begin{equation}
\begin{split} \label{g 2n det Ajk}
g^{(\tau)}_{2n}(z) 
= \det\Big[ \delta_{jk}+\frac{z-1 }{ \sqrt{r_{j-1} r_{k-1} } }\textbf{A}_{2j-1,2k} \Big]_{j,k=1}^n. 
\end{split}
\end{equation}

Therefore it suffices to evaluate \eqref{A skew norm real}. 
For this purpose, let
\begin{equation}
\begin{split}
I_{j,k} & = \int_{ \R^2 } \,dx\,dy\, e^{ -\frac{ x^2+y^2 }{ 2(1+\tau) } } C_{2j+1}(x) C_{2k}(y)\sgn(y-x).
\end{split}
\end{equation}
Then by \eqref{SOP}, $\AA_{2j-1,2k}$ can be written in terms of $I_{j,k}$ as 
\begin{align} \label{A I jk}
\AA_{2j-1,2k} 
&= -\frac12 \Big( I_{k-1,j-1}-2(k-1)I_{ k-2,j-1} \Big).
\end{align}
On the other hand, by \cite[Eq.(5.9)]{MR2430570}, $I_{j,k}$ satisfies the recurrence relation
\begin{equation} \label{I jk recur}
I_{j+1,k}=(2j+2)I_{j,k}-2 \xi_{j,k}, \qquad \xi_{j,k}=(1+\tau) \int_\R e^{ -\frac{x^2}{1+\tau} } C_{2j+2}(x)C_{2k}(x) \,dx. 
\end{equation}
Combining \eqref{A I jk} and \eqref{I jk recur}, we have
\begin{equation} \label{A xi jk}
\AA_{2j-1,2k}  = \xi_{ k-2,j-1 }. 
\end{equation}

Now it remains to evaluate $I_{j,k}.$ 
For this, we use the following integration formula that can be found in the proof of \cite[Lemma 5.2]{byun2021real}: for $j+k$ even, 
\begin{align}
	\begin{split}
&\quad \int_\R e^{-\frac{x^2}{1+\tau}} H_j\Big( \frac{x}{\sqrt{2\tau}} \Big)H_k\Big( \frac{x}{\sqrt{2\tau}} \Big) \,dx
\\
&=\Big( \frac{1+\tau}{1-\tau} \Big)^{\frac12} \Big( \frac{\tau}{2} \Big)^{ -\frac{j+k}{2} } \Gamma\Big(\frac{j+k+1}{2}\Big) {}_2F_1\Big(\frac{j-k+1}{2},\frac{k-j+1}{2};\frac{1-j-k}{2};-\frac{\tau}{1-\tau}\Big).
	\end{split}
\end{align}
Then by \eqref{Ck scaled Hermite}, we have
\begin{equation} \label{xi jk}
\begin{split}
\xi_{j,k} & =(1+\tau) \Big( \frac{\tau}{2} \Big)^{j+k+1} \int_\R e^{ -\frac{x^2}{1+\tau} } H_{2j+2}\Big( \frac{x}{ \sqrt{2\tau} } \Big)H_{2k}\Big( \frac{x}{ \sqrt{2\tau} } \Big)  \,dx
\\
&= (1+\tau) \Big( \frac{1+\tau}{1-\tau} \Big)^{\frac12} \Gamma\Big(j+k+\frac{3}{2}\Big) {}_2F_1\Big(j-k+\frac{3}{2},k-j-\frac{1}{2};-j-k-\frac12;-\frac{\tau}{1-\tau}\Big).
\end{split}
\end{equation}
Then it follows from \eqref{A xi jk} and \eqref{xi jk} that 
\begin{equation}
\begin{split} \label{A jk evaluation}
\AA_{2j-1,2k}  &=  (1+\tau) \Big( \frac{\tau}{2} \Big)^{j+k-2} \int_\R e^{ -\frac{x^2}{1+\tau} } H_{2j-2}\Big( \frac{x}{ \sqrt{2\tau} } \Big) H_{2k-2}\Big( \frac{x}{ \sqrt{2\tau} } \Big)   \,dx
\\
&= (1+\tau)   \Big( \frac{1+\tau}{1-\tau} \Big)^{\frac12}  \Gamma\Big(j+k-\frac{3}{2}\Big) {}_2F_1\Big(k-j+\frac{1}{2},j-k+\frac{1}{2};-j-k+\frac52;-\frac{\tau}{1-\tau}\Big).
\end{split}
\end{equation}
Combining \eqref{g 2n det Ajk}, \eqref{rj skew norm} and \eqref{A jk evaluation}, we obtain the desired identity \eqref{Mn tau jk}, where the second expression follows from the reflection formula of the Gamma function
\begin{equation}\label{Gamma reflection}
\Gamma(z)\Gamma(1-z)=\pi/\sin(\pi z).
\end{equation} 
\end{proof}

\subsection{Evaluations of trace powers} \label{Subsection_generating matrix evaluation}

In this subsection, we derive two different expressions of $\Tr (M_n^{(\tau)})^m$ in Lemmas~\ref{lem:TrMnmIntegral} and \ref{Lem_Tr M^m sum}.

\begin{lem} \label{lem:TrMnmIntegral}
We have  
\begin{align}
\Tr (M_n^{(\tau)})^m = 
\int_{\mathbb R^m}  K_n^{(\tau)}(x_1,x_2) K_n^{(\tau)}(x_2,x_3)\cdots K_n^{(\tau)}(x_m,x_1) \, dx_1 \cdots dx_m,
\end{align}
where
\begin{align} \label{Kn tau Hermite}
K_n^{(\tau)}(x,y) &:= \frac{1}{\sqrt{2\pi}} e^{-\frac{x^2+y^2}{2(1+\tau)}}
\sum_{j=0}^{n-1} \frac{(\tau/2)^{2j}}{(2j)!} H_{2j}\Big(\frac{x}{\sqrt{2\tau}}\Big)
H_{2j}\Big(\frac{y}{\sqrt{2\tau}}\Big).
\end{align}
\end{lem}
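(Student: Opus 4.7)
The plan is to recognize $M_n^{(\tau)}$ as a Gram-type matrix built from an explicit family of $L^2(\mathbb{R})$ functions, and then expand the trace of its $m$-th power as a nested integral whose kernel collapses to $K_n^{(\tau)}$.

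Concretely, I would begin from the first representation of $M_n^{(\tau)}(j,k)$ given in Proposition~\ref{Prop_finite N}, namely
\begin{equation*}
M_n^{(\tau)}(j,k) = \frac{1}{\sqrt{2\pi}}\,\frac{(\tau/2)^{j+k-2}}{\sqrt{\Gamma(2j-1)\Gamma(2k-1)}}\int_{\mathbb{R}} e^{-\frac{x^2}{1+\tau}}H_{2j-2}\!\left(\tfrac{x}{\sqrt{2\tau}}\right)H_{2k-2}\!\left(\tfrac{x}{\sqrt{2\tau}}\right)dx,
\end{equation*}
and introduce the auxiliary functions
\begin{equation*}
\phi_j(x) := \frac{1}{(2\pi)^{1/4}}\,\frac{(\tau/2)^{j}}{\sqrt{(2j)!}}\,e^{-\frac{x^2}{2(1+\tau)}}\,H_{2j}\!\left(\tfrac{x}{\sqrt{2\tau}}\right),\qquad j=0,1,\ldots,n-1.
\end{equation*}
A direct check (splitting the exponential $e^{-x^2/(1+\tau)}$ into two factors $e^{-x^2/(2(1+\tau))}$ and shifting the indices $j\mapsto j-1$, $k\mapsto k-1$) shows that
\begin{equation*}
M_n^{(\tau)}(j,k)=\int_{\mathbb{R}}\phi_{j-1}(x)\,\phi_{k-1}(x)\,dx,\qquad K_n^{(\tau)}(x,y)=\sum_{j=0}^{n-1}\phi_j(x)\,\phi_j(y),
\end{equation*}
so that $M_n^{(\tau)}$ is exactly the Gram matrix of $\{\phi_0,\ldots,\phi_{n-1}\}$ and $K_n^{(\tau)}$ is its reproducing kernel.

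Next I would expand the trace as
\begin{equation*}
\Tr(M_n^{(\tau)})^m=\sum_{j_1,\ldots,j_m=1}^n \prod_{i=1}^m M_n^{(\tau)}(j_i,j_{i+1}),\qquad j_{m+1}:=j_1,
\end{equation*}
and substitute the integral representation for each factor, using an independent integration variable $x_i$ for the $i$-th factor. After applying Fubini (justified by absolute convergence, since each $\phi_j\in L^2(\mathbb{R})$ and the sums over $j_i\in\{1,\ldots,n\}$ are finite), this becomes
\begin{equation*}
\int_{\mathbb{R}^m}\Bigl(\sum_{j_1,\ldots,j_m=1}^n\prod_{i=1}^m \phi_{j_i-1}(x_i)\,\phi_{j_{i+1}-1}(x_i)\Bigr)dx_1\cdots dx_m.
\end{equation*}

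The last step is combinatorial: each index $j_i$ appears exactly twice in the product, once attached to $x_i$ and once attached to $x_{i-1}$ (cyclically, with $x_0:=x_m$). Hence the sum factorizes as
\begin{equation*}
\prod_{i=1}^m\Bigl(\sum_{j_i=1}^n\phi_{j_i-1}(x_{i-1})\,\phi_{j_i-1}(x_i)\Bigr)=\prod_{i=1}^m K_n^{(\tau)}(x_{i-1},x_i),
\end{equation*}
and using the symmetry $K_n^{(\tau)}(x,y)=K_n^{(\tau)}(y,x)$ together with a cyclic relabelling of the integration variables yields the asserted expression $K_n^{(\tau)}(x_1,x_2)K_n^{(\tau)}(x_2,x_3)\cdots K_n^{(\tau)}(x_m,x_1)$.

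There is no real obstacle beyond getting the cyclic bookkeeping of the indices right; the key conceptual input is the Gram/reproducing-kernel factorization in the first step, after which everything reduces to rearranging a finite sum under an integral.
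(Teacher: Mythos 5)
Your proof is correct and follows essentially the same route as the paper: expand $\Tr(M_n^{(\tau)})^m$ as a cyclic sum over indices, substitute the Hermite integral representation of each matrix entry, and regroup by interchanging sum and integral so that the Christoffel--Darboux-type sum $\sum_{j=0}^{n-1}\phi_j(x)\phi_j(y)$ appears and is recognized as $K_n^{(\tau)}(x,y)$. The paper's proof states this in one line ("plugging in the first line of \eqref{Mn tau jk}"); your version makes the Gram-matrix/reproducing-kernel bookkeeping explicit, which is the same computation spelled out in more detail.
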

\begin{proof}
By definition, 
\begin{equation}
\Tr (M_n^{(\tau)})^m  = \sum_{ j_1, j_2, \dots, j_m=1 }^n M_n^{(\tau)}(j_1,j_2) M_n^{(\tau)}(j_2,j_3) \cdots M_n^{(\tau)}(j_{m-1},j_m) M_n^{(\tau)}(j_m,j_1).
\end{equation}
Then the expression \eqref{Kn tau Hermite} follows by plugging in the first line of \eqref{Mn tau jk}.
\end{proof}

\begin{rem}
The kernel $K_n^{(\tau)}$ can also be written in terms of the Laguerre 
polynomials 
\begin{equation}
	L_j^\nu(z):= \sum_{k=0}^j \frac{\Gamma(j+\nu+1)}{(j-k)!\,\Gamma(\nu+k+1)} \frac{(-z)^k}{k!} 
\end{equation} 
as 
\begin{equation}
K_n^{(\tau)}(x,y)= \frac{ 1 }{ \sqrt{2} } e^{-\frac{x^2+y^2}{2(1+\tau)}}
\sum_{j=0}^{n-1}  \frac{\tau^{2j}\, j! }{ \Gamma(j+\frac12) } L_{j}^{-1/2}\Big(\frac{x^2}{2\tau}\Big)
L_{j}^{-1/2}\Big(\frac{y^2}{2\tau}\Big).   \label{Kn tau Laguerre}
\end{equation}
This follows from the relation
\begin{equation} \label{Hermite Laguerre}
H_{2n}(x)= (-1)^n 2^{2n} n! L_n^{-1/2} (x^2)
\end{equation}
and the duplication formula of the gamma function:
\begin{equation} \label{Gamma duplication}
\Gamma(2z+1) = \frac{1}{\sqrt{\pi}} 2^{2z} \Gamma(z+\frac12) \Gamma(z+1). 
\end{equation}
\end{rem}

\begin{rem} \label{rem:operatorInterpretation}
Let $T_n^{(\tau)}$ be defined as the operator
\begin{align*}
    f &\mapsto T_n^{(\tau)}(f):= \int_{\mathbb R} f(y) K_n^{(\tau)}(x, y) dy. 
\end{align*}
Then we have
\begin{align*}
    -\log p_{2n,0}^{(\tau)}  &= \sum_{m=1}^\infty \frac{1}{m} \Tr (M_n^{(\tau)})^m= \sum_{m=1}^\infty \frac{1}{m} \int_{\mathbb R^m} K_n^{(\tau)}(x_1, x_2) \cdots K_n^{(\tau)}(x_m, x_1) \prod_{k=1}^m \,dx_k
    \\
    &= \sum_{m=1}^\infty \frac{1}{m} \Tr\left((T_n^{(\tau)})^m\right) = -\log \det(1-T_n^{(\tau)}),
\end{align*}
where the determinant in the last line is the Fredholm determinant,
\begin{align*}
    \det(1-T_n^{(\tau)}) = 1+ \sum_{m=1}^\infty \frac{(-1)^m}{m!} \int_{\mathbb R^m} \det \left(K_n^{(\tau)}(x_j, x_k)\right)_{1\leq j, k\leq m} \prod_{k=1}^m \,dx_k.
\end{align*}
\end{rem}

\begin{ex}
Note that by \eqref{Hermite to monomial}, 
\begin{equation}
K_n^{(0)}(x,y)= \frac{ e^{-\frac{x^2+y^2}{2} }  }{ \sqrt{2\pi} } \sum_{j=0}^{n-1} \frac{ (xy)^{2j} }{ (2j)! } =  \frac{ e^{-\frac{x^2+y^2}{2} }  }{ \sqrt{2\pi} } \cosh_{n-1}(xy), 
\end{equation}
where $\cosh_n(x)=\sum_{j=0}^n x^{2j}/(2j)!.$
Therefore 
\begin{align*}
&\quad \Tr (M_n^{(\tau)})^m  =
\int_{\mathbb R^m}  \frac{ e^{-x_1^2-x_2^2-\dots -x_m^2} }{(2\pi)^{m/2}} \cosh_{n-1}(x_1x_2) \cosh_{n-1}(x_2x_3) \dots \cosh_{n-1}(x_mx_1) \, dx_1 \cdots dx_m
\\
&= \int_0^\infty \frac{ dx_1 }{ \sqrt{2\pi x_1} }  \int_0^\infty \frac{ dx_2 }{ \sqrt{2\pi x_2} }  \dots  \int_0^\infty \frac{ dx_m }{ \sqrt{2\pi x_m} } e^{-x_1-\dots-x_m} \cosh_{n-1}(\sqrt{x_1x_2})\dots \cosh_{n-1}(\sqrt{x_mx_1}),
\end{align*}
which corresponds to the integral representation in \cite[Eq.(A.23)]{MR3563192}. 
\end{ex}

Next, we show the following. 

\begin{lem} \label{Lem_Tr M^m sum}
Let $j_1,j_2,\dots, j_m=j_0 \in \mathbb{N}.$ Then for any $n,m \in \mathbb{N}$, we have  
\begin{equation} \label{Tr M^m sum evaluation}
\Tr (M_n^{(\tau)})^m =   \sum_{j_1,\dots,j_m=0}^{n-1}  \Big(\frac{ 1+\tau }{ 2 }\Big)^{\frac{m}{2}+ 2\sum_{k=1}^m j_k  } \prod_{k=1}^m  \bigg( \sum_{ l=0 }^{j_{k-1}}    \frac{ (1-\tau)^{2l }(1+\tau)^{-2l}  \, (2j_k)!  }{ 2^{2l} l! (l+j_{k}-j_{k-1})! (2j_{k-1}-2l)!  } \bigg) .
\end{equation}
\end{lem}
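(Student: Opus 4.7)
The plan is to start from the integral representation in Lemma~\ref{lem:TrMnmIntegral} and expand the kernel $K_n^{(\tau)}$ using its explicit Hermite polynomial form \eqref{Kn tau Hermite}. This turns the trace into
\begin{align*}
\Tr (M_n^{(\tau)})^m = \frac{1}{(2\pi)^{m/2}}\sum_{j_1,\dots,j_m=0}^{n-1} \prod_{k=1}^m \frac{(\tau/2)^{2j_k}}{(2j_k)!}\, I_k,
\end{align*}
where, because each variable $x_k$ appears in exactly two consecutive kernels (those carrying indices $j_{k-1}$ and $j_k$, with the cyclic convention $j_0=j_m$), the integrand decouples into a product of single-variable integrals
\begin{align*}
I_k = \int_{\R} e^{-x^2/(1+\tau)}\, H_{2j_{k-1}}\!\Big(\tfrac{x}{\sqrt{2\tau}}\Big) H_{2j_k}\!\Big(\tfrac{x}{\sqrt{2\tau}}\Big)\, dx.
\end{align*}

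Next I would evaluate $I_k$ by a generating-function argument. Applying the bilinear generating series for Hermite polynomials to the substitution $x=\sqrt{1+\tau}\,u$ gives
\begin{align*}
\sum_{p,q\ge 0}\frac{s^p t^q}{p!\,q!}\int_\R e^{-u^2}H_p(au)H_q(au)\,du = \sqrt{\pi}\, e^{(a^2-1)(s^2+t^2)+2a^2 st},
\end{align*}
with $a^2=(1+\tau)/(2\tau)$, so that $a^2-1=(1-\tau)/(2\tau)$. Extracting the coefficient of $s^{2j_{k-1}}t^{2j_k}$ via the trinomial expansion of the right-hand side produces a finite sum in a single index, and multiplying by $\sqrt{1+\tau}$ for the Jacobian and by $(2j_{k-1})!(2j_k)!$ one obtains
\begin{align*}
I_k = \sqrt{\pi(1+\tau)}\,(2j_{k-1})!(2j_k)!\,\tau^{-(j_{k-1}+j_k)}\!\!\sum_{l=0}^{\min(j_{k-1},j_k)}\!\! \frac{(1-\tau)^{j_{k-1}+j_k-2l}(1+\tau)^{2l}}{2^{j_{k-1}+j_k-2l}(j_{k-1}-l)!(j_k-l)!(2l)!}.
\end{align*}

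Finally I would perform the change of summation variable $l\mapsto j_{k-1}-l$ (after which the range automatically truncates when $j_k<j_{k-1}$ thanks to the factor $1/(l+j_k-j_{k-1})!$ vanishing at negative arguments, justifying extension to the full range $l=0,\dots,j_{k-1}$); this rewrites each $I_k$ with the combinatorial factor matching the statement of the lemma, modulo the scalar prefactor $(1-\tau)^{j_k-j_{k-1}}(1+\tau)^{2j_{k-1}}2^{-(j_k-j_{k-1})}(2j_k)!^{-1}$. Multiplying across $k$ and using cyclic invariance of the index set (so that $\prod_k(2j_{k-1})!=\prod_k(2j_k)!$, $\prod_k(1-\tau)^{j_k-j_{k-1}}=1$, $\prod_k(1+\tau)^{2j_{k-1}}=(1+\tau)^{2\sum_k j_k}$, and likewise for the powers of $2$ and $\tau$) collapses all non-$T_k$ contributions to the single prefactor $\bigl((1+\tau)/2\bigr)^{m/2+2\sum_k j_k}$, yielding \eqref{Tr M^m sum evaluation}. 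The main obstacle is not conceptual but rather this delicate bookkeeping of powers of $\tau$, $1\pm\tau$ and $2$; the cleanest way to avoid arithmetic slips is to track all the prefactors in a single tabulation and to verify the final formula against the trivial case $m=1$, $j_{k-1}=j_k$ where $\Tr M_n^{(\tau)}$ is already known from \eqref{EN tau<1}.
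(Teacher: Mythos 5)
Your proof is correct and takes essentially the same route as the paper's: both start from Lemma~\ref{lem:TrMnmIntegral}, factor the $m$-fold integral into one-variable Hermite--Gaussian integrals indexed cyclically by $(j_{k-1},j_k)$, and evaluate each via a generating-function/residue computation of the quadratic form $A(s^2+t^2)+Bst$ with $A=(1-\tau)/(2\tau)$, $B=(1+\tau)/\tau$. The only difference is notational --- the paper uses the contour-integral representation of $H_k$ and extracts residues after the Gaussian integration, while you write the bilinear generating function explicitly, compute $I_k$ in closed form, and then shift the summation index $l\mapsto j_{k-1}-l$, correctly observing that the factor $1/(l+j_k-j_{k-1})!$ vanishes for $l<j_{k-1}-j_k$ so the range extends harmlessly to $l=0,\dots,j_{k-1}$; all the cyclic cancellations you invoke check out and give the advertised prefactor $\bigl((1+\tau)/2\bigr)^{m/2+2\sum_k j_k}$.
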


\begin{rem} \label{eq:suml=0j1=...=jm}
For given $l$ in the inner summation, it suffices to consider the summands with 
\begin{equation}
 j_k \ge l, \qquad j_k \ge j_{k-1}-l ,\qquad k=1,2,\dots, m.
\end{equation}
In particular, if $l=0,$ this gives $j_1=j_2=\dots=j_m.$
\end{rem}

\begin{rem}
Recall that for $\tau=1$, we have $M_n^{(1)}=I.$ 
Thus $\Tr (M_n^{(1)})^m  = n. $
This can be checked using the identity \eqref{Tr M^m sum evaluation}; namely, for $\tau=1$, it reads
\begin{align} \label{Tr M^m sum evaluation tau1} 
\Tr (M_n^{(1)})^m &=  \sum_{j_1,\dots,j_m=0}^{n-1}  \prod_{k=1}^m      \frac{  (2j_k)!    }{  (j_{k}-j_{k-1})! (2j_{k-1})!  } =   \sum_{j_1,\dots,j_m=0}^{n-1}  \prod_{k=1}^m      \frac{  1   }{  (j_{k}-j_{k-1})!  } =  \sum_{j=0}^{n-1}  \prod_{k=1}^n \frac{1}{0!}  =n.
\end{align}
\end{rem}

\begin{proof}[Proof of Lemma~\ref{Lem_Tr M^m sum}]
Using the contour integral representation 
\begin{equation}
H_k(x)= \frac{k!}{2\pi i} \oint \frac{ e^{ 2\zeta x-\zeta^2 } }{  \zeta^{k+1} }\,d\zeta,  
\end{equation}
and \eqref{Kn tau Hermite}, we have
\begin{align}
 K_n^{(\tau)}(x,y) =  \frac{1}{\sqrt{2\pi}} e^{-\frac{x^2+y^2}{2(1+\tau)}}
\sum_{j=0}^{n-1} (2j)! \Big(\frac{\tau}{2}\Big)^{2j}  \underset{\zeta,\eta=0}{\textup{Res}}  \bigg[ \frac{ e^{ 2\zeta \frac{x}{\sqrt{2\tau}} +2\eta \frac{y}{\sqrt{2\tau}} -\zeta^2-\eta^2 } }{  (\zeta \eta)^{2j+1} } \bigg].
\end{align}
Here and in the sequel, we shall use the shorthand notation 
$$
\underset{\zeta_1,\dots, \zeta_k=0}{\text{Res}}[f(\zeta,\eta)]:=\underset{\zeta_1=0}{\text{Res}}[ \cdots  [ \underset{\zeta_k=0}{\text{Res}}[f(\zeta_1,\dots,\zeta_k)]].
$$
Thus we obtain 
\begin{align*}
&\quad  K_n^{(\tau)}(x_1,x_2) K_n^{(\tau)}(x_2,x_3)\cdots K_n^{(\tau)}(x_m,x_1)
\\
&= \frac{1}{(2\pi)^{m/2}} e^{ -\frac{x_1^2+\dots x_m^2 }{ 1+\tau }  } \sum_{j_1,\dots,j_m=0}^{n-1} (2j_1)! \dots (2j_m)! \Big(\frac{\tau}{2}\Big)^{2(j_1+\dots+j_m)}
\\
&\quad \times   \underset{ \substack{\zeta_{k},\eta_{k}=0 \\ k=1,\dots,m} }{\textup{Res}}  \bigg[   \prod_{k=1}^m \exp\Big( \sqrt{ \frac{2}{\tau} } (\zeta_{k}+\eta_{k}) x_k -\zeta_{k}^2-\eta_{k}^2 \Big)  \frac{ 1 }{  \zeta_{k}^{2j_{k-1}+1} \eta_{k}^{ 2j_k+1 }   } \bigg].
\end{align*}
Since
\begin{equation}
\int_\R e^{ -\frac{ x^2 }{ 1+\tau } +\sqrt{ \frac{2}{\tau} } (\zeta+\eta) x } \,dx= \sqrt{\pi(1+\tau)} \exp\Big( \frac{1+\tau}{2\tau} (\zeta+\eta)^2 \Big),
\end{equation}
it follows from Lemma~\ref{lem:TrMnmIntegral} that
\begin{align}
\begin{split}
 \Tr (M_n^{(\tau)})^m 
&= \Big( \frac{1+\tau}{2} \Big)^{m/2}  \sum_{j_1,\dots,j_m=0}^{n-1} (2j_1)! \dots (2j_m)! (\tau/2)^{2(j_1+\dots+j_m)}
\\
&\quad \times  \underset{ \substack{\zeta_{k},\eta_{k}=0 \\ k=1,\dots,m} }{\textup{Res}}   \bigg[   \prod_{k=1}^m \exp\Big( \frac{1+\tau}{2\tau} (\zeta_{k}+\eta_{k})^2-\zeta_{k}^2-\eta_{k}^2 \Big)   \frac{ 1 }{  \zeta_{k}^{2j_{k-1}+1} \eta_{k}^{ 2j_k+1 }   } \bigg].
\end{split}
\end{align}

Note that 
\begin{align*}
&\quad   \underset{ \substack{\zeta_{k},\eta_{k}=0 \\ k=1,\dots,m} }{\textup{Res}}   \bigg[   \prod_{k=1}^m \exp\Big( \frac{1+\tau}{2\tau} (\zeta_{k}+\eta_{k})^2-\zeta_{k}^2-\eta_{k}^2 \Big)   \frac{ 1 }{  \zeta_{k}^{2j_{k-1}+1} \eta_{k}^{ 2j_k+1 }   } \bigg]
\\
&=\prod_{k=1}^m \underset{\zeta_{k},\eta_{k}=0}{\textup{Res}}  \bigg[    \exp\Big( \frac{1+\tau}{2\tau} (\zeta_{k}+\eta_{k})^2-\zeta_{k}^2-\eta_{k}^2 \Big)   \frac{ 1 }{  \zeta_{k}^{2j_{k-1}+1} \eta_{k}^{ 2j_k+1 }   } \bigg].
\end{align*}
Since 
\begin{align*}
 \exp\Big( \frac{1+\tau}{2\tau} (\zeta+\eta)^2-\zeta^2-\eta^2 \Big) &= \sum_{k=0}^\infty \frac{1}{k!} \Big( \frac{1-\tau}{2\tau} (\zeta^2+\eta^2)+\frac{1+\tau}{2\tau}  2 \zeta \eta \Big)^k 
\end{align*}
we have, applying Newton's binomial formula twice, that (only $k=p+q$ can survive)  
\begin{align*}
&\quad \underset{\zeta,\eta=0}{\textup{Res}}  \bigg[    \exp\Big( \frac{1+\tau}{2\tau} (\zeta+\eta)^2-\zeta^2-\eta^2 \Big)   \frac{ 1 }{  \zeta^{2p+1} \eta^{2q+1}  } \bigg]
\\
&= \frac{1}{(p+q)!} \sum_{l=0}^{p} \Big( \frac{1-\tau}{2\tau} \Big)^{2l+q-p} \Big( \frac{1+\tau}{2\tau} \Big)^{2p-2l} 2^{2p-2l} \binom{p+q}{2l-p+q} \binom{2l-p+q}{l}\\
&= \frac{1}{(p+q)!} \sum_{ l=0 }^p   \Big( \frac{1-\tau}{2\tau} \Big)^{2l+q-p} \Big( \frac{1+\tau}{2\tau} \Big)^{2p-2l} \frac{2^{2p-2l}  (p+q)!  }{ l! (l+q-p)! (2p-2l)!  }
\\
&= \frac{ 2^{p-q}  }{\tau^{p+q}  } \sum_{ l=0 }^p    \frac{ (1-\tau)^{2l+q-p}(1+\tau)^{2p-2l}    }{ 2^{2l} l! (l+q-p)! (2p-2l)!  } =: f(p,q).
\end{align*}
This gives 
\begin{align*}
\underset{ \substack{\zeta_{k},\eta_{k}=0 \\ k=1,\dots,m} }{\textup{Res}}   \bigg[   \prod_{k=1}^m \exp\Big( \frac{1+\tau}{2\tau} (\zeta_{k}+\eta_{k})^2-\zeta_{k}^2-\eta_{k}^2 \Big)   \frac{ 1 }{  \zeta_{k}^{2j_{k-1}+1} \eta_{k}^{ 2j_k+1 }   } \bigg]=\prod_{k=1}^m f(j_{k-1}, j_{k} )
\end{align*}
Therefore we obtain 
\begin{align*}
\begin{split}
&\quad  \Tr (M_n^{(\tau)})^{m} = \Big( \frac{1+\tau}{2} \Big)^{m/2}  \sum_{j_1,\dots,j_m=0}^{n-1} (2j_1)! \dots (2j_m)! \Big(\frac{\tau}{2}\Big)^{2(j_1+\dots+j_m)} \prod_{k=1}^m f(j_{k-1}, j_{k} )
\\
&=  \Big( \frac{1+\tau}{2} \Big)^{m/2}  \sum_{j_1,\dots,j_m=0}^{n-1} (2j_1)! \dots (2j_m)! \Big(\frac{1}{2}\Big)^{2(j_1+\dots+j_m)} \prod_{k=1}^m \bigg(  \sum_{ l=0 }^{j_{k-1}}    \frac{ (1-\tau)^{2l+j_k-j_{k-1} }(1+\tau)^{2j_{k-1}-2l}    }{ 2^{2l} l! (l+j_{k}-j_{k-1})! (2j_{k-1}-2l)!  } \bigg), 
\end{split}
\end{align*}
which gives the lemma. 
\end{proof}

\subsection{Estimates of trace powers} \label{Subsection_generating matrix estimates}

\begin{lem} \label{lem:TrMnmMono}
We have
\begin{align}
    \Tr (M_n^{(\tau)})^m \leq 
    \frac{1}{(2\pi)^{\frac{m-1}{2}}} \int_{\mathbb R^m} 
    e^{-\frac{1}{2}(x_1^2+x_m^2)}
    \mathcal K_n^{(\tau)}(x_m,x_1)
    e^{-\sum_{j=2}^{m-1} x_j^2}
    \prod_{j=2}^{m} \cosh\left(x_{j-1}x_j\right)
  \, dx_1 \cdots dx_m,
\end{align}
where
\begin{align} \label{mathcal Kn tau}
    \mathcal K_n^{(\tau)}(x,y) & : = \sqrt{1-\tau^2} K_n^{(\tau)}(\sqrt{1-\tau^2}\, x, \sqrt{1-\tau^2}\, y). 
\end{align}
Here, $K_n^{(\tau)}$ is given by \eqref{Kn tau Hermite}.  
\end{lem}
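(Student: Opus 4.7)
The plan is to establish the inequality via a trace inequality for positive operators, combined with Mehler's summation formula for the even Hermite series.

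First, I would invoke Lemma~\ref{lem:TrMnmIntegral} to write $\Tr (M_n^{(\tau)})^m$ as the cyclic integral $\int \prod_{j=1}^m K_n^{(\tau)}(x_{j-1}, x_j)\, dx_1\cdots dx_m$. Using \eqref{Kn tau Hermite}, the kernel admits the rank-one decomposition $K_n^{(\tau)}(x,y) = \sum_{j=0}^{n-1} g_j(x) g_j(y)$ with real functions $g_j(x) = \frac{(\tau/2)^j}{(2\pi)^{1/4}\sqrt{(2j)!}}\, H_{2j}(x/\sqrt{2\tau})\, e^{-x^2/(2(1+\tau))}$, so the associated integral operator $T_n^{(\tau)}$ is finite-rank and positive on $L^2(\mathbb{R})$. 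The corresponding infinite-sum kernel, via Mehler's formula applied to the even Hermite series, is
\begin{equation*}
K_\infty^{(\tau)}(x,y) = \frac{1}{\sqrt{2\pi(1-\tau^2)}}\exp\Big(-\tfrac{x^2+y^2}{2(1-\tau^2)}\Big)\cosh\Big(\tfrac{xy}{1-\tau^2}\Big),
\end{equation*}
and the associated positive operator $T_\infty^{(\tau)}$ satisfies $T_\infty^{(\tau)} - T_n^{(\tau)} = \sum_{j\ge n} g_j\otimes g_j \succeq 0$. Under the rescaling $u = x/\sqrt{1-\tau^2}$ that defines $\mathcal K_n^{(\tau)}$, this limiting kernel reduces to $K_\infty^{(0)}(u,v) = \frac{1}{\sqrt{2\pi}}\, e^{-(u^2+v^2)/2}\cosh(uv)$, which is precisely the kernel implicit on the right-hand side of the lemma.

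Second, the core step is the trace inequality
\begin{equation*}
\Tr(T_n^{(\tau)})^m \leq \Tr T_n^{(\tau)}\,(T_\infty^{(\tau)})^{m-1}.
\end{equation*}
This follows from the operator-theoretic inequality $\Tr(PGP)^m \leq \Tr PG^m P$, valid for any positive operator $G$ and orthogonal projection $P$. In our setting, $G$ is the infinite Gram operator with $G_{jk} = \langle g_j, g_k\rangle_{L^2(\mathbb R)}$ and $P$ is the projection onto the first $n$ coordinates: a direct computation using cyclicity and the rank-one expansions yields $\Tr(T_n^{(\tau)})^m = \Tr(PGP)^m$ and $\Tr T_n^{(\tau)}(T_\infty^{(\tau)})^{m-1} = \Tr PG^m P$, so the operator inequality above transfers.

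Third, I would unwind $\Tr T_n^{(\tau)}(T_\infty^{(\tau)})^{m-1}$ as an integral in which $m-1$ of the $K_n^{(\tau)}$ factors are replaced by $K_\infty^{(\tau)}$, and then apply the rescaling $x_j = \sqrt{1-\tau^2}\, u_j$ to every variable. The single kept factor $K_n^{(\tau)}(x_m, x_1)$ becomes $\mathcal K_n^{(\tau)}(u_m, u_1)$, while each remaining $K_\infty^{(\tau)}(x_{j-1}, x_j)$ becomes $K_\infty^{(0)}(u_{j-1}, u_j) = \frac{1}{\sqrt{2\pi}} e^{-(u_{j-1}^2+u_j^2)/2}\cosh(u_{j-1} u_j)$. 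Collecting the Gaussian weights, interior variables $u_2, \ldots, u_{m-1}$ each receive $e^{-u_j^2}$ and boundary variables $u_1, u_m$ each receive $e^{-u_j^2/2}$, yielding precisely the stated right-hand side.

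The principal obstacle is the trace inequality $\Tr(PGP)^m \leq \Tr PG^m P$ for $m \geq 3$. Since $x^{m-1}$ is not operator monotone for $m\geq 3$, this cannot be deduced from the naive operator inequality $(PGP)^{m-1} \preceq G^{m-1}$. However, the trace-level version still holds: it is a consequence of Choi's operator Jensen inequality applied to the operator-concave functions $x\mapsto\min(x,s)$, which gives $\Tr P(G\wedge sI)P \leq \Tr(PGP\wedge sI)$ for every $s\geq 0$; a layer-cake decomposition followed by integration by parts then upgrades this to $\Tr Pf(G)P \geq \Tr f(PGP)$ for every convex $f$ on $[0,\infty)$ with $f(0)=0$, and specializing to $f(x)=x^m$ completes the argument.
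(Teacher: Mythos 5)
Your operator-theoretic route is genuinely different from the paper's and, once one step is repaired, it does prove the lemma. The paper's proof instead applies the Hermite multiplication theorem to rewrite $\mathcal K_n^{(\tau)}$ so that the matrix $\hat M_n$ of coefficients has manifestly nonnegative entries; it then observes that $\Tr(M_n^{(\tau)})^m$ unwinds to a sum of products of nonnegative quantities, so that replacing $\mathcal K_n^{(\tau)}$ by the Mehler limit $\mathcal K_\infty^{(\tau)}$ in $m-1$ of the $m$ factors can only increase the integral. You package the same monotonicity as the abstract trace inequality $\Tr\bigl[(PGP)^m\bigr]\le\Tr\bigl[PG^mP\bigr]$ for the Gram operator $G=V^*V$ and the coordinate projection $P$, avoiding the multiplication-theorem computation entirely. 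Your reduction of $\Tr(T_n^{(\tau)})^m$ and $\Tr T_n^{(\tau)}(T_\infty^{(\tau)})^{m-1}$ to $\Tr[(PGP)^m]$ and $\Tr[PG^mP]$ via $T_n^{(\tau)}=VPV^*$, $T_\infty^{(\tau)}=VV^*$, cyclicity and $P^2=P$ is correct, and so is the Mehler evaluation of $K_\infty^{(\tau)}$ and the rescaling by $\sqrt{1-\tau^2}$ that produces exactly the right-hand side of the lemma.

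The gap is in your justification of the trace inequality. The function $x\mapsto\min(x,s)$ is concave but \emph{not} operator concave: a nonnegative operator concave function on $[0,\infty)$ is operator monotone, hence by L\"owner's theorem extends to a Pick function and is in particular real-analytic on $(0,\infty)$, whereas $\min(x,s)$ has a corner at $x=s$. So Choi's operator Jensen inequality cannot be applied to $\min(\cdot,s)$, and the proposed layer-cake upgrade does not get off the ground. Fortunately the trace-level statement you actually need is true by a much simpler argument (this is the Hansen--Pedersen/Davis trace Jensen inequality): let $\{e_i\}$ be an orthonormal eigenbasis of $PGP$ on $\mathrm{ran}\,P$ with eigenvalues $\lambda_i$; since $Pe_i=e_i$ we have $\lambda_i=\langle e_i, Ge_i\rangle$, and classical Jensen applied to the probability measure $\langle e_i, dE_\lambda(G)\,e_i\rangle$ gives $\lambda_i^m\le\langle e_i, G^m e_i\rangle$; summing over $i$ yields $\Tr\bigl[(PGP)^m\bigr]\le\Tr\bigl[PG^mP\bigr]$. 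With that repair the proof is complete and arguably cleaner than the paper's, since it sidesteps the basis change altogether.
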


\begin{proof}
Recall the multiplication theorem for Hermite polynomials 
\begin{equation}
\begin{split}
H_{2n}(\lambda x) 
= \sum_{k=0}^n \frac{ (2n)! }{ (2k)! \,(n-k)! } \lambda^{2k} (\lambda^{2}-1)^{n-k} H_{2k}(x), 
\end{split}
\end{equation}
see e.g. \cite[Eq.(18.18.13)]{olver2010nist}. 
Letting   
$$e_\tau = \frac{1+\tau}{\sqrt{1-\tau^2}},$$ 
we have
\begin{align*}
   H_{2j}\Big(\sqrt{\frac{1-\tau^2}{2\tau}}  x\Big)
  &  = H_{2j}\Big(\frac{e_\tau}{\sqrt{2\tau}}  (1-\tau) x\Big)
   = \sum_{k=0}^j \Big(\frac{e_\tau}{\sqrt{\tau}}\Big)^{2k} \Big(\frac{e_\tau^2}{\tau}-1\Big)^{j-k} \frac{(2j)!}{ (2k)! (j-k)!} H_{2k}\Big(\frac{1-\tau}{\sqrt 2} x\Big). 
\end{align*}
Note that 
\begin{align*}
&\quad \sum_{m=0}^{n-1} \frac{(\tau/2)^{2m}}{(2m)!} H_{2m}\Big(  \sqrt{ \frac{1-\tau^2}{2\tau} }  x  \Big)
H_{2m}\Big( \sqrt{ \frac{1-\tau^2}{2\tau} }  y \Big)
\\
&= \sum_{m=0}^{n-1} \sum_{j,k=0}^m \Big(\frac{1}{1-\tau e_\tau^{-2}}\Big)^{j+k}  \Big(\frac{e_\tau^2}{\tau}-1\Big)^{2m} \frac{(\tau/2)^{2m} (2m)!}{ (2k)! (m-k)! (2j)!(m-j)!} H_{2k}\Big(\frac{1-\tau}{\sqrt 2} x\Big) H_{2j}\Big(\frac{1-\tau}{\sqrt 2} y\Big).
\end{align*}
We can thus write the kernel as a combination of Hermite functions
\begin{align}
\begin{split}
    \mathcal K_n^{(\tau)}(x, y) & = \sqrt{ \frac{1-\tau^2}{2\pi}  } e^{-\frac{1-\tau}{2}(x^2+y^2) }
\sum_{j=0}^{n-1} \frac{(\tau/2)^{2j}}{(2j)!} H_{2j}\Big(  \sqrt{ \frac{1-\tau^2}{2\tau} }  x  \Big)
H_{2j}\Big( \sqrt{ \frac{1-\tau^2}{2\tau} }  y \Big)
\\
  &  = \sum_{j,k=0}^{n-1}  e^{-\frac{1-\tau}{2} x^2}
    H_{2j}\Big(\frac{1-\tau}{\sqrt 2} x\Big)
    \hat M_{n,jk} e^{-\frac{1-\tau}{2} y^2} H_{2k}\Big(\frac{1-\tau}{\sqrt 2} y\Big)
\end{split}
\end{align}
for some matrix $\hat M_n$ that has positive elements. In fact, we see that
\begin{equation}
\hat M_{n,jk} =  \sqrt{ \frac{1-\tau^2}{2\pi}  }  \Big(\frac{1}{1-\tau e_\tau^{-2}}\Big)^{j+k} \sum_{m=\max(j,k)}^{n-1} \frac{(\tau/2)^{2m} (2m)!}{ (2k)! (m-k)! (2j)!(m-j)!} 
\Big(\frac{e_\tau^2}{\tau}-1\Big)^{2m}.    
\end{equation}

By Lemma~\ref{lem:TrMnmIntegral} and the change of variables, we have 
\begin{align}
\Tr (M_n^{(\tau)})^m = 
\int_{\mathbb R^m}  \mathcal K_n^{(\tau)}(x_1,x_2) \mathcal K_n^{(\tau)}(x_2,x_3)\cdots \mathcal K_n^{(\tau)}(x_m,x_1) \, dx_1 \cdots dx_m,
\end{align}
Since the Hermite functions are orthonormal, we infer that $\Tr (M_n^{(\tau)})^m$ is a sum of products of elements of $\hat M_n$ (which are positive). 

Now suppose that we add an extra term
\begin{align*}
   \frac{(\tau/2)^{2n}}{(2n)!} H_{2n}\Big(\sqrt{\frac{1-\tau^2}{2\tau}}  x_1\Big)
    H_{2n}\Big(\sqrt{\frac{1-\tau^2}{2\tau}}  x_2\Big)
\end{align*}
to $\mathcal K_n^{(\tau)}(x_1, x_2)$. The corresponding matrix $\hat M_{n+1}$ then has elements that are greater than or equal to their counterparts of $\hat M_n$. Furthermore, it has $2n-1$ elements more, which are all positive. Hence the sum over products of elements of $\hat M_{n+1}$ and $\hat M_n$ has increased overall. That is, $\Tr (M_n^{(\tau)})^m$ has increased. We can repeat this argument inductively and extend the summation in the kernel \eqref{Kn tau Hermite} over all non-negative integers $j$. Recall here that the Mehler kernel formula is given by
\begin{equation}
\sum_{j=0}^\infty \frac{ (\tau/2)^j }{j!} H_j(x)H_j(y)= \frac{1}{ \sqrt{1-\tau^2} } \exp\Big( \frac{2\tau}{ 1-\tau^2 } x y-\frac{\tau^2}{ 1-\tau^2 }(x^2+y^2) \Big),
\end{equation}
which leads to  
\begin{align*}
 \sum_{j=0}^\infty \frac{ (\tau/2)^{2j} }{(2j)!} H_{2j}(x)H_{2j}(y) &= \frac12 \Big(\sum_{j=0}^\infty \frac{ (\tau/2)^j }{j!} H_j(x)H_j(y)+ \sum_{j=0}^\infty \frac{ (\tau/2)^j }{j!} H_j(-x)H_j(y) \Big)
\\
&= \frac{1}{ \sqrt{1-\tau^2} } e^{ -\frac{\tau^2}{ 1-\tau^2 }(x^2+y^2)  } \cosh\Big( \frac{2\tau}{1-\tau^2} xy  \Big).
\end{align*}
Using this, the infinite sum evaluates to
\begin{align*}
 \mathcal K_\infty^{(\tau)}(x, y) 
 = \frac{1}{\sqrt{2\pi}} e^{-\frac{1}{2}(x^2+y^2)} \cosh(x y).
\end{align*}
Replacing $m-1$ of the $m$ kernels by this expression, we arrive at the result. 
\end{proof}

\begin{lem}\label{Lem_TrM power ineq contour}
We have
\begin{align} \label{TrM power ineq contour}
    \Tr (M_n^{(\tau)})^m &\leq 
    \frac{\sqrt{1+\tau}}{2\pi} \oint_\gamma \frac{s^{-2n+1}}{\sqrt{(1-s)((m-1-(m+1)\tau)s+(m+1)-(m-1)\tau)}} \frac{ds}{1-s^2},
\end{align}   
where $\gamma$ is a small loop around $0$ with positive direction.
\end{lem}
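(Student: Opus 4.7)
The plan is to start from the bound in Lemma~\ref{lem:TrMnmMono}, integrate out $x_2,\ldots,x_{m-1}$ explicitly, and then convert the remaining double integral against $\mathcal K_n^{(\tau)}(x_m,x_1)$ into a contour integral via the Mehler formula for even-indexed Hermite polynomials.

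First, I would iterate the Gaussian identity
\[
\int_\R e^{-\alpha u^2}\cosh(au)\cosh(bu)\,du = \sqrt{\pi/\alpha}\,\,e^{(a^2+b^2)/(4\alpha)}\cosh(ab/(2\alpha)).
\]
Applying this to $x_{m-1},x_{m-2},\ldots,x_2$ in turn preserves the form $C_k\,e^{a_k(x_1^2+x_k^2)}\cosh(c_k x_1 x_k)$, producing a simple scalar recurrence whose closed-form solution is $a_m=(m-2)/(2(m-1))$, $c_m=1/(m-1)$, with an explicit constant $C_m$. Combined with the $(2\pi)^{-(m-1)/2}$ prefactor of Lemma~\ref{lem:TrMnmMono}, this reduces the trace bound to a constant multiple of
\[
\int_{\R^2} e^{-\frac{x^2+y^2}{2(m-1)}}\cosh\!\Big(\tfrac{xy}{m-1}\Big)\mathcal K_n^{(\tau)}(x,y)\,dx\,dy.
\]

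Next, I would apply the coefficient-extraction identity
\(
\sum_{j=0}^{n-1}a_j=\tfrac{1}{2\pi i}\oint_\gamma \bigl(\sum_j a_j s^{2j}\bigr)\tfrac{s^{1-2n}}{1-s^2}\,ds
\)
together with the even-index Mehler formula for $\sum_{j\geq 0}\tfrac{(s\tau/2)^{2j}}{(2j)!}H_{2j}(X)H_{2j}(Y)$ (obtained by symmetrising the standard Mehler kernel in $s$), to rewrite $\mathcal K_n^{(\tau)}(x,y)$ as a contour integral in $s$ whose integrand is a Gaussian in $(x,y)$ times $\cosh(B_2(s)xy)$ for an explicit $B_2(s)$. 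Interchanging the contour and $(x,y)$ integrations and performing the two-dimensional Gaussian integral in closed form gives a sum of two square-root factors, corresponding to $\cosh((B_1\pm B_2)xy)$ with $B_1=1/(m-1)$.

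The final step is an algebraic factorisation: a direct computation shows
\[
A(s)^2-(B_1+B_2(s))^2=\frac{(1-\tau)(1-s)\bigl[(m-1-(m+1)\tau)s+(m+1)-(m-1)\tau\bigr]}{(m-1)(1-s^2\tau^2)},
\]
with $A(s)=\tfrac{1}{m-1}+(1-\tau)+\tfrac{s^2\tau(1-\tau^2)}{1-s^2\tau^2}$, matching the polynomial under the root in the target formula. The $B_1-B_2$ contribution equals the $B_1+B_2$ contribution with $s\to -s$; since $\gamma$ is invariant under this substitution and $s^{1-2n}/(1-s^2)$ is odd, the two contributions to the contour integral are equal and simply double, while the $\sqrt{1-s^2\tau^2}$ factors cancel against the Mehler denominator. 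Collecting constants produces the $\frac{\sqrt{1+\tau}}{2\pi}$ prefactor of the claim. The main obstacle is this final algebraic collapse: it is where the chain of $\cosh$'s and the Mehler kernel conspire to produce the clean polynomial $(m-1-(m+1)\tau)s+(m+1)-(m-1)\tau$ under the square root; all other steps are bookkeeping of prefactors.
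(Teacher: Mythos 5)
Your proof is correct and takes essentially the same route as the paper: reduce via Lemma~\ref{lem:TrMnmMono} and the iterated Gaussian identity (equivalently Lemma~\ref{lem:eCoshProdInt2}) to a double integral against $\mathcal K_n^{(\tau)}$, express the kernel's partial Hermite sum as a contour integral (which, after your coefficient-extraction plus Mehler step, reproduces the ADM representation \eqref{eq:ADMintegralRep} in the already-rescaled variable), interchange, and evaluate the 2D Gaussian. The only cosmetic differences are that the paper cites the ADM formula outright and then substitutes $s\to\tau s$, and that the paper uses the evenness $\mathcal K_n^{(\tau)}(-x,y)=\mathcal K_n^{(\tau)}(x,y)$ to collapse the surviving $\cosh$ into a single Gaussian and integrate in light-cone coordinates, whereas you keep the $\cosh$ explicit and note the $B_1\pm B_2(s)$ terms coincide under $s\mapsto -s$; your algebraic factorisation of $A^2-(B_1+B_2)^2$ checks out and matches the target polynomial.
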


\begin{rem}
For the case $\tau=0$, the inequality \eqref{TrM power ineq contour} agrees with \cite[Eq.(A.31)]{MR3563192}.
\end{rem}

\begin{proof}[Proof of Lemma~\ref{Lem_TrM power ineq contour}]
We start with the inequality from Lemma \ref{lem:TrMnmMono}.
    We use Lemma \ref{lem:eCoshProdInt2} for the integrations over $x_2,\ldots,x_{m-1}$. This yields
    \begin{align} \nonumber
    \Tr (M_n^{(\tau)})^m &\leq 
    \frac{1}{\sqrt{2\pi(m-1)}} \int_{\mathbb R^2}     
    \mathcal K_n^{(\tau)}(x_m,x_1)
    e^{-\frac{1}{2(m-1)}(x_1^2+x_m^2)}
     \cosh\Big(\frac{x_1 x_m}{m-1}\Big)
  \, dx_1 dx_m\\ \label{eq:lem:TrMnmMono1}
  &= \frac{1}{\sqrt{2\pi(m-1)}} \int_{\mathbb R^2}     
    \mathcal K_n^{(\tau)}(x_m,x_1)
    e^{-\frac{1}{2(m-1)}(x_1-x_m)^2}
       \, dx_1 dx_m.
\end{align}
Here, we have used the symmetry $K_n^{(\tau)}(x_m,x_1)= K_n^{(\tau)}(-x_m,x_1)$. 
    Then we plug in the single integral representation for the kernel from \cite[Eq.(26)]{ADM}, which, adapted to only yield even indexed terms, takes the form
    \begin{equation} \label{eq:ADMintegralRep}
       \mathcal K_n^{(\tau)}(x,y) = \sqrt{\frac{1-\tau^2}{2\pi}} e^{-\frac{1-\tau}{2}(x^2+y^2)} \frac{1}{2\pi i} \oint_\gamma e^{\frac{1-\tau^2}{4\tau} \frac{s}{1+s} (x+y)^2-\frac{1-\tau^2}{4\tau} \frac{s}{1-s} (x-y)^2} \frac{(\tau/s)^{2n}}{(\tau/s)^2-1} \frac{ds}{s\sqrt{1-s^2}}.
    \end{equation}
Interchanging the order of integration, we have 
   \begin{align*}
    &\quad \frac{1}{\sqrt{2\pi(m-1)}} \int_{\mathbb R^2}     
    \mathcal K_n^{(\tau)}(x_m,x_1)
    e^{-\frac{1}{2(m-1)}(x_1-x_m)^2}
       \, dx_1 \, dx_m
       \\
       &= \sqrt{ \frac{1-\tau^2}{ m-1 } } \frac{1}{4\pi^2 i}  \oint_\gamma  \int_{ \R^2 }  e^{-\frac{1-\tau}{2}(x^2+y^2) + \frac{1-\tau^2}{2\tau} \frac{s}{1+s} x^2-(\frac{1-\tau^2}{2\tau} \frac{s}{1-s} +\frac{1}{m-1}  ) y^2  }    \, dx \, dy  \frac{(\tau/s)^{2n}}{(\tau/s)^2-1} \frac{ds}{s\sqrt{1-s^2}},
   \end{align*}
   where now we use the light-cone coordinates:
   $$
   x=\frac{x_1+x_m}{\sqrt{2}}, \qquad y= \frac{x_1-x_m}{\sqrt{2}}.
   $$
   Since 
   \begin{align*}
    &\quad \int_{ \R^2 }  e^{-\frac{1-\tau}{2}(x^2+y^2) + \frac{1-\tau^2}{2\tau} \frac{s}{1+s} x^2-(\frac{1-\tau^2}{2\tau} \frac{s}{1-s} +\frac{1}{m-1}  ) y^2  }    \, dx \, dy 
    \\
    &=\pi  \Big( -\frac{1-\tau}{2}  + \frac{1-\tau^2}{2\tau} \frac{s}{1+s} \Big)^{-\frac12} \Big( -\frac{1-\tau}{2} -\frac{1-\tau^2}{2\tau} \frac{s}{1-s} -\frac{1}{m-1}   \Big)^{-\frac12}
    \\
    &= 2\pi \sqrt{ \frac{1-s^2}{ 1-\tau } (m-1) } \Big( \frac{1}{ s/\tau-1 } \frac{1}{  (1+m+\tau-m\tau)-s(1-m+\tau m +\tau)/\tau   } \Big)^{\frac12},
   \end{align*}
   we have 
   \begin{align*}
    &\quad \sqrt{ \frac{1-\tau^2}{ m-1 } } \frac{1}{4\pi^2 i}   \int_{ \R^2 }  e^{-\frac{1-\tau}{2}(x^2+y^2) + \frac{1-\tau^2}{2\tau} \frac{s}{1+s} x^2-(\frac{1-\tau^2}{2\tau} \frac{s}{1-s} +\frac{1}{m-1}  ) y^2  }    \, dx \, dy    \frac{(\tau/s)^{2n}}{(\tau/s)^2-1} \frac{1}{s\sqrt{1-s^2}}
    \\
    &= \frac{ \sqrt{1+\tau} }{2\pi i}      \Big( \frac{1}{ s/\tau-1 } \frac{1}{  m+1-\tau(m-1) -s( m +1 - (m-1)\tau^{-1})   } \Big)^{\frac12} \frac{(\tau/s)^{2n}s}{\tau^2-s^2}. 
   \end{align*}
    This yields 
    \begin{align*}
   \Tr (M_n^{(\tau)})^m & \leq  \frac{\sqrt{1+\tau}}{2\pi} \oint_\gamma \frac{1}{\sqrt{(1-s/\tau)(m+1-\tau(m-1)-(m+1-\tau^{-1}(m-1))s)}} \frac{(\tau/s)^{2n} s}{\tau^2-s^2} \, ds.
    \end{align*}
  We arrive at the result after a substitution $s\to \tau s$.
\end{proof}

\subsection{Proofs of Lemmas~\ref{lem:inequalityTrMNtaum} and \ref{Lem_eigenvalue of Mn tau boundedness}}

\label{Subsection_error estimates lemmas}




We now prove Lemma~\ref{lem:inequalityTrMNtaum}. 
Our proof is similar to the argument given in \cite[Appendix A.3]{MR3563192}.

\begin{proof}[Proof of Lemma~\ref{lem:inequalityTrMNtaum}]
Our starting point is the estimate \eqref{TrM power ineq contour}. Recall that $\gamma$ in \eqref{TrM power ineq contour} is a small loop around $0$ with positive direction. The integrand on the RHS in \eqref{TrM power ineq contour} has three singularities, $1, -1$ and a singularity that we will denote by
\begin{align*}
    a=-\frac{m+1-(m-1)\tau}{m-1-(m+1)\tau}.
\end{align*}
    First, we consider the case that $\frac{m-1}{m+1}>\tau$. In that case $a<-1$. 
    Then we deform $\gamma$ to a band $L_-$ around $(-\infty, a)$, a band $L_+$ around $(1,\infty)$, connected by two (almost) semicircles (and a small circle around the pole $s=-1$). The integrals over the semicircles tend to $0$ as we increase the radius to $\infty$. The residue at $s=-1$ gives a contribution $1/4$. 
    The integral over $L_-$ gives (without the prefactor)
    \begin{align*}
   &\quad  \left|2 \int_{-\infty}^{a} \frac{s^{-2n+1}}{\sqrt{(1-s)((m-1-(m+1)\tau)s+(m+1)-(m-1)\tau)}} \frac{ds}{s^2-1}\right|
   \\
   &     \leq \frac{2 |a|}{(1+|a|)^{3/2}(|a|-1)} \frac{1}{\sqrt{m-1-(m+1)\tau}} \left|\int_{-\infty}^{a} \frac{s^{-2n}}{\sqrt{s-a}} \, ds\right|
   \\
    &    = \frac{2 |a|^{-2n}}{|a|-1} \left(\frac{|a|}{1+|a|}\right)^{3/2} \frac{1}{\sqrt{m-1-(m+1)\tau}} \sqrt{\frac{\pi}{n}} \int_0^\infty \frac{(1+\frac{s}{2n})^{-2n}}{\sqrt{\pi s}} \, ds.
    \end{align*}
 We shall use the following elementary inequality 
    \begin{align} \label{Eq. A33}
        \int_0^\infty \frac{(1+\frac{s}{2n})^{-2n}}{\sqrt{\pi s}} ds \leq 1+1/n,
    \end{align}
    which can be found in \cite[Eq.(A.33)]{MR3563192}.  
  Using this, have
  \begin{align*}
    &\quad  \frac{\sqrt{1+\tau}}{2\pi} \int_{L_-} \frac{s^{-2n+1}}{\sqrt{(1-s)((m-1-(m+1)\tau)s+(m+1)-(m-1)\tau)}} \frac{ds}{1-s^2}
    \\
    &    \leq \sqrt{\frac{1+\tau}{2\pi n}} \frac{1}{|a|-1} \frac{1+1/n}{\sqrt{m-1-(m+1)\tau}}
        = \frac{1}{\sqrt{2\pi n(1+\tau)}}  \sqrt{m-1-(m+1)\tau} \frac{1+1/n}{2(1+\tau)}.
  \end{align*}
    Lastly, we estimate the integral over $L_+$. Since the integrand has a factor $(1-s)^{-3/2}$ we first perform a partial integration. 
    Then we may take the bandwidth to $0$. The dominant part (without prefactor) is given by
    \begin{align*}
   &\quad   4n \int_1^\infty \frac{s^{-2n}}{\sqrt{(1-s)((m-1-(m+1)\tau)s+(m+1)-(m-1)\tau)}} \frac{ds}{1+s}
   \\
   &     \leq \frac{4n}{2\sqrt{2m(1-\tau)}} \int_1^\infty \frac{s^{-2n}}{\sqrt{s-1}} \, ds
        = \frac{2 \sqrt n}{\sqrt{2m(1-\tau)}} \sqrt \pi \int_0^\infty \frac{(1+\frac{s}{2n})^{-2n}}{\sqrt{\pi s}} \, ds.
    \end{align*}
  Then we can again use the estimate \eqref{Eq. A33}.
    The part that is not dominant can be estimated using
    \begin{align*}
        \int_1^\infty \frac{s^{-2n}}{\sqrt{(1-s)(s-a)}} \left(\frac{1}{s-a}+\frac{2}{(s+1)^2}\right) ds
        \leq \frac{1}{\sqrt{1-a}} \int_1^\infty \frac{s^{-2n}}{\sqrt{1-s}} ds \leq \sqrt{\frac{\pi}{1-a}} (1+1/n).
    \end{align*}
   Combining all of the above, we conclude that
    \begin{multline*}
        \Tr (M_n^{(\tau)})^m 
        \leq \frac{1}{4} + \sqrt{\frac{1+\tau}{1-\tau}} \sqrt{\frac{n}{\pi m}} (1+1/n)
        + \frac{1}{2} \sqrt{\frac{1+\tau}{1-\tau}} \frac{1+1/n}{\sqrt{2\pi m n}}
        + \frac{1}{8} \sqrt{\frac{m-1-(m+1)\tau}{(1+\tau)\pi n}} (1+1/n) .
    \end{multline*}
    This is also true when $m=\frac{1+\tau}{1-\tau}$. This case is similar, but easier, since there is no band $L_-$ in this case. Some easy inequalities finish the proof.   
\end{proof}

\begin{proof}[Proof of Lemma \ref{Lem_eigenvalue of Mn tau boundedness}]

Let $v=(v_1,\dots,v_n) \in \R^n \setminus \{0\}.$ Then by \eqref{Mn tau jk}, it follows that  
\begin{align*}
\langle v, M_n^{(\tau)} v \rangle &= \sum_{j,k=1}^n M_n^{(\tau)}(j,k) v_j v_k =  \frac{1}{ \sqrt{2\pi} } \int_\R e^{ -\frac{x^2}{1+\tau} } \bigg[   \sum_{j=1}^n  \frac{  ( \tau/2 )^{j-1} }{ \sqrt{ \Gamma(2j-1) } } H_{2j-2}\Big( \frac{x}{ \sqrt{2\tau} } \Big)\,v_j \bigg]^2 \,dx  > 0,
\end{align*}
which gives rise to the first assertion. 

Next, we show the second assertion.    Let $m$ be the smallest integer such that
    \begin{align*}
        m>\frac{1+\tau}{1-\tau} n.
    \end{align*}
    Then we have for $n$ large enough
    \begin{align*}
        \frac{1}{8} \frac{1-\tau}{1+\tau} \sqrt{\frac{m}{\pi n}} (1+1/n) &\leq \frac{1}{8} \frac{1-\tau}{1+\tau} \sqrt{\frac{\frac{1+\tau}{1-\tau}n+1}{\pi n}} (1+1/n)\\
        &\leq \frac{1}{8\sqrt{\pi}} \sqrt{\frac{1-\tau}{1+\tau}} \sqrt{1+\frac{1-\tau}{1+\tau}\frac{1}{n}} (1+2/n)
        \leq \frac{1}{8\sqrt\pi}.
    \end{align*}
    Hence, by Lemma \ref{lem:inequalityTrMNtaum}, we have
    \begin{align*}
        \Tr (M_n^{(\tau)})^m \leq \frac{1}{4}+ \frac{1+1/n}{\sqrt \pi} + \frac{1}{8\sqrt \pi} \leq 1-a
    \end{align*}
    for some constant $a\in(0,1)$ when $n$ is big enough. The rest of the proof is similar (but with a different power) to \cite{MR3563192} yielding
    \begin{align*}
        \lambda_{\text{max}} \leq (1-a)^{1/m}\leq 1-\sqrt{\frac{1-\tau}{1+\tau}} \frac{a}{n}.
    \end{align*}
    The proof for weak non-Hermiticity is analogous, here one takes $m$ to be a multiple of $n^2$ (indeed, then $m\geq \frac{1+\tau}{1-\tau}$). 
\end{proof}

\section{Asymptotic analysis at strong non-Hermiticity} \label{Section_asymptotic strong}

\subsection{Asymptotics of the kernel}

Since the Hermite polynomials are odd (resp., even) for $j$ odd (resp., even), by symmetry, one can rewrite \eqref{Kn tau Hermite} as 
\begin{align}
\begin{split} \label{Kn tau Hermite decomp}
  K_n^{(\tau)}(x,y) &= \frac{ 1 }{ 2\sqrt{2\pi} } e^{-\frac{x^2+y^2}{2(1+\tau)}}\sum_{j=0}^{2n-1} \frac{(\tau/2)^j}{j!} H_j\Big(\frac{x}{\sqrt{2\tau}}\Big)
H_j\Big(\frac{y}{\sqrt{2\tau}}\Big)
\\
&\quad + \frac{ 1 }{ 2\sqrt{2\pi} } e^{-\frac{x^2+y^2}{2(1+\tau)}}\sum_{j=0}^{2n-1} \frac{(\tau/2)^j}{j!} H_j\Big(\frac{-x}{\sqrt{2\tau}}\Big)
H_j\Big(\frac{y}{\sqrt{2\tau}}\Big).
\end{split}
\end{align}
Asymptotics for strong non-Hermiticity can be directly extracted from \cite{ADM,Mo}. 
It will be convenient to define the rescaled kernel
\begin{align} \label{eq:defHatKn}
    \widehat{\mathcal K}_n^{(\tau)}(x,y)
    = \sqrt{2n} \, \mathcal K_n^{(\tau)}(\sqrt{2n} x, \sqrt{2n} y).
\end{align}
where we recall that $\mathcal{K}_n^{(\tau)}$ is given by \eqref{mathcal Kn tau}.

We also define the edge and focal point 
\begin{equation} \label{edge focal pts}
 e_\tau = \frac{1+\tau}{\sqrt{1-\tau^2}}, \qquad   f_\tau = \frac{2\sqrt\tau}{\sqrt{1-\tau^2}}.
\end{equation}
It will also be convenient to define 
\begin{equation} \label{xi function}
 \xi_x =  \begin{cases} 
    0, & |x|\leq f_\tau,\\
    \cosh^{-1}(x/f_\tau), & |x|> f_\tau.
    \end{cases}
\end{equation}

In what follows $H$ will be the Heaviside function, i.e. $H(x)=1$ for $x\geq 0$ and $H(x)=0$ for $x<0$.  

\begin{lem} \label{lem:asympKernel} Let $0<\tau<1$ and $\frac13<\mu<\frac12$ be fixed. Then there exist constants $c, C>0$ such that
  \begin{align*}
        & H(2 \xi_{e_\tau} - n^{-\mu} - \xi_x-\xi_y) \left(\sqrt\frac{n}{\pi}e^{-n(x^2+y^2)} \cosh(2nxy) - e^{-c n^{1-2\mu}}\right) - C n^\mu e^{-cn (|x|-e_\tau)^2} e^{-cn (|y|-e_\tau)^2}\\
        & \qquad \leq  \hat{\mathcal K}_n^{(\tau)}(x,y)\\
        \leq & H(2 \xi_{e_\tau} + n^{-\mu} - \xi_x-\xi_y) \left(\sqrt\frac{n}{\pi} e^{-n(x^2+y^2)} \cosh(2nxy) + e^{-c n^{1-2\mu}}\right) + C n^\mu e^{-cn (|x|-e_\tau)^2} e^{-cn (|y|-e_\tau)^2}
    \end{align*}
 uniformly for all $(x,y)\in\mathbb R^2$. 
\end{lem}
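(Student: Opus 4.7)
The plan is to reduce the analysis of $\widehat{\mathcal K}_n^{(\tau)}$ to known uniform asymptotic formulas for the \emph{full} (i.e.\ all-index) Hermite kernel of the elliptic GUE from \cite{ADM,Mo}, and then to exploit the elementary parity symmetry connecting the two.

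\emph{Step 1: Reduction via symmetry.} From the decomposition \eqref{Kn tau Hermite decomp}, if we define the full Hermite kernel
\[
\widetilde K_{2n}^{(\tau)}(x,y) := \frac{1}{\sqrt{2\pi}}\,e^{-\frac{x^2+y^2}{2(1+\tau)}} \sum_{j=0}^{2n-1} \frac{(\tau/2)^j}{j!}\, H_j\!\Big(\frac{x}{\sqrt{2\tau}}\Big)\, H_j\!\Big(\frac{y}{\sqrt{2\tau}}\Big),
\]
then $K_n^{(\tau)}(x,y) = \tfrac12\bigl(\widetilde K_{2n}^{(\tau)}(x,y)+\widetilde K_{2n}^{(\tau)}(-x,y)\bigr)$. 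After the rescalings in \eqref{mathcal Kn tau} and \eqref{eq:defHatKn}, this identity lifts to
\[
\widehat{\mathcal K}_n^{(\tau)}(x,y) = \tfrac12 \bigl(\widehat{\widetilde K}_n(x,y) + \widehat{\widetilde K}_n(-x,y)\bigr),
\]
with $\widehat{\widetilde K}_n$ the corresponding rescaling of $\widetilde K_{2n}^{(\tau)}$. Since $\xi_x = \xi_{-x}$, the Heaviside cutoff in the lemma is unaffected by the symmetrisation.

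\emph{Step 2: Apply eGUE bulk/edge asymptotics.} Invoke the uniform asymptotic results of \cite{ADM,Mo} for $\widehat{\widetilde K}_n$. In the ``deep bulk'' region $\xi_x+\xi_y \le 2\xi_{e_\tau}-n^{-\mu}$ they give
\[
\widehat{\widetilde K}_n(x,y) = \sqrt{2n/\pi}\, e^{-n(x-y)^2}\bigl(1+O(e^{-cn^{1-2\mu}})\bigr),
\]
while in the ``deep exterior'' region $\xi_x+\xi_y \ge 2\xi_{e_\tau}+n^{-\mu}$ the kernel is exponentially small, with the only residual contribution coming from the Airy-type edge crossover at $|x|,|y|\approx e_\tau$; this crossover contribution is uniformly bounded by $C n^{\mu}\,e^{-c n(|x|-e_\tau)^2}e^{-cn(|y|-e_\tau)^2}$. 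The parameter range $\tfrac13<\mu<\tfrac12$ is precisely the one in which these two regimes meet: $\mu<\tfrac12$ makes $e^{-cn^{1-2\mu}}$ sub-exponentially small so a cushion of width $n^{-\mu}$ suffices to ignore the exterior in the bulk formula, and $\mu>\tfrac13$ ensures that the transition zone is thicker than the Airy scale $n^{-2/3}$, so the edge contribution is absorbed into the stated Gaussian bound.

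\emph{Step 3: Symmetrisation and assembly.} Averaging the bulk asymptotics in $x\mapsto \pm x$ yields
\[
\tfrac12 \sqrt{2n/\pi}\bigl(e^{-n(x-y)^2}+e^{-n(x+y)^2}\bigr) = \sqrt{n/\pi}\,e^{-n(x^2+y^2)}\cosh(2nxy),
\]
which is the main term of the lemma. The Heaviside factor $H(2\xi_{e_\tau}\mp n^{-\mu}-\xi_x-\xi_y)$ in the two-sided bound records that the bulk formula is in effect exactly on the deep bulk set (plus/minus the slack $n^{-\mu}$). The edge contribution bound is inherited in both $\widehat{\widetilde K}_n(x,y)$ and $\widehat{\widetilde K}_n(-x,y)$ and hence persists after averaging; the term $Cn^\mu e^{-cn(|x|-e_\tau)^2}e^{-cn(|y|-e_\tau)^2}$ covers all of the boundary and exterior regime.

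\emph{Main obstacle.} The bulk of the work is in Step~2: extracting the precise uniform (not merely pointwise) form of the asymptotics of $\widehat{\widetilde K}_n$ from \cite{ADM,Mo}, and in particular tracking the prefactors and the exponential error $e^{-cn^{1-2\mu}}$ through the transition region where the Gaussian bulk and the Airy-scale edge corrections compete. The mixed cases, where one of $x,y$ is deep in the bulk while the other sits near $\pm e_\tau$, require separate bookkeeping to check that the edge-localised bound $n^\mu e^{-cn(|x|-e_\tau)^2}e^{-cn(|y|-e_\tau)^2}$ indeed dominates the residual there; this is where the choice $\mu>\tfrac13$ is used.
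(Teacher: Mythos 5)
Your overall plan — split $\widehat{\mathcal K}_n^{(\tau)}$ by parity into two copies of the full elliptic-Ginibre Hermite kernel, quote the uniform bulk/edge/exterior asymptotics of that kernel from \cite{ADM,Mo} region-by-region, and recombine — is exactly what the paper does, down to the same three regions $\xi_x+\xi_y \lessgtr 2\xi_{e_\tau}\pm n^{-\mu}$ and the same references. Two points of substance, though, where your description drifts from what actually happens and could not be patched without a correction.

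First, the edge crossover at $|x|\approx e_\tau$ is \emph{not} Airy. The point $e_\tau$ is the intersection of the elliptical boundary with the real axis, and the local scaling limit of the kernel there (on the real line) is of error-function type: the paper uses \cite[Proposition V.1]{Mo}, which gives a leading term $\frac14\sqrt{n/\pi}\,e^{-n(x-y)^2}\erfc\bigl(\sqrt n(x+y-2e_\tau)\bigr)$ plus a correction $e^{-n(x-e_\tau)^2}e^{-n(y-e_\tau)^2}\,\mathcal O(n^{1-2\mu})$, not an Airy kernel. Accordingly, the restriction $\mu>\tfrac13$ has nothing to do with an Airy scale $n^{-2/3}$: it is needed so that the $\mathcal O(n^{1-2\mu})$ error coming out of the edge expansion is dominated by the $Cn^\mu$ prefactor in the stated Gaussian bound (one needs $1-2\mu\le\mu$). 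The requirement $\mu<\tfrac12$ is what keeps $e^{-cn^{1-2\mu}}$ genuinely small in the bulk region. You should also note that one must bound $\erfc$ between $0$ and $2$ to absorb the erfc term into the $\cosh$ main term plus a Heaviside indicator, which is where the one-sided nature of the two inequalities in the lemma comes from.

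Second, a small but material algebra slip: if your bulk formula for the unsymmetrised kernel were $\sqrt{2n/\pi}\,e^{-n(x-y)^2}$, then averaging over $x\mapsto\pm x$ gives $\sqrt{2n/\pi}\,e^{-n(x^2+y^2)}\cosh(2nxy)$, which is $\sqrt 2$ too large compared to the lemma's $\sqrt{n/\pi}\,e^{-n(x^2+y^2)}\cosh(2nxy)$. The correct bulk prefactor before symmetrisation is $\sqrt{n/\pi}$. Finally, the bulk error from \cite[Theorem III.5]{ADM} is additive, $\bigl|\widehat{\mathcal K}_n^{(\tau)}(x,y)-\text{main term}\bigr|\le e^{-cn^{1-2\mu}}$, not the multiplicative $(1+O(e^{-cn^{1-2\mu}}))$ you wrote; this matters because the main term already carries an $n$-dependent prefactor. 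The exterior region is handled exactly as you say, via the explicit rate function $h(x)$ from \cite{ADM} and the lower bound $h(x)\ge(1-\tau)(x-e_\tau)^2$, which gives the Gaussian tail bound with exponent proportional to $(|x|-e_\tau)^2(|y|-e_\tau)^2$.
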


\begin{proof}
Without loss of generality we assume that $x,y\geq 0$. We shall use \cite{ADM} in what follows. This paper treats the kernel
\begin{align*}
    \kappa_n(z, w) = \frac{n}{\pi\sqrt{1-\tau^2}}\sqrt{\omega(z) \omega(w)} \sum_{j=0}^{n-1} H_j\bigg(\frac{\sqrt n \, z}{\sqrt{2\tau}}\Bigg) H_j\bigg(\frac{\sqrt n \, \overline w}{\sqrt{2\tau}}\bigg),
\end{align*}
where the weight is given by $\omega(z) = \exp\left(-n\frac{(\re z)^2}{1+\tau}-n\frac{(\im z)^2}{1-\tau}\right)$. Our kernel \eqref{eq:defHatKn} can be expressed as
\begin{align*}
    \widehat{\mathcal K}_n^{(\tau)}(x,y) = 
    \frac{\sqrt{\pi n}}{2} \sqrt{1-\tau^2} \left(\kappa_{2n}\left(\sqrt{1-\tau^2} \, x, \sqrt{1-\tau^2} \, y\right)+\kappa_{2n}\left(-\sqrt{1-\tau^2} \, x, \sqrt{1-\tau^2} \, y\right)\right).
\end{align*}
For $\xi_x+\xi_y>2\xi_{e_\tau}+n^{-\mu}$, it follows from \cite[Theorem I.1 and Remark I.2]{ADM}, and some easy estimations, that 
\begin{equation}
   \left|\widehat{\mathcal K}_n^{(\tau)}(x,y)\right| \leq C n^{\mu} e^{- n h(x)}   e^{- n h(y)},
\end{equation}
for some constant $C>0$, where $h$ is the continuous function
\begin{align}
h(x) = \begin{cases}
      \displaystyle   2\xi_{e_\tau} - \frac{(1-\tau)^2}{2\tau} x^2, & x\in[0,f_\tau],
      \smallskip 
      \\
 \displaystyle       2\xi_{e_\tau}-1 + (1-\tau)x^2 - \frac{f_\tau^2}{(x+\sqrt{x^2-f_\tau^2})^2}+ \log \Big( \frac{f_\tau^2}{(x+\sqrt{x^2-f_\tau^2})^2} \Big) , & x>f_\tau,
    \end{cases}
\end{align}
see \cite[Eq.(80)]{ADM} with $x=f_\tau \cos\eta$ or $x=f_\tau\cosh \xi$. We remark that one can alternatively use \eqref{eq:ADMintegralRep} as a starting point, and then follow the approach in \cite{ADM} to reach the same conclusion.
As shown in \cite{ADM}, $h$ has a double zero in $x=e_\tau$ and is positive for all $x\neq e_\tau$. In fact, we can show that
 \begin{align*}
     \frac{h(x)}{(x-e_\tau)^2} \geq \lim_{x\to\infty} \frac{h(x)}{(x-e_\tau)^2} = 1-\tau 
 \end{align*}
 for all $x>f_\tau$. We conclude that
 \begin{align*}
     C n^{\mu} e^{- n h(x)}
    e^{- n h(y)} \leq C n^\mu e^{-cn (x-e_\tau)^2 } e^{-cn (y-e_\tau)^2 }
 \end{align*}
 where $c>0$ is some constant.
For $\xi_x+\xi_y<2\xi_{e_\tau}-n^{-\mu}$ we have by \cite[Theorem III.5]{ADM} that
\begin{align*}
    \left|\widehat{\mathcal K}_n^{(\tau)}(x,y)-e^{-n(x^2+y^2)} \cosh(2nxy)\right| \leq e^{-cn^{1-2\mu}},
\end{align*}
where we possibly redefine the constant $c$. We used here that at least one of $x, y$ is $\leq e_\tau - a n^{-\mu}$ for some constant $a>0$. Lastly, we look at the region $|\xi_x+\xi_y-2\xi_{e_\tau}|\leq n^{-\mu}$. Then we have by \cite[Proposition V.1]{Mo} that
\begin{align*}
    \widehat{\mathcal K}_n^{(\tau)}(x,y) &= \frac{1}{4}\sqrt\frac{n}{\pi} e^{- n(x-y)^2} \erfc\left(\sqrt n (x+y-2e_\tau)\right) + 
     e^{-n (x-e_\tau)^2}
     e^{-n (y-e_\tau)^2} \mathcal O(n^{1-2\mu})\\
     & \quad + H(2\xi_{e_\tau}-\xi_x-\xi_y) \frac{1}{2}\sqrt\frac{n}{\pi} e^{-n(x+y)^2} 
     + \mathcal O(e^{-n h(x)} e^{-n h(y)}).
\end{align*}
Since the complementary error function takes values in $(0,2)$ for real arguments, we infer that
\begin{align*}
    \widehat{\mathcal K}_n^{(\tau)}(x,y) &\leq \sqrt\frac{n}{\pi} e^{-n (x^2+y^2)} \cosh(2nxy)  + C n^{1-2\mu} e^{-cn (x-e_\tau)^2} e^{-cn (y-e_\tau)^2}\\
    &\leq \sqrt\frac{n}{\pi} e^{-n (x^2+y^2)} \cosh(2nxy)  + C n^{\mu} e^{-cn (x-e_\tau)^2} e^{-cn (y-e_\tau)^2},
\end{align*}
for some constant $C>0$. For the lower bound we have trivially
\begin{align*}
    \widehat{\mathcal K}_n^{(\tau)}(x,y) & \geq - C n^\mu e^{-cn (x-e_\tau)^2} e^{-cn (y-e_\tau)^2}.
\end{align*}
Putting it all together, we obtain the result. 
\end{proof}

\subsection{Proof of Proposition \ref{prop:asympTracem}.}

\begin{proof}[Proof of Proposition \ref{prop:asympTracem}.]
In what follows, we shall make the identification $x_{m+1}=x_1$. By Lemma \ref{lem:asympKernel}, we infer that, for fixed $m$
\begin{align*}
   &\quad \int_{\mathbb R^m} \prod_{j=1}^m \widehat{\mathcal K}_n^{(\tau)}(x_j, x_{j+1}) \, d x_j
   \\ 
   &\geq 2^m \left(\frac{n}{\pi}\right)^\frac{m}{2} \int_{\mathbb R_+^m} e^{-2n\sum_{j=1}^m x_j^2} \prod_{j=1}^m H(2\xi_{e_\tau}-n^{-\mu}-\xi_{x_j}-\xi_{x}) \cosh(2n x_j x_{j+1}) \, dx_j - c_m n^{-\frac{1}{2}+\mu},
\end{align*}
for some constant $c_m>0$. There exists a constant $a>0$ such that
\begin{align*}
&\quad  2^m\int_{\mathbb R_+^m} e^{-2n\sum_{j=1}^m x_j^2} \prod_{j=1}^m H(2\xi_{e_\tau}-n^{-\mu}-\xi_{x_j}-\xi_{x}) \cosh(2n x_j x_{j+1}) \,dx_j
\\
&\geq \int_{[-e_\tau+a n^{-\mu}, e_\tau - a n^{-\mu}]^m} e^{-2n \sum_{j=1}^m x_j^2} \prod_{j=1}^m  \cosh(2n x_j x_{j+1}) \, d x_j
\\
& \geq 
\frac12 \int_{[-e_\tau+a n^{-\mu}, e_\tau - a n^{-\mu}]^m} e^{-2n \sum_{j=1}^m (x_j^2-x_j x_{j+1})} \prod_{j=1}^m\,d x_j.
\end{align*}
In the last step we took only the combination of exponentials $e^{\pm 2n x_j x_{j+1}}$ such that after substitutions $x_j\to \pm x_j$ we obtain $e^{n\sum_{j=1}^m x_j}$ in the integrand, which is half of all the combinations. Now we make a substitution $y_m = x_1+\ldots+x_m$ and $y_j = x_{j+1}-x_j$ for $j=1,\ldots,m-1$. Then we have
\begin{align*}
&\quad  \int_{[-e_\tau+a n^{-\mu}, e_\tau - a n^{-\mu}]^m} e^{-2n \sum_{j=1}^m (x_j^2+2 x_j x_{j+1})} \prod_{j=1}^m\,d x_j 
\\
&  = \frac{1}{m} \int_{-m(e_\tau-a n^{-\mu})}^{m(e_\tau-a n^{-\mu})} \int_{[-e_\tau+n^{-\mu}, e_\tau-n^{-\mu}]^{m-1}}
    e^{- n (\sum_{j=1}^{m-1} y_j^2+(\sum_{j=1}^{m-1} y_j)^2)} \prod_{j=1}^m\,d y_j .
\end{align*}
This we can write as
\begin{align*}
&\quad   (e_\tau-a n^{-\mu}) \int_{[-e_\tau+n^{-\mu}, e_\tau-n^{-\mu}]^{m-1}} \int_{-\infty}^\infty  e^{-n\lambda^2+ 2n \lambda \sum_{j=1}^{m-1} y_j}    e^{- n\sum_{j=1}^{m-1} y_j^2} \, d\lambda \prod_{j=1}^{m-1}\,d y_j
\\
&  = (e_\tau-a n^{-\mu}) \int_{-\infty}^\infty e^{-\frac{n}{2}\lambda^2} \bigg(\int_{-e_\tau+n^{-\mu}}^{e_\tau-n^{-\mu}} e^{-n x^2+ 2n\lambda x}\bigg)^{m-1} d\lambda
\\
 &   \geq (e_\tau-a n^{-\mu}) \int_{-e_\tau+\varepsilon}^{e_\tau-\varepsilon} e^{-n\lambda^2} \bigg(\int_{-e_\tau+n^{-\mu}}^{e_\tau-n^{-\mu}} e^{-n x^2+ 2n\lambda x}\bigg)^{m-1} d\lambda
\end{align*}
for any $\varepsilon>0$ (for $n$ big enough). We have
\begin{align*}
    \int_{-e_\tau+\varepsilon}^{e_\tau-\varepsilon} e^{-n\lambda^2} \bigg(\int_{-e_\tau+n^{-\mu}}^{e_\tau-n^{-\mu}} e^{-n x^2+ 2n\lambda x}\bigg)^{m-1} \, d\lambda
    & = \int_{-e_\tau+\varepsilon}^{e_\tau-\varepsilon} 
     e^{-n\lambda^2} \left(\frac{\pi}{n}\right)^\frac{m-1}{2} e^{-(m-1) n\lambda^2} (1+\mathcal O(1/n)) \, d\lambda
     \\
     &= 2(e_\tau - \varepsilon) \left(\frac{\pi}{n}\right)^\frac{m}{2} \frac{1}{\sqrt{m}} (1+\mathcal O(1/n)). 
\end{align*}
We conclude that
\begin{align*}
    \lim_{n\to\infty} \frac{1}{\sqrt{2n}} \Tr (M_n^{(\tau)})^m \geq (e_\tau-\varepsilon) \frac{1}{\sqrt{2\pi m}}.
\end{align*}
Since this is true for arbitrary $\varepsilon>0$, we have
\begin{align*}
    \lim_{n\to\infty}  \frac{1}{\sqrt{2n}} \Tr (M_n^{(\tau)})^m \geq e_\tau \frac{1}{\sqrt{2\pi m}} = \sqrt\frac{1+\tau}{1-\tau} \frac{1}{\sqrt{2\pi m}}.
\end{align*}
For $m\geq \frac{1+\tau}{1-\tau}$ we have an upper bound already, but we need one for the remaining $m$. We start with \eqref{eq:lem:TrMnmMono1} and plug in the result from Lemma \ref{lem:asympKernel} for the remaining kernel. This yields
\begin{align*}
\Tr (M_n^{(\tau)})^m \leq 
    \frac{n}{\pi} \frac{1}{\sqrt{m-1}} \int_{\mathbb R_+^2} H(2 \xi_{e_\tau} + n^{-\mu} - \xi_x-\xi_y) 4 e^{-n \frac{m}{m-1} (x^2+y^2)} \cosh\Big(\frac{2nxy}{m-1}\Big) \cosh(n xy) \,dx \,dy. 
\end{align*}
We can write
\begin{align*}
 &\quad  4 e^{-n \frac{m}{m-1} (x^2+y^2)} \cosh\left(\frac{2nxy}{m-1}\right) \cosh(n xy)
 \\
 &   = e^{-n \frac{m}{m-1} (x^2+y^2)} 
    \left(e^{2n\frac{m}{m-1} xy}+e^{-2n\frac{m}{m-1} xy}+e^{2n\frac{m-2}{m-1} xy}+e^{-2n\frac{m-2}{m-1} xy}\right)
    \\
   & = e^{-n\frac{m}{m-1} (x-y)^2}+e^{-n\frac{m}{m-1} (x+y)^2}
    + e^{-\frac{4n}{m} x^2} \left(
    e^{-n\frac{m}{m-1} (y-\frac{m-2}{m}x)^2}+e^{-n\frac{m}{m-1} (y+\frac{m-2}{m}x)^2}\right).
\end{align*}
Note that we may replace the integration domain $\mathbb R_+^2$ by $[0, g_\tau+b n^{-\mu}]^2$ for some constant $b>0$, and
\begin{align*}
    g_\tau = f_\tau \cosh(2 \xi_{e_\tau}) = \frac{1+\tau^2}{2\tau} f_\tau.
\end{align*}
This is allowed because the Heaviside function vanishes outside this region. Now we divide this region into $[f_\tau, g_\tau+b n^{-\mu}]$, and the remaining region. For the latter, we may bound the Heaviside function by $1$ (boundary contribution is of small order), and it follows straightforwardly by steepest descent arguments that the corresponding integral equals
\begin{align*}
    \frac{n}{\pi} \frac{1}{\sqrt{m-1}} \int_0^{f_\tau} \sqrt{\pi \frac{m-1}{mn}} dx = \sqrt\frac{n}{\pi m} f_\tau.
\end{align*}
up to leading order. For the first region we find that the corresponding integral equals
\begin{align*}
     \frac{1}{2} \sqrt\frac{n}{\pi m} \int_{f_\tau}^{g_\tau}  
     \bigg(\erf\Big(\sqrt\frac{m n}{m-1} x\Big)+\erf\Big(\sqrt\frac{m n}{m-1} \left(f_\tau \cosh(2\xi_{e_\tau}-\xi_x)-x\right)\Big)\bigg) \, dx
\end{align*}
to leading order. For $x>1+\tau$, we have
\begin{align*}
    f_\tau \cosh(2\xi_{e_\tau}-\xi_x)-x<f_\tau \cosh(\xi_{e_\tau}) - e_\tau = 0,
\end{align*}
and the two error functions cancel in the limit $n\to\infty$. What remains is
\begin{align*}
    \frac{1}{2} \sqrt\frac{n}{\pi m} \int_{f_\tau}^{g_\tau} 2 \, dx = \sqrt\frac{n}{\pi m} (e_\tau-f_\tau).
\end{align*}
Together with the contribution from $[0,f_\tau]^2$, this gives
\begin{align*}
    \lim_{n\to\infty} \frac{1}{\sqrt{2n}} \Tr (M_n^{(\tau)})^m \leq \frac{1}{\sqrt{2\pi m}} (f_\tau+e_\tau-f_\tau) = \sqrt\frac{1+\tau}{1-\tau} \frac{1}{\sqrt{\pi m}}.
\end{align*}
\end{proof}

\section{Asymptotic analysis at weak non-Hermiticity} \label{Section_asymptotic weak}

In this section, we show Proposition~\ref{prop:asympTracem weak}, i.e. for any fixed $m>0$,
\begin{equation} \label{Tr M power conv weak v2}
\begin{split}
\lim_{n \to \infty} \frac{1}{ 2n } \Tr (M_n^{(\tau)})^m & = \frac{c(\sqrt{m}\,\alpha)}{2}=\frac{e^{-m\alpha^2/2}}{2}  \Big[ I_0\Big(\frac{m\alpha^2}{2}\Big)+I_1\Big(\frac{m\alpha^2}{2}\Big) \Big]. 
\end{split}
\end{equation}
Note that $c(\alpha)$ in \eqref{c(alpha)} can be written as 
\begin{equation} \label{c(alpha) v2}
c(\alpha) = \frac{2}{ \alpha \sqrt{\pi} } \int_0^1 \erf( \alpha \sqrt{1-s^2} )\,ds  =\sum_{k=0}^\infty \frac{(2k-1)!!}{2^k \,k!\,(k+1)!} (-1)^k \alpha^{2k}, 
\end{equation}
see e.g. \cite[Remark 2.7]{byun2021real}. 
Therefore, the right-hand side of \eqref{Tr M power conv weak v2} can be written as
\begin{equation} \label{c m alpha expansion}
\frac{c(\sqrt{m}\,\alpha)}{2} = \sum_{k=0}^\infty \frac{ m^k  }{2^{2k+1}   (k+1)!} \binom{2k}{k}  (-\alpha^2)^{k}.
\end{equation}
Recall that by Lemma~\ref{Lem_Tr M^m sum}, $\Tr (M_n^{(\tau)})^m$ is evaluated as
\begin{align} \label{Tr M^m sum evaluation v2}
 \Tr (M_n^{(\tau)})^m =   \sum_{j_1,\dots,j_m=0}^{n-1}  \Big(\frac{ 1+\tau }{ 2 }\Big)^{2\sum_{k=1}^m j_k+ \frac{m}{2} } \prod_{k=1}^m  \bigg( \sum_{ l=0 }^{n-1}    \frac{ (1-\tau)^{2l }(1+\tau)^{-2l}  \, (2j_k)!  }{ 2^{2l} l! (l+j_{k}-j_{k-1})! (2j_{k-1}-2l)!  } \bigg).
\end{align}
Here and in the sequel, we use the convention $j_0=j_m$.

\begin{rem}
We first discuss the contribution from $l=0$ in the expression \eqref{Tr M^m sum evaluation v2}. 
If $l=0$, the right-hand side of \eqref{Tr M^m sum evaluation v2} is given by 
\begin{align*}
\begin{split}
&\quad  \Big( \frac{1+\tau}{2} \Big)^{m/2}  \sum_{j_1,\dots,j_m=0}^{n-1}  \prod_{k=1}^m  \Big(\frac{ 1+\tau }{ 2 }\Big)^{2j_k}     \frac{   (2j_k)!   }{  (j_{k}-j_{k-1})! (2j_{k-1})!  } 
\\
&=   \Big( \frac{1+\tau}{2} \Big)^{m/2} \sum_{j_1,\dots,j_m=0}^{n-1}  \prod_{k=1}^m  \Big(\frac{ 1+\tau }{ 2 }\Big)^{2j_k}     \frac{ 1  }{  (j_{k}-j_{k-1})!   } 
\\
&=  \Big( \frac{1+\tau}{2} \Big)^{m/2}  \sum_{j=0}^{n-1}  \prod_{k=1}^m  \Big(\frac{ 1+\tau }{ 2 }\Big)^{2j}   = 
\Big( \frac{1+\tau}{2} \Big)^{m/2}
\sum_{j=0}^{n-1}   \Big(\frac{ 1+\tau }{ 2 }\Big)^{2j m}  =  \Big( \frac{1+\tau}{2} \Big)^{m/2} \frac{ 1-\Big(\frac{ 1+\tau }{ 2 }\Big)^{2m n} }{ 1-\Big(\frac{ 1+\tau }{ 2 }\Big)^{2m} }. 
\end{split}
\end{align*}
(From the second to the third line, see Remark \ref{eq:suml=0j1=...=jm}.)
This gives that for $\tau=1-\alpha^2/(2n)$, as $n \to \infty$, 
\begin{align}
\begin{split} \label{l=0 contribution}
\frac{1}{2n}\,\Big( \frac{1+\tau}{2} \Big)^{m/2}  \sum_{j_1,\dots,j_m=0}^{n-1}  \prod_{k=1}^m  \Big(\frac{ 1+\tau }{ 2 }\Big)^{2j_k}     \frac{   (2j_k)!   }{  (j_{k}-j_{k-1})! (2j_{k-1})!  } 
=  \frac{1-e^{ -\frac{\alpha^2m}{2} }}{\alpha^2 m} +\mathcal O\Big(\frac1n\Big). 
\end{split}
\end{align}
\end{rem}

\subsection{Expectation and variance revisited}

It is instructive to first consider the cases $m=1,2$ before dealing with the general $m$.  
By Proposition~\ref{Prop_generating matrix implementation} (ii), the analysis for $m=1,2$ below provides an alternative and more unified proof of  \cite[Theorems 2.1 and 2.3]{byun2021real}. 

\subsubsection{The case $m=1$}
We first consider the simplest case $m=1.$
Then by \eqref{Tr M^m sum evaluation v2}, we have
\begin{align}
\begin{split}
 \Tr (M_n^{(\tau)}) =    \sum_{ l=0 }^{n-1}  \frac{1}{(l!)^2}  \frac{ (1-\tau)^{2l }(1+\tau)^{-2l}  }{ 2^{2l}   } \sum_{j=0}^{n-1}  \Big(\frac{ 1+\tau }{ 2 }\Big)^{2j+\frac12 }   \frac{  (2j)!  }{ (2j-2l)!  } .
\end{split}
\end{align}
Note that for $\tau=1-\alpha^2/(2n)$, we have 
\begin{align*}
\frac{1}{2n}  \frac{ (1-\tau)^{2l }(1+\tau)^{-2l}  }{ 2^{2l}    } \sum_{j=0}^{n-1}  \Big(\frac{ 1+\tau }{ 2 }\Big)^{2j+\frac12 }   \frac{  (2j)!  }{ (2j-2l)!  } \sim     \frac{ (\alpha/2)^{4l}  }{(2n)^{2l+1}} \sum_{j=0}^{n-1}  \Big(\frac{ 1+\tau }{ 2 }\Big)^{2j+\frac12}   \frac{  (2j)!  }{ (2j-2l)!  }. 
\end{align*}
For $l=o(n)$, the Riemann sum approximation gives 
\begin{align*}
 \frac{1}{(2n)^{2l+1}} \sum_{j=0}^{n-1}  \Big(\frac{ 1+\tau }{ 2 }\Big)^{2j+\frac12}   \frac{  (2j)!  }{ (2j-2l)!  } & = \frac{1}{2n} \sum_{j=0}^{n-1}  \Big( 1- \frac{\alpha^2}{4n} \Big)^{2j}  \frac{2j}{2n} \frac{2j-1}{2n} \dots \frac{2j-2l+1}{2n}
\\
& \sim\frac12 \int_0^1 e^{ -\frac{\alpha^2 x}{2} } x^{2l}\,dx=  \frac{ 2^{2l} }{ \alpha^{4l+2} } \gamma \Big(2l+1, \frac{\alpha^2}{2} \Big). 
\end{align*}
Note here that for $l=0$, it matches with \eqref{l=0 contribution} since $\gamma(1,x)=1-e^{-x}$. 
Combining the above, for $l=o(n)$, 
\begin{align}
\frac{1}{2n}\frac{ (1-\tau)^{2l }(1+\tau)^{-2l}  }{ 2^{2l} (l!)^2   }  \sum_{j=0}^{n-1}  \Big(\frac{ 1+\tau }{ 2 }\Big)^{2j+\frac12}   \frac{  (2j)!  }{ (2j-2l)!  }  \sim  \frac{ 1 }{ \alpha^{2} } \frac{1}{ 2^{2l}(l!)^2}   \gamma \Big(2l+1, \frac{\alpha^2}{2} \Big). 
\end{align}
This gives that for a sufficiently large $L>0$, 
\begin{align*}
\frac{1}{2n} \Tr (M_n^{(\tau)})  &\sim  \frac{ 1 }{ \alpha^{2} }  \sum_{l=0}^L \frac{1}{ 2^{2l}(l!)^2}   \gamma \Big(2l+1, \frac{\alpha^2}{2} \Big)  + \sum_{l=L+1}^{n-1}   \frac{1}{(l!)^2}   \frac{ (\alpha/2)^{4l}  }{(2n)^{2l+1}} \sum_{j=0}^{n-1}  \Big(\frac{ 1+\tau }{ 2 }\Big)^{2j+\frac12}   \frac{  (2j)!  }{ (2j-2l)!  } . 
\end{align*}
Note here that for any $l=0,\dots, n-1,$
\begin{align}
 \frac{ 1  }{(2n)^{2l+1}} \sum_{j=0}^{n-1}  \Big(\frac{ 1+\tau }{ 2 }\Big)^{2j+\frac12}   \frac{  (2j)!  }{ (2j-2l)!  }  &= \frac{1}{2n}  \sum_{j=0}^{n-1}  \Big(\frac{ 1+\tau }{ 2 }\Big)^{2j+\frac12}   \frac{2j}{2n} \frac{2j-1}{2n} \dots \frac{2j-2l+1}{2n} < \frac12. 
\end{align}
Thus as $L \to \infty$ keeping $L=o(N)$, we have 
$$
\sum_{l=L+1}^{n-1}   \frac{1}{(l!)^2}   \frac{ (\alpha/2)^{4l}  }{(2n)^{2l+1}} \sum_{j=0}^{n-1}  \Big(\frac{ 1+\tau }{ 2 }\Big)^{2j+\frac12}   \frac{  (2j)!  }{ (2j-2l)!  } \le \sum_{l=L+1}^{n-1} \frac{ (\alpha/2)^{4l} }{2 (l!)^2} \to 0. 
$$
Therefore we obtain 
\begin{equation}
\lim_{n \to \infty} \frac{1}{2n} \Tr (M_n^{(\tau)}) =  \frac{ 1 }{ \alpha^{2} }  \sum_{l=0}^\infty \frac{1}{ 2^{2l}(l!)^2}   \gamma \Big(2l+1, \frac{\alpha^2}{2} \Big).
\end{equation}
Now Proposition~\ref{prop:asympTracem weak} for $m=1$ follows from the following lemma. 

\begin{lem}
We have 
\begin{equation}
  \frac{ 1 }{ \alpha^{2} }  \sum_{l=0}^\infty \frac{1}{ 2^{2l}(l!)^2}   \gamma \Big(2l+1, \frac{\alpha^2}{2} \Big) = \frac{c(\alpha)}{2} .
\end{equation}
\end{lem}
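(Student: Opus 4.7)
The plan is to use the integral representation $\gamma(2l+1, x) = \int_0^x t^{2l} e^{-t}\,dt$ to turn the sum into an integral involving a modified Bessel function, and then verify the resulting identity by differentiation. Since $\frac{1}{2^{2l}(l!)^2}$ decays super-exponentially in $l$, interchanging the sum and the integral is justified by uniform convergence on the compact interval $[0,\alpha^2/2]$, and the resulting series is
\begin{equation*}
\sum_{l=0}^\infty \frac{t^{2l}}{2^{2l}(l!)^2} = I_0(t)
\end{equation*}
by the defining series of $I_0$. Setting $u := \alpha^2/2$, the lemma thus reduces to the identity
\begin{equation*}
\int_0^u e^{-t} I_0(t)\, dt \;=\; u\, e^{-u}\bigl[I_0(u) + I_1(u)\bigr].
\end{equation*}

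I would then establish this identity by differentiation in $u$. Both sides vanish at $u=0$, so it suffices to compare derivatives. The left-hand side differentiates to $e^{-u} I_0(u)$ by the fundamental theorem of calculus. For the right-hand side, I would use the standard Bessel derivative identities $I_0'(u) = I_1(u)$ and $I_1'(u) = I_0(u) - I_1(u)/u$ (see \cite[Chapter 10]{olver2010nist}); after expanding the product rule, the two terms proportional to $u e^{-u}(I_0+I_1)$ cancel, and what remains simplifies to $e^{-u} I_0(u)$ as well. The lemma follows.

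I do not anticipate any real obstacle: the only subtle point is the sum-integral exchange, which is immediate from the super-exponential decay of the coefficients, and the rest is a short Bessel function calculation. If one preferred to avoid invoking the recurrence for $I_1'$, an alternative route would be to expand $\gamma(2l+1, \alpha^2/2)$ in powers of $\alpha^2$ and to match the resulting double series with the expansion of $c(\alpha)/2$ given in \eqref{c m alpha expansion}; however, the differentiation argument above is substantially shorter.
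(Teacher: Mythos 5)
Your proof is correct, and it follows a genuinely different route from the paper's. The paper expands $\gamma(2l+1,\alpha^2/2)$ as a power series in $\alpha^2$ using \eqref{gamma series}, collects coefficients of $(-\alpha^2)^k$, and reduces the lemma to the combinatorial identity $\sum_{s+2l=k} 2^s\,k!/\bigl((l!)^2 s!\bigr) = \binom{2k}{k}$, which it proves by comparing coefficients of $(xy)^k$ in $(x^2+2xy+y^2)^k = (x+y)^{2k}$. You instead use the integral representation $\gamma(2l+1,x)=\int_0^x t^{2l}e^{-t}\,dt$, recognize $\sum_l t^{2l}/(2^{2l}(l!)^2)=I_0(t)$, reduce to the Bessel integral identity $\int_0^u e^{-t}I_0(t)\,dt=u\,e^{-u}[I_0(u)+I_1(u)]$, and verify it by differentiation using $I_0'=I_1$ and $I_1'(u)=I_0(u)-I_1(u)/u$ (your derivative computation checks out: the $u\,e^{-u}(I_0+I_1)$ terms cancel and one is left with $e^{-u}I_0$). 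For this $m=1$ case your argument is shorter and more structural, as it uses nothing but standard Bessel recurrences. The trade-off is that the paper's series-plus-combinatorics strategy is chosen because it scales essentially verbatim to the subsequent lemmas for general $m$, where the inner sum over $l_1,\dots,l_m$ and $M_1,\dots,M_m$ gives a more elaborate combinatorial identity for which no comparably clean Bessel-function reduction is available; you even noted the paper's double-series route as your own fallback alternative.
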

\begin{proof}
Note that 
\begin{equation} \label{gamma series}
\gamma(j,x)= \sum_{s=0}^\infty \frac{(-1)^s x^{j+s}}{ s!(j+s) },
\end{equation}
see \cite[Eq.(8.7.1)]{olver2010nist}. 
Thus we have 
\begin{equation}
\begin{split}
 \frac{ 1 }{ \alpha^{2} }  \frac{1}{ 2^{2l}(l!)^2}  \gamma \Big(2l+1, \frac{\alpha^2}{2} \Big) 
 &=   \frac{ 1 }{ 2^{4l+1}(l!)^2} \sum_{s=0}^\infty \frac{1}{ 2^s s!(2l+1+s) } (-\alpha^2)^{s+2l} . 
\end{split}
\end{equation}
This gives that
\begin{align*}
 \frac{ 1 }{ \alpha^{2} }  \sum_{l=0}^\infty \frac{1}{ 2^{2l}(l!)^2}   \gamma \Big(2l+1, \frac{\alpha^2}{2} \Big)  &= \sum_{k=0}^\infty \bigg( \sum_{\substack{s,l=0 \\ s+2l=k }}^\infty   \frac{ 1 }{ 2^{4l+s+1}(l!)^2  s!(2l+1+s)}  \bigg) (-\alpha^2)^{k}
 \\
  &= \sum_{k=0}^\infty \frac{1}{k+1} \bigg( \sum_{\substack{s,l=0 \\ s+2l=k }}^\infty   \frac{ 2^{s-2k-1} }{ (l!)^2  s! }  \bigg) (-\alpha^2)^{k} . 
\end{align*}
Then by \eqref{c m alpha expansion}, it suffices to show that
\begin{equation}
 \sum_{\substack{s,l=0 \\ s+2l=k }}^\infty 2^{s}  \frac{k!}{(l!)^2 s!}  =  \binom{2k}{k}. 
\end{equation}
This combinatorial identity follows by comparing the coefficient of the $(xy)^k$ term in 
\begin{equation}
(x^2+2xy+y^2)^k = (x+y)^{2k},
\end{equation}
which completes the proof. 
\end{proof}

\subsubsection{The case $m=2$}

For $m=2$, by \eqref{Tr M^m sum evaluation v2},  we have 
\begin{align}
\begin{split}
 \Tr (M_n^{(\tau)})^2 
&=  \sum_{j_1,j_2=0}^{n-1}  \Big(\frac{ 1+\tau }{ 2 }\Big)^{2(j_1+j_2)+1} 
\\
&\quad \times  \sum_{ l_1,l_2=0 }^{n-1}    \frac{ (1-\tau)^{2(l_1+l_2) }(1+\tau)^{-2(l_1+l_2) }  \, (2j_1)! (2j_2)!  }{ 2^{2(l_1+l_2) } l_1! l_2! (l_1+j_{1}-j_{2})! (l_2+j_{2}-j_{1})! (2j_{2}-2l_1)! (2j_{1}-2l_2)!   }   .
\end{split}
\end{align}
Here, the summand does not vanish only for the set of indices 
\begin{equation}
j_1-j_2 \ge -l_1, \qquad j_2-j_1 \ge -l_2, \qquad j_2 \ge l_1, \qquad j_1 \ge l_2. 
\end{equation}

As before, by the rapid decay of the factor $1/(l_1! l_2!)$, it suffices to consider the case $l_1$ and $l_2$ are finite. 
Furthermore, due to the term 
$$
 \frac{1 }{  (l_1+j_{1}-j_{2})! (l_2+j_{2}-j_{1})!  },
$$
it is enough to consider the case $j_1-j_2$ is finite. 
By letting $M=j_1-j_2$, it follows that 
\begin{align}
\begin{split}
\frac{1}{2n}  \Tr (M_n^{(\tau)})^2  & \sim  \sum_{M=-\infty}^\infty   \sum_{ l_1,l_2=0 }^{\infty}  \frac{1 }{ l_1! l_2! (l_1+M)! (l_2-M)! } 
\\
&\quad \times \frac{1}{2n} \sum_{j=0}^{n-1} \Big(\frac{ 1+\tau }{ 2 }\Big)^{4j}      \frac{ (1-\tau)^{2(l_1+l_2) }(1+\tau)^{-2(l_1+l_2) }  \, (2j+M)! (2j-M)!  }{ 2^{2(l_1+l_2) }  (2j-2M-2l_1)! (2j-2l_2)!   }.
\end{split}
\end{align}
Here, by the Riemann sum approximation, we have 
\begin{align}
\begin{split}
&\quad \frac{1}{2n} \sum_{j=0}^{n-1} \Big(\frac{ 1+\tau }{ 2 }\Big)^{4j}      \frac{ (1-\tau)^{2(l_1+l_2) }(1+\tau)^{-2(l_1+l_2) }  \, (2j+M)! (2j-M)!  }{ 2^{2(l_1+l_2) }  (2j-2M-2l_1)! (2j-2l_2)!   }
\\
& \sim \frac{(\alpha/2)^{4l}}{2}  \int_0^1 e^{ - \alpha^2 x   } x^{2l} \,dx = \frac{1}{ \alpha^2 \,2^{4l+1} }\gamma \Big(2l+1, \alpha^2 \Big),
\end{split}
\end{align}
where we write $l=l_1+l_2$. 
Therefore we obtain 
\begin{equation}
\lim_{n \to \infty} \frac{1}{2n}  \Tr (M_n^{(\tau)})^2=  \frac{1}{\alpha^2} \sum_{M=-\infty}^{ \infty } \sum_{ \substack{l_1,l_2=0 \\ l_1+l_2=l } }^\infty   \frac{1}{ 2^{4l+1} }  \frac{1}{  l_1! l_2! (l_1+M)!(l_2-M)! } \gamma \Big(2l+1, \alpha^2 \Big).
\end{equation}
Now it suffices to show the following.

\begin{lem}
We have 
\begin{equation}
\frac{1}{\alpha^2} \sum_{M=-\infty}^\infty \sum_{ \substack{l_1,l_2=0 \\ l_1+l_2=l } }^\infty   \frac{1}{ 2^{4l+1} }  \frac{1}{  l_1! l_2!  (l_1+M)! (l_2-M)! } \gamma \Big(2l+1, \alpha^2 \Big) = \frac{ c(\sqrt{2}\alpha) }{2}. 
\end{equation}
\end{lem}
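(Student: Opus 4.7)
The plan is to reduce this double combinatorial sum to the already-proven $m=1$ lemma by carrying out the inner summations in stages.

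First, I would interchange the sums over $M$ and $(l_1,l_2)$, which is legitimate since every summand is positive (note $\gamma(2l+1,\alpha^2)>0$). For fixed $l_1, l_2\geq 0$ with $l_1+l_2=l$, substitute $j=l_1+M$ in the $M$-sum. The factorials $(l_1+M)!$ and $(l_2-M)!$ (interpreted as $1/n!=0$ for $n<0$) force $0\leq j \leq l$, and
\begin{equation*}
\sum_{M=-\infty}^{\infty}\frac{1}{(l_1+M)!(l_2-M)!}=\sum_{j=0}^{l}\frac{1}{j!(l-j)!}=\frac{1}{l!}\sum_{j=0}^{l}\binom{l}{j}=\frac{2^l}{l!}.
\end{equation*}
The crucial observation is that this is independent of how $l$ splits as $l_1+l_2$.

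Next I would perform the outer sum over $(l_1,l_2)$ by the binomial theorem:
\begin{equation*}
\sum_{\substack{l_1,l_2\geq 0\\ l_1+l_2=l}}\frac{1}{l_1!\,l_2!}=\frac{2^l}{l!}.
\end{equation*}
Combining the two, the double sum collapses to $4^l/(l!)^2$. Using $4^l/2^{4l+1}=1/(2\cdot 2^{2l})$, the left-hand side becomes
\begin{equation*}
\frac{1}{\alpha^2}\sum_{l=0}^{\infty}\frac{4^l}{2^{4l+1}(l!)^2}\,\gamma\bigl(2l+1,\alpha^2\bigr)=\frac{1}{2\alpha^2}\sum_{l=0}^{\infty}\frac{1}{2^{2l}(l!)^2}\,\gamma\bigl(2l+1,\alpha^2\bigr).
\end{equation*}

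Finally, I would invoke the $m=1$ lemma already proven above, which states
\begin{equation*}
\frac{1}{\beta^2}\sum_{l=0}^{\infty}\frac{1}{2^{2l}(l!)^2}\,\gamma\Bigl(2l+1,\tfrac{\beta^2}{2}\Bigr)=\frac{c(\beta)}{2},
\end{equation*}
and substitute $\beta=\sqrt{2}\,\alpha$, so that $\beta^2/2=\alpha^2$ and $1/\beta^2=1/(2\alpha^2)$. This matches the expression above exactly and yields $c(\sqrt{2}\,\alpha)/2$, which is the desired right-hand side.

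There is no real obstacle here; the only nontrivial ingredient is the Vandermonde-type collapse $\sum_M 1/[(l_1+M)!(l_2-M)!]=2^l/l!$, which is the technical reason why the $m=2$ case factors cleanly into a rescaled version of the $m=1$ identity. This factoring mechanism is precisely what one would expect to generalise for higher $m$: the $M$-sums produce multinomial/Vandermonde reductions that decouple the indices.
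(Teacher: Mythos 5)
Your proof is correct, and it takes a genuinely different route from the paper's. The paper expands $\gamma(2l+1,\alpha^2)$ in its power series, collects the coefficient of $(-\alpha^2)^k$, compares with the expansion \eqref{c m alpha expansion} of $c(\sqrt{2}\alpha)/2$, and reduces to the combinatorial identity \eqref{combinatorial identity m2}, which it then proves by reading off the coefficient of $(xy)^k$ in $(x^2/2+x^2/2+2xy+y^2/2+y^2/2)^k=(x+y)^{2k}$. You instead collapse the combinatorial sums first, before touching the $\gamma$ function: the Vandermonde-type convolution $\sum_M 1/[(l_1+M)!(l_2-M)!]=2^l/l!$ and the multinomial identity $\sum_{l_1+l_2=l}1/(l_1!l_2!)=2^l/l!$ reduce the double sum to the single series appearing in the $m=1$ lemma, rescaled by $\alpha\mapsto\sqrt{2}\alpha$. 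This is arguably cleaner, and it does generalise to arbitrary $m$ more transparently than you let on: writing $j_k=l_k+M_k$ turns the constraint $\sum M_k=0$ into $\sum j_k=\sum l_k=l$, so both sums become independent multinomial convolutions, each worth $m^l/l!$, and the whole $m$-case reduces to the $m=1$ identity with $\alpha\mapsto\sqrt{m}\alpha$. What the paper's approach buys instead is a uniform, self-contained treatment of all $m$ at once via a single $(x+y)^{2k}$ expansion, without having to single out $m=1$ as a base case; what yours buys is the structural insight that the higher-$m$ identities are literally rescalings of the $m=1$ one, with no new combinatorics needed.
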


\begin{proof}
Using \eqref{gamma series}, 
we have 
\begin{align*}
&\quad \frac{1}{\alpha^2} \sum_{ \substack{l_1,l_2=0 \\ l_1+l_2=l } }^\infty   \frac{1}{ 2^{4l+1} }  \frac{1}{  (l_1!)^2 (l_2!)^2 } \gamma \Big(2l+1, \alpha^2 \Big) 
=\sum_{ \substack{l_1,l_2=0 \\ l_1+l_2=l } }^\infty  \frac{1}{  2^{4l+1} }  \frac{1}{  (l_1)!^2 (l_2)!^2 } \sum_{s=0}^\infty \frac{ (-\alpha^2)^{s+2l} }{ s! (2l+1+s) }
\\
&= \sum_{k=0}^\infty \bigg(   \sum_{ \substack{s,l=0 \\ s+2l=k } }^\infty  \sum_{ \substack{l_1,l_2=0 \\ l_1+l_2=l } }^\infty   \frac{1}{ 2^{4l+1} } \frac{1}{  (l_1)!^2 (l_2)!^2 } \frac{ 1 }{ s! (2l+1+s) } \bigg) (-\alpha^2)^k 
\\
&= \sum_{k=0}^\infty \frac{1}{k+1}\bigg(     \sum_{ \substack{s,l=0 \\ s+2l=k } }^\infty  \sum_{ \substack{l_1,l_2=0 \\ l_1+l_2=l } }^\infty   \frac{1}{ 2^{4l+1} } \frac{1}{  (l_1)!^2 (l_2)!^2 } \frac{ 1 }{ s!  } \bigg) (-\alpha^2)^k . 
\end{align*}
Then by \eqref{c m alpha expansion}, it suffices to show that
\begin{equation} \label{combinatorial identity m2}
\sum_{M=-\infty}^\infty \sum_{ \substack{s,l=0 \\ s+2l=k } }^\infty  \sum_{ \substack{l_1,l_2=0 \\ l_1+l_2=l } }^\infty   2^{s-2l} \frac{k!}{  l_1! l_2! (l_1+M)! (l_2-M)! s! } =  \binom{2k}{k}. 
\end{equation}
Note that for any $k$, there are only finite numbers of non-trivial summands in the left-hand side of this identity. 
To prove this combinatorial identity, we can compare the coefficient of the $(xy)^k$ term in the expansion of 
\begin{equation} \label{binomial m2}
\Big(\frac{x^2}{2}+\frac{x^2}{2}+2xy+\frac{y^2}{2}+\frac{y^2}{2} \Big)^k = (x+y)^{2k}.
\end{equation}
To be more precise, the left-hand side of \eqref{combinatorial identity m2} can be obtained as the coefficient of the $(xy)^k$ term in the expansion of the left-hand side of \eqref{binomial m2}. 
For this, we choose $l_1$ and $l_2$ instances of the first two terms $x^2/2$ and $x^2/2$, respectively, $s$ instances of the term $2xy$, and $l_1+M$ and $l_2-M$ instances of the last two terms $y^2/2$ and $y^2/2$, respectively. We then collect all possible combinations of $l_1$, $l_2$, and $M$ such that 
$$
l_1+l_2+s+(l_1+M)+(l_2-M)= s+2(l_1+l_2) = s +2l = k. 
$$
This completes the proof. 
\end{proof}

\subsection{Proof of Proposition~\ref{prop:asympTracem weak}}


We now consider the general case with $m \ge 1$.
Let us first rewrite \eqref{Tr M^m sum evaluation v2} as 
\begin{align*}
 \Tr (M_n^{(\tau)})^m &=  \sum_{j_1,\dots,j_m=0}^{n-1}  \Big(\frac{ 1+\tau }{ 2 }\Big)^{2\sum_{k=1}^m j_k+ \frac{m}{2} }\sum_{ l_1,\dots, l_m=0 }^{n-1}   \frac{ (1-\tau)^{2(l_1+\dots+l_m) }(1+\tau)^{-2(l_1+\dots +l_m)}   }{ 2^{2(l_1+\dots+l_m )} l_1! \dots l_m!  }
 \\
 & \quad \times  \frac{ (2j_1)! \dots  (2j_m)!  }{  (l_1+M_1)! \dots (l_m+M_m)! \, (2j_{m}-2l_1)! (2j_{1}-2l_2)! \dots (2j_{m-1}-2l_m)!  },
\end{align*}
where $M_k=j_k-j_{k-1}$ ($k=1,\dots, m$). 
Again, it suffices to consider the case that the $l_j$'s are finite. 
As a consequence, due to the terms 
$$
 \frac{ 1  }{  (l_1+M_{1})! \cdots (l_m+M_m)!  } 
$$
in \eqref{Tr M^m sum evaluation v2},  it is enough to take the case $j_k-j_{k-1}$ ($k=1,\dots , m$) finite into account. 
We write 
$$
\wt{M}_k= \sum_{p=2}^k M_p= j_k-j_1.  
$$
Then by combining the above, we obtain 
\begin{align*}
\begin{split}
&\quad \frac{1}{2n}  \Tr (M_n^{(\tau)})^m   \sim   \sum_{ \substack{ M_1,\dots , M_m=-\infty \\ M_1+\dots+M_m=0}  }^\infty 
 \sum_{ \substack{l_1,\dots l_m=0 \\ l_1+\dots+l_m=l } }^\infty    \frac{1 }{ l_1! \dots  l_m! (l_1+M_1)! \dots (l_m+M_m)! } 
\\
&\quad \times \frac{1}{2n} \sum_{j=0}^{n-1} \Big(\frac{ 1+\tau }{ 2 }\Big)^{2mj+\frac{m}{2}}   \frac{(1-\tau)^{2l }  }{ (2(1+\tau))^{2l }   }   \frac{  (2j)! (2j+2\wt{M}_2 ) \dots (2j+2\wt{M}_m )!  }{  (2j+ 2\wt{M}_m -2l_1)! (2j-2l_2)!\dots (2j+ 2 \wt{M}_{m-1} -2l_m)!   },
\end{split}
\end{align*}
where $l=l_1+\dots+l_m$. 
Note that 
\begin{align*}
&\quad (1-\tau)^{2l}  \frac{  (2j)! (2j+2\wt{M}_2 ) \dots (2j+2\wt{M}_m )!  }{  (2j+ 2\wt{M}_m -2l_1)! (2j-2l_2)!\dots (2j+ 2 \wt{M}_{m-1} -2l_m)!   } 
\\
&= \alpha^{4l} \Big(\frac{ 2j+2 \wt{M}_m-2l_1+1 }{ 2n } \dots \frac{2j}{2n}\Big) \dots \Big( \frac{ 2j+2 \wt{M}_{m-1}-2l_m+1 }{ 2n } \dots \frac{2j+2 \wt{M}_m }{2n} \Big). 
\end{align*}
Then it follows from the Riemann sum approximation that 
\begin{align}
\begin{split}
&\quad \frac{1}{2n} \sum_{j=0}^{n-1} \Big(\frac{ 1+\tau }{ 2 }\Big)^{2mj+\frac{m}{2}}     \frac{(1-\tau)^{2l }  }{ (2(1+\tau))^{2l }   }   \frac{  (2j)! (2j+2\wt{M}_2 ) \dots (2j+2\wt{M}_m )!  }{  (2j+ 2\wt{M}_m -2l_1)! (2j-2l_2)!\dots (2j+ 2 \wt{M}_{m-1} -2l_m)!   }
\\
& \sim \frac{ (\alpha/2)^{4l} }{ 2 }   \int_0^1 e^{ -\frac{ m\alpha^2 }{2} x }  x^{2l} \,dx =  \frac{1}{\alpha^2}  \frac{1}{ 2^{2l} m^{2l+1} } \gamma\Big( 2l+1, \frac{\alpha^2m}{2} \Big) .
\end{split}
\end{align}
Therefore we obtain 
\begin{equation}
\begin{split}
&\quad \lim_{n \to \infty} \frac{1}{2n}  \Tr (M_n^{(\tau)})^m
 \\
 & =  \frac{1}{\alpha^2} \sum_{ \substack{ M_1,\dots , M_m=-\infty \\ M_1+\dots+M_m=0}  }^\infty 
 \sum_{ \substack{l_1,\dots l_m=0 \\ l_1+\dots+l_m=l } }^\infty   \frac{1 }{ l_1! \dots  l_m! (l_1+M_1)! \dots (l_m+M_m)! }   \frac{1}{ 2^{2l} m^{2l+1} } \gamma\Big( 2l+1, \frac{\alpha^2m}{2} \Big) .
\end{split}
\end{equation}
Then the following lemma completes the proof of Proposition~\ref{prop:asympTracem weak}.

\begin{lem}
We have 
\begin{equation*}
 \frac{1}{\alpha^2}  \sum_{ \substack{ M_1,\dots , M_m=-\infty \\ M_1+\dots+M_m=0}  }^\infty  
 \sum_{ \substack{l_1,\dots l_m=0 \\ l_1+\dots+l_m=l } }^\infty   \frac{1 }{ l_1! \dots  l_m! (l_1+M_1)! \dots (l_m+M_m)! }   \frac{1}{ 2^{2l} m^{2l+1} } \gamma\Big( 2l+1, \frac{\alpha^2m}{2} \Big)  = \frac{ c(\sqrt{m}\alpha) }{2}.
\end{equation*}
\end{lem}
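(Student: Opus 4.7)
The plan is to mimic the strategy successfully used for $m=1,2$, namely to expand the incomplete gamma function as a power series in $\alpha^{2}$, rearrange the summations, and reduce the identity to a combinatorial statement obtained by comparing coefficients in a multinomial expansion.

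First, using the series representation \eqref{gamma series},
\[
\frac{1}{\alpha^{2}}\,\frac{1}{2^{2l}\,m^{2l+1}}\,\gamma\!\Big(2l+1,\tfrac{\alpha^{2}m}{2}\Big)
=\sum_{s=0}^{\infty}\frac{(-1)^{s}\,\alpha^{4l+2s}\,m^{s}}{2^{4l+s+1}\,s!\,(2l+1+s)}
=\sum_{k\ge 2l}\frac{(-\alpha^{2})^{k}\,m^{k-2l}}{2^{2l+k+1}\,(k-2l)!\,(k+1)},
\]
where we substituted $k=2l+s$ in the second equality. Plugging this into the left-hand side of the lemma and interchanging the order of summation (justified because, for each fixed $k$, only finitely many $(l,l_{1},\dots,l_{m},M_{1},\dots,M_{m})$ contribute: one has $l\le k/2$, and the constraints $l_{i}+M_{i}\ge 0$ together with $\sum M_{i}=0$ force $|M_{i}|\le l$), we obtain
\[
\text{LHS}=\sum_{k=0}^{\infty}\frac{(-\alpha^{2})^{k}}{2^{k+1}(k+1)}\,A_{k,m},
\qquad
A_{k,m}:=\sum_{l=0}^{\lfloor k/2\rfloor}\frac{m^{k-2l}}{2^{2l}(k-2l)!}\!\!\sum_{\substack{l_{1}+\dots+l_{m}=l\\ M_{1}+\dots+M_{m}=0\\ l_{i}+M_{i}\ge 0}}\!\!\frac{1}{\prod_{i=1}^{m}l_{i}!\,(l_{i}+M_{i})!}.
\]
Comparing with the series expansion \eqref{c m alpha expansion} of $c(\sqrt{m}\,\alpha)/2$, the lemma reduces to the identity
\[
A_{k,m}=\frac{m^{k}}{2^{k}\,k!}\binom{2k}{k},\qquad k\ge 0,\; m\ge 1.
\]

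The second step is to establish this combinatorial identity via multinomial expansion, generalising \eqref{binomial m2}. Setting $X_{i}=x^{2}/m$ and $Y_{i}=y^{2}/m$ for $i=1,\dots,m$, we have
\[
\Bigl(X_{1}+\dots+X_{m}+2xy+Y_{1}+\dots+Y_{m}\Bigr)^{k}=(x^{2}+2xy+y^{2})^{k}=(x+y)^{2k}.
\]
The coefficient of $(xy)^{k}$ on the right is $\binom{2k}{k}$, while on the left, multinomial expansion produces a term $\prod X_{i}^{a_{i}}(2xy)^{s}\prod Y_{i}^{b_{i}}$ with multinomial coefficient $k!/(s!\prod a_{i}!\prod b_{i}!)$; the factor $(xy)^{k}$ arises precisely when $2\sum a_{i}+s=k=2\sum b_{i}+s$, equivalently $\sum a_{i}=\sum b_{i}=l$ and $s=k-2l$. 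Substituting $a_{i}=l_{i}$ and $b_{i}=l_{i}+M_{i}$ (so that $\sum M_{i}=0$ and $l_{i}+M_{i}\ge 0$ are automatic), and collecting the factors $(1/m)^{\sum a_{i}+\sum b_{i}}=m^{-2l}$ and $2^{s}=2^{k-2l}$, gives exactly $k!\cdot m^{-k}\cdot A_{k,m}=\binom{2k}{k}$, as required.

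The only minor obstacle is the bookkeeping: ensuring the rearrangement of the alternating series in $\alpha^{2}$ is legitimate and that the substitution $(a_{i},b_{i})\leftrightarrow(l_{i},l_{i}+M_{i})$ is a bijection between the indexing sets. Both are immediate once one observes the finiteness argument above. Once the combinatorial identity is verified, the proof is complete, and this in particular recovers the $m=1,2$ cases treated earlier by direct expansion of $(x+y)^{2k}$.
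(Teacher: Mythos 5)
Your proof is correct and follows essentially the same route as the paper: expand the incomplete gamma function via its power series, rearrange into a series in $(-\alpha^2)^k$, and reduce the claim to a combinatorial identity obtained by comparing coefficients of $(xy)^k$ in $\bigl(\tfrac{x^2}{m}+\dots+\tfrac{x^2}{m}+2xy+\tfrac{y^2}{m}+\dots+\tfrac{y^2}{m}\bigr)^k = (x+y)^{2k}$. One small typo at the very end: the multinomial bookkeeping actually yields $2^k\,k!\,m^{-k}\,A_{k,m}=\binom{2k}{k}$ (the factor $2^k$ is missing in your final display), which is consistent with the target identity $A_{k,m}=\frac{m^k}{2^k k!}\binom{2k}{k}$ that you correctly stated earlier.
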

\begin{proof}
Using \eqref{gamma series}, we have 
\begin{align*}
&\quad \frac{1}{\alpha^2} \sum_{ \substack{l_1,\dots l_m=0 \\ l_1+\dots+l_m=l } }^\infty     \frac{1 }{ l_1! \dots  l_m! (l_1+M_1)! \dots (l_m+M_m)! }   \frac{1}{ 2^{2l} m^{2l+1} } \gamma\Big( 2l+1, \frac{\alpha^2m}{2} \Big) 
\\
&= \frac{1}{\alpha^2} \sum_{ \substack{l_1,\dots l_m=0 \\ l_1+\dots+l_m=l } }^\infty    \frac{1 }{ l_1! \dots  l_m! (l_1+M_1)! \dots (l_m+M_m)! }   \frac{1}{ 2^{2l} m^{2l+1} } \sum_{ s=0 }^\infty \frac{ (-1)^{s } }{ s! (s+2l+1) } \Big( \frac{\alpha^2m}{2} \Big)^{s+2l+1} 
\\
&=  \sum_{ \substack{l_1,\dots l_m=0 \\ l_1+\dots+l_m=l } }^\infty   \frac{1}{ 2^{4l+1}  }   \frac{1 }{ l_1! \dots  l_m! (l_1+M_1)! \dots (l_m+M_m)! } \sum_{ s=0 }^\infty \frac{ m^s }{2^s s! (s+2l+1) } (-\alpha^2)^{s+2l} .
\end{align*}
This can be rewritten as 
\begin{align*}
& \quad  \sum_{ \substack{l_1,\dots l_m=0 \\ l_1+\dots+l_m=l } }^\infty  \frac{1}{ 2^{4l+1}  }     \frac{1 }{ l_1! \dots  l_m! (l_1+M_1)! \dots (l_m+M_m)! } \sum_{ s=0 }^\infty \frac{ m^s }{2^s s! (s+2l+1) } (-\alpha^2)^{s+2l} 
\\
&= \sum_{k=0}^\infty \sum_{ \substack{s,l=0\\ s+2l=k } }^\infty     \sum_{ \substack{l_1,\dots l_m=0 \\ l_1+\dots+l_m=l } }^\infty  \frac{1}{ 2^{4l+1}  }    \frac{1 }{ l_1! \dots  l_m! (l_1+M_1)! \dots (l_m+M_m)! }      \frac{ m^s }{2^s s! (s+2l+1) }              (-\alpha^2)^{k} .
\end{align*}
By \eqref{c m alpha expansion}, all we need to show is
\begin{equation} \label{combinatorial identity m}
 \sum_{ \substack{ M_1,\dots , M_m=-\infty \\ M_1+\dots+M_m=0}  }^\infty   \sum_{ \substack{s,l=0\\ s+2l=k } }^\infty     \sum_{ \substack{l_1,\dots l_m=0 \\ l_1+\dots+l_m=l } }^\infty   \frac{ 2^s\,m^{s-k}\, k! }{ l_1! \dots  l_m! (l_1+M_1)! \dots (l_m+M_m)! \,s! } =  \binom{2k}{k}.
\end{equation}
As before, this identity follows by comparing the coefficient of $(xy)^k$ term in 
\begin{equation} \label{binomial m}
\Big(\frac{x^2}{m}+\dots+\frac{x^2}{m}+2xy+\frac{y^2}{m}+\dots+\frac{y^2}{m} \Big)^k = (x+y)^{2k}.
\end{equation}
Namely, we choose $l_1,l_2,\dots,l_m$ instances of the first $m$ terms $x^2/m, \dots, x^2/m$, respectively, $s$ instances of the term $2xy$, and $l_1+M_1, \dots, l_m+M_m$ instances of the last $m$ terms $y^2/m, \dots, y^2/m$, respectively. 
We then collect all possible combinations of $l_1,\dots, l_m$, and $M_1,\dots,M_m$ such that $M_1+\dots M_m=0$ and 
$$
\Big(l_1+ \dots +l_m\Big)+ s+\Big( (l_1+M_1)+\dots (l_m+M_m) \Big)= s+2(l_1+\dots+l_m) = s +2l = k. 
$$
Notice here that the exponent of $m$ is 
$$
-\Big(l_1+ \dots +l_m\Big)- \Big( (l_1+M_1)+\dots (l_m+M_m) \Big) = -2l = s-k. 
$$
This completes the proof. 
\end{proof}

\appendix

\section{Auxiliary lemmas} \label{Appendix A_auxiliary lemmas}

\begin{lem} \label{Lem_d(alpha) integral rep}
For fixed $\alpha>0$, we have
\begin{equation}
- \sum_{m=1}^\infty \frac{ c(\sqrt{m}\,\alpha) }{2m} = \frac{2}{\pi} \int_0^1 \log\Big(1-e^{-\alpha^2 s^2}\Big) \sqrt{1-s^2} \, ds.
\end{equation}
\end{lem}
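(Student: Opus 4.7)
The plan is to reduce the identity to a single, parameter-free representation of $c$ itself, and then to sum a geometric series under the integral.

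First I would establish the key integral representation
\begin{equation*}
\frac{c(\beta)}{2} \;=\; \frac{2}{\pi}\int_0^1 e^{-\beta^2 s^2}\sqrt{1-s^2}\,ds, \qquad \beta>0.
\end{equation*}
Starting from the definition $c(\beta)=e^{-\beta^2/2}[I_0(\beta^2/2)+I_1(\beta^2/2)]$ and using the standard representation $I_n(z)=\frac{1}{\pi}\int_0^\pi e^{z\cos\theta}\cos(n\theta)\,d\theta$, I would combine the two Bessel terms into $\int_0^\pi e^{(\beta^2/2)\cos\theta}(1+\cos\theta)\,d\theta$, pull the factor $e^{-\beta^2/2}$ inside, apply the half-angle identities $1-\cos\theta=2\sin^2(\theta/2)$ and $1+\cos\theta=2\cos^2(\theta/2)$, substitute $\phi=\theta/2$, and finally $s=\sin\phi$ to land on the claimed integral.

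Next, I would expand the logarithm in the target integral as a geometric series. Since $0\le e^{-\alpha^2 s^2}<1$ for $s\in(0,1]$, one has
\begin{equation*}
-\log\bigl(1-e^{-\alpha^2 s^2}\bigr) \;=\; \sum_{m=1}^\infty \frac{e^{-m\alpha^2 s^2}}{m}.
\end{equation*}
Multiplying by $\sqrt{1-s^2}$ and integrating, all terms are nonnegative, so Tonelli's theorem legitimises the exchange of summation and integration, giving
\begin{equation*}
-\frac{2}{\pi}\int_0^1 \log\bigl(1-e^{-\alpha^2 s^2}\bigr)\sqrt{1-s^2}\,ds \;=\; \sum_{m=1}^\infty \frac{1}{m}\cdot\frac{2}{\pi}\int_0^1 e^{-m\alpha^2 s^2}\sqrt{1-s^2}\,ds.
\end{equation*}
Applying the integral representation of the first step with $\beta=\sqrt{m}\,\alpha$ identifies each summand with $\frac{c(\sqrt{m}\,\alpha)}{2m}$, yielding exactly $\sum_{m=1}^\infty \frac{c(\sqrt{m}\,\alpha)}{2m}$, which completes the proof after multiplying by $-1$.

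There is essentially no hard step: the only points requiring a word of care are that the integrand $-\log(1-e^{-\alpha^2 s^2})\sqrt{1-s^2}$ has an integrable logarithmic singularity at $s=0$ (so the left-hand side is finite), and that the series $\sum_{m\ge 1} c(\sqrt m\,\alpha)/m$ converges; both follow from the uniform bound $0<c(\beta)\le 1$ together with the decay $c(\beta)\sim \sqrt{2/(\pi\beta^2)}$ as $\beta\to\infty$ implicit in \eqref{c(alpha) v2}, which ensures the summands are $O(m^{-3/2})$.
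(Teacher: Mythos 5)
Your proof is correct, and it follows the same two-step skeleton as the paper's: establish the single-parameter identity $\tfrac{c(\beta)}{2}=\tfrac{2}{\pi}\int_0^1 e^{-\beta^2 s^2}\sqrt{1-s^2}\,ds$, then sum a geometric series under the integral. Where you differ is in the execution of both steps. For the integral representation you go directly from the Bessel definition $c(\beta)=e^{-\beta^2/2}[I_0(\beta^2/2)+I_1(\beta^2/2)]$ via Schl\"afli's integral $I_n(z)=\tfrac{1}{\pi}\int_0^\pi e^{z\cos\theta}\cos(n\theta)\,d\theta$, half-angle identities, and two substitutions; the paper instead starts from the $\erf$ representation \eqref{c(alpha) v2} and performs a single integration by parts, which is shorter given that \eqref{c(alpha) v2} is already quoted. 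For the interchange of summation and integration you invoke Tonelli, which is immediate because all terms $\tfrac{1}{m}e^{-m\alpha^2 s^2}\sqrt{1-s^2}$ are nonnegative; this is genuinely cleaner than the paper's hands-on $\varepsilon$-argument (split $\int_0^\varepsilon$ off and estimate the tail of $\sum_m \tfrac1m\int_0^\varepsilon e^{-m\alpha^2 s^2}\sqrt{1-s^2}\,ds$ using steepest-descent asymptotics), and it does not require any a priori bound on $c$. Your closing remark on finiteness of both sides is the right thing to note, though the displayed asymptotic constant should read $c(\beta)\sim 2/(\beta\sqrt\pi)$ rather than $\sqrt{2/(\pi\beta^2)}$; this is immaterial since only the $O(m^{-3/2})$ rate is used.
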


\begin{proof}
We notice that
\begin{align*}
&\quad \frac{1}{\alpha} \int_0^1 \erf(\alpha \sqrt{1-s^2}) \, ds = \frac{1}{\alpha} \int_0^1 \erf(\alpha s) \frac{s \, ds}{\sqrt{1-s^2}}\\
&= \Big[- \frac{1}{\alpha} \erf(\alpha s) \sqrt{1-s^2}\Big]_0^1
+ \frac{2}{\sqrt\pi}\int_0^1 e^{-(\alpha s)^2} \sqrt{1-s^2} \, ds
= \frac{2}{\sqrt\pi} \int_0^1 e^{-\alpha^2 s^2} \sqrt{1-s^2} \, ds.
\end{align*}
We conclude, using \eqref{c(alpha)} and \eqref{c(alpha) v2}, that
\begin{align*}
- \sum_{m=1}^\infty \frac{ c(\sqrt{m}\,\alpha) }{2m} = - \frac{2}{\pi} \sum_{m=1}^\infty \int_0^1 \frac{1}{m} e^{-m \alpha^2 s^2} \sqrt{1-s^2} \, ds
= \frac{2}{\pi} \int_0^1 \log\Big(1-e^{-\alpha^2 s^2}\Big) \sqrt{1-s^2} \, ds.
\end{align*}
We need to prove that we may indeed interchange the order of summation and integration. We start with the observation, based on steepest descent arguments, that there exists an integer $M>0$ such that $m>M$ implies that
\begin{align*}
\Big|\int_0^1 e^{-m\alpha^2 s^2}\sqrt{1-s^2} ds- \frac{1}{2\alpha}\sqrt{\frac{\pi}{m}}\Big|\leq C \frac{1}{m\sqrt m}
\end{align*}
for some uniform constant $C$ (depending only on $\alpha$). Now let $\varepsilon>0$ satisfy $\varepsilon<M^{-2}$. Let $M_\varepsilon$ be the integer in $(\varepsilon^{-1/2},1+\varepsilon^{-1/2}]$. Then we have $M_\varepsilon>M$, and thus
\begin{align*}
\sum_{m=1}^\infty \frac{1}{m}\int_0^\varepsilon e^{-m\alpha^2 s^2} \sqrt{1-s^2} ds
&=\sum_{m=1}^{M_\varepsilon} \frac{1}{m} \int_0^\varepsilon e^{-m\alpha^2 s^2} \sqrt{1-s^2} ds+\sum_{m=M_\varepsilon+1}^{\infty} \frac{1}{m} \int_0^\varepsilon e^{-m\alpha^2 s^2} \sqrt{1-s^2} ds\\
&\leq \sum_{m=1}^{M_\varepsilon} \frac{\varepsilon}{m} 
+ \sum_{m=M_\varepsilon+1}^\infty \frac{1}{m}\Big(\frac{1}{2\alpha}\sqrt{\frac{\pi}{m}}+\frac{C}{m\sqrt m}\Big) \\
&\leq \varepsilon (1+\log M_\varepsilon)
+ \frac{\sqrt\pi}{\alpha} \frac{1}{ \sqrt{M_\varepsilon}}+\frac{2}{3}
\frac{C}{M_\varepsilon\sqrt{M_\varepsilon}}.
\end{align*}
This tends to $0$ as $\varepsilon\to 0$, and we are done.
\end{proof}

\begin{lem} \label{lem:eCoshProdInt2}
Let $k>1$ be an integer and let $x_0, x_k\in\mathbb R$. Then we have
\begin{align*}
    \int_{\mathbb R_+^{k-1}} e^{-\sum_{j=1}^{k-1} x_j^2} \prod_{j=1}^k \cosh(x_{j-1} x_{j}) \, dx_1\cdots dx_{k-1}
    = \frac{1}{\sqrt{k}} \left(\frac{\pi}{2}\right)^\frac{k-1}{2} e^{\frac{k-1}{k}\frac{x_0^2+x_{k}^2}{2}}
    \cosh \Big(\frac{x_0 x_k}{k} \Big).
\end{align*}
\end{lem}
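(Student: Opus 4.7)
The plan is to prove the identity by induction on $k$. Denoting the left-hand side by $I_k(x_0,x_k)$, the base case $k=1$ is trivial since the integral is empty and both sides reduce to $\cosh(x_0 x_1)$. By isolating the integration in the variable $x_{k-1}$, one obtains the recursion
\begin{equation*}
I_k(x_0, x_k) \;=\; \int_0^\infty I_{k-1}(x_0, x_{k-1})\, e^{-x_{k-1}^2} \cosh(x_{k-1} x_k)\, dx_{k-1},
\end{equation*}
since integrating out $x_1,\dots,x_{k-2}$ against $e^{-\sum_{j=1}^{k-2}x_j^2}\prod_{j=1}^{k-1}\cosh(x_{j-1}x_j)$ produces precisely $I_{k-1}(x_0,x_{k-1})$.

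For the inductive step, I would substitute the inductive hypothesis for $I_{k-1}$ and apply the product-to-sum formula $\cosh A\cosh B=\tfrac12[\cosh(A+B)+\cosh(A-B)]$ to combine $\cosh(x_{k-1}x_k)$ with $\cosh(x_0 x_{k-1}/(k-1))$. The inner integral then reduces to two applications of the elementary Gaussian identity
\begin{equation*}
\int_0^\infty e^{-a x^2}\cosh(bx)\, dx \;=\; \tfrac{1}{2}\sqrt{\pi/a}\; e^{b^2/(4a)},
\end{equation*}
valid for $a>0$, which follows by extending the integral to $\mathbb R$ via the evenness of $\cosh$ and completing the square. The resulting coefficient of $x_{k-1}^2$ in the exponential is $a = 1-\frac{k-2}{2(k-1)} = \frac{k}{2(k-1)}$.

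All that remains is to verify that the constants match. Explicitly, (i) the prefactor collapses as $\tfrac{1}{\sqrt{k-1}}(\pi/2)^{(k-2)/2}\cdot\tfrac12\sqrt{2\pi(k-1)/k}=\tfrac{1}{\sqrt k}(\pi/2)^{(k-1)/2}$; (ii) the coefficient of $x_0^2$ in the total exponent equals $\frac{k-2}{2(k-1)}+\frac{1}{2k(k-1)}=\frac{(k-1)^2}{2k(k-1)}=\frac{k-1}{2k}$; (iii) the coefficient of $x_k^2$ is immediately $\frac{k-1}{2k}=\frac{1}{4a}$; and (iv) the cross term $\tfrac{b}{2a}\cdot x_0 x_k$ inside $\cosh(\cdot)$ evaluates to $\frac{x_0 x_k}{k-1}\cdot\frac{k-1}{k}=\frac{x_0 x_k}{k}$. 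There is no genuine obstacle here; the only potential pitfall is the bookkeeping of constants in the induction step, which is routine.
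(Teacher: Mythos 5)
Your proof is correct and takes essentially the same route as the paper: an induction on $k$, isolating the last integration variable, applying the product-to-sum formula for $\cosh$, and evaluating the resulting Gaussian integrals. The only cosmetic differences are that you begin the induction at the degenerate case $k=1$ (empty integral) rather than at $k=2$, and that the paper rescales $x_k$ to reuse its $k=2$ computation while you apply the Gaussian identity directly — both are equivalent bookkeeping choices.
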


\begin{proof}
First, we verify that the statement is true for $k=2$.
\begin{align*}
    &\quad \int_{\mathbb R_+} e^{-x_1^2} \cosh(x_0 x_1) \cosh(x_1 x_2) \,dx_1
    \\
    &= \frac{1}{4} \int_{\mathbb R_+} e^{-x_1^2}
    \left(e^{x_1(x_0+x_2)}+e^{-x_1(x_0+x_2)}+e^{x_1(x_0-x_2)}+e^{-x_1(x_0-x_2)}\right) \, dx_1\\
    &= \frac{1}{4} \int_{\mathbb R} e^{-x_1^2}
    \left(e^{x_1(x_0+x_2)}+e^{-x_1(x_0+x_2)}\right) \, dx_1
    \\
    &= \frac{\sqrt\pi}{4} e^{\frac{(x_0+x_2)^2}{4}}
    +\frac{\sqrt\pi}{4} e^{\frac{(x_0-x_2)^2}{4}}= \frac{\sqrt\pi}{2} e^{\frac{x_0^2+x_2^2}{4}} \cosh \left(\frac{x_0 x_2}{2}\right).
\end{align*}
Now we use the induction argument and suppose that the statement is true for $k$. 
Then we have
\begin{align*}
 &\quad \int_{\mathbb R_+^{k}} e^{-\sum_{j=1}^{k} x_j^2} \prod_{j=1}^{k+1} \cosh(x_{j-1} x_{j}) \, dx_1\cdots dx_{k} 
 \\
 &= \frac{1}{\sqrt{k}} \left(\frac{\pi}{2}\right)^\frac{k-1}{2} e^{\frac{k-1}{k} \frac{x_0^2}{2}} \int_{\mathbb R_+} e^{-\frac{k+1}{2 k} x_k^2} \cosh \Big(\frac{x_0 x_k}{k}\Big) \cosh(x_k x_{k+1}) \, dx_k
 \\
 &= \frac{1}{\sqrt{k}} \left(\frac{\pi}{2}\right)^\frac{k-1}{2} e^{\frac{k-1}{k} \frac{x_0^2}{2}}
    \sqrt{\frac{2k}{k+1}}\int_{\mathbb R_+} e^{-x_k^2} \cosh\Big( \frac{1 }{k} \sqrt{\frac{2k}{k+1}} x_0 x_k\Big) \cosh\Big( \sqrt{\frac{2k}{k+1}}x_k x_{k+1}\Big) \, dx_k
    \\
   &=  \frac{\sqrt 2}{\sqrt{k+1}} 
    \left(\frac{\pi}{2}\right)^\frac{k-1}{2} e^{\frac{k-1}{k} \frac{x_0^2}{2}}   \frac{\sqrt\pi}{2} e^{\frac{1}{4}\left(\frac{2 }{k(k+1)}x_0^2+\frac{2k}{k+1}x_{k+1}^2\right)} \cosh\Big(\frac{x_0 x_{k+1}}{k+1}\Big), 
\end{align*}
and the result follows after noting that $\frac{k-1}{k}+\frac{1}{k(k+1)}
    = \frac{k}{k+1}$.
\end{proof}

\section{An equivalent determinantal formula}  \label{Appendix B_equivalent det formula}

The following proposition is given in \cite[Section 3]{MR2430570}.

\begin{prop}[Cf. Section 3 in \cite{MR2430570}] \label{Prop_FN finite N}
We have 
\begin{equation}
	p_{N,k}^{(\tau)} =  \Big( \frac{1+\tau}{2} \Big)^{ N(N-1)/4 } \frac{1}{ 2^{N/2} \prod_{l=1}^{N} \Gamma(l/2) } [z^{k/2}] \det\Big[ z \, \alpha_{2j-1,2m} + \beta_{2j-1,2m}[1] \Big]_{j,m=1}^{N/2}, 
\end{equation}
where 
\begin{align}
\alpha_{2j-1,2m}[1] & =2^m(m-1)! \sum_{p=1}^{m} \frac{ \Gamma(j+p-3/2) }{ 2^{p-1} (p-1)! },
\\
\label{beta FN}
\beta_{2j-1,2m}[1] &= -4 \im \int_{ \R } \,dx \int_{0}^\infty \,dy\,  e^{y^2-x^2} \erfc\Big( \sqrt{ \frac{2}{1-\tau} } y \Big) (x+iy)^{2j-2}(x-iy)^{2m-1} .
\end{align}
In particular, we have 
\begin{equation} \label{p2n0 FN}
	p_{2n,0}^{(\tau)} =  \Big( \frac{1+\tau}{2} \Big)^{ n(2n-1)/2 } \frac{1}{ 2^{n} \prod_{l=1}^{2n} \Gamma(l/2) } \det\Big[  \beta_{2j-1,2m}[1] \Big]_{j,m=1}^{n}. 
\end{equation}
\end{prop}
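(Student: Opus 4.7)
The plan is to follow Forrester and Nagao's approach \cite{MR2430570}, which runs in parallel to the derivation of Proposition~\ref{Prop_finite N} but with a simpler choice of polynomial basis. Starting from the JPDF of configurations with $2k$ real and $n-k$ complex-conjugate pairs of eigenvalues, I would apply a de Bruijn--type integration identity to rewrite the generating function $\sum_k z^k p_{2n,2k}^{(\tau)}$ as a Pfaffian whose entries have the form
\[
z\langle q_{i-1}, q_{j-1}\rangle_{\R} + \langle q_{i-1}, q_{j-1}\rangle_{\C},
\]
using the same split of the skew inner product \eqref{inner product} that appeared in the proof of Proposition~\ref{Prop_finite N}. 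Here, however, instead of the skew-orthogonal basis $\{q_j\}$ of \eqref{SOP}, I would take the monomial basis $q_j(x)=x^j$. This is the key difference between the two determinantal representations.

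With this basis choice, the entries of the Pfaffian matrix are precisely $z\,\alpha_{i-1,j-1} + \beta_{i-1,j-1}[1]$. The formula \eqref{beta FN} for $\beta_{2j-1,2m}[1]$ is essentially the definition of $\langle x^{2j-2}, x^{2m-1}\rangle_{\C}$ against the eGinOE weight $\omega^{(\tau)}$, once one substitutes $z=x+iy$ and expands the bracket; it follows directly. The evaluation of $\alpha_{2j-1,2m}$ amounts to computing the Gaussian-weighted $\sgn(y-x)$ integral of $x^{2j-2}y^{2m-1}$ over $\R^2$. Repeated integration by parts in $y$, together with explicit Gaussian moments, collapses this to the stated finite sum of Gamma values. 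The parity of the Gaussian weight $e^{-x^2/(1+\tau)}$ forces a block of entries to vanish, and this is what allows the $2n\times 2n$ Pfaffian to be reduced to an $n\times n$ determinant with the odd--even indexing $(2j-1,2m)$ that appears in the statement.

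The overall prefactor $\bigl(\tfrac{1+\tau}{2}\bigr)^{N(N-1)/4}\,2^{-N/2}\prod_{l=1}^N \Gamma(l/2)^{-1}$ is fixed by combining the normalisation constant $C_n^{(\tau)}$ of the JPDF with the factor picked up in the Pfaffian-to-determinant reduction. A convenient sanity check is matching against the known closed form $p_{2n,2n}^{(\tau)}=((1+\tau)/2)^{n(2n-1)/2}$, which corresponds to extracting $[z^n]$ of the determinant and only retaining the top-degree $\alpha$-contribution.

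The main obstacle I anticipate is the bookkeeping in the Pfaffian-to-determinant reduction: one must show that the cross-parity entries of the Pfaffian matrix vanish and that the remaining block can be rewritten with the compact indexing $(2j-1,2m)$. The integral evaluations are otherwise routine Gaussian/Hermite computations. Once the derivation is complete, the equivalence with Proposition~\ref{Prop_finite N} (which uses a skew-orthogonal basis) reduces to a change-of-basis identity, relating the two determinantal representations through an explicit triangular transformation between the monomials $\{x^j\}$ and the polynomials $\{q_j\}$ in \eqref{SOP}.
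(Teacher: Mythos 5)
The paper does not actually prove this proposition; it simply cites \cite{MR2430570} and then numerically verifies the formula for small $n$ in the example that follows, so there is no ``paper's own proof'' to compare against. Your sketch does reproduce the Forrester--Nagao derivation faithfully in outline: pass from the JPDF to a $2n\times 2n$ Pfaffian via a de Bruijn identity, with entries $z\langle\cdot,\cdot\rangle_\R+\langle\cdot,\cdot\rangle_\C$ in the monomial basis, then use the parity of the weight under $(x,y)\mapsto(-x,-y)$ to annihilate half the entries and collapse the Pfaffian to an $n\times n$ determinant indexed by $(2j-1,2m)$. Your closing observation that this formula and Proposition~\ref{Prop_finite N} differ only by a unitriangular change of polynomial basis (monomials versus the skew-orthogonal $q_j$ of \eqref{SOP}), which leaves the Pfaffian invariant, is exactly the sense in which the paper declares the two ``equivalent''.

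Three minor corrections. First, the Pfaffian entry at position $(i,j)$ corresponds to $\alpha_{i,j}\leftrightarrow\langle x^{i-1},x^{j-1}\rangle_\R$, not $\alpha_{i-1,j-1}$, so your subscripts are shifted by one. Second, it is the \emph{same}-parity entries $\langle x^{i-1},x^{j-1}\rangle$ with $i\equiv j\pmod 2$ that vanish (the integrand becomes odd under $(x,y)\mapsto(-x,-y)$); the cross-parity entries survive and, after reordering the rows and columns by parity, form the off-diagonal block $B_{jm}=\langle x^{2j-2},x^{2m-1}\rangle$ whose determinant equals the Pfaffian up to sign — your phrasing has this backwards. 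Third, matching the stated $\beta_{2j-1,2m}[1]$ (which carries $e^{y^2-x^2}$ and $\erfc(\sqrt{2/(1-\tau)}\,y)$) against $\langle x^{2j-2},x^{2m-1}\rangle_\C$ defined via \eqref{inner product} (which carries $e^{(y^2-x^2)/(1+\tau)}$ and $\erfc(\sqrt{2/(1-\tau^2)}\,y)$) is not immediate: it requires the rescaling $(x,y)\to\sqrt{1+\tau}\,(x,y)$, and the resulting Jacobian and degree-dependent powers of $(1+\tau)$ feed into the overall prefactor $\big(\tfrac{1+\tau}{2}\big)^{N(N-1)/4}$. That bookkeeping is where a careless derivation would most likely drop a power of $(1+\tau)$, so the sanity check against $p_{N,N}^{(\tau)}$ that you propose is a good one to carry out explicitly.
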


\begin{ex}
By \eqref{Mn tau jk}, we have
\begin{align}
M_2^{(\tau)}= \begin{bmatrix}
\frac{\sqrt{2(1+\tau)}}{2} & \frac{ (1-\tau)\sqrt{1+\tau} }{4}
\\
\frac{ (1-\tau)\sqrt{1+\tau} }{4} & \frac{ \sqrt{2(1+\tau)} (3+2t+3t^2) }{16}
\end{bmatrix}
\end{align}
Then one can check that the formulas \eqref{p 2n 0 Mn tau} for $n=1,2$ give rise to 
\begin{equation} \label{p2 p4 tau}
p_{2,0}^{(\tau)} =  1-\frac{\sqrt{2(1+\tau)} }{2}, \qquad p_{4,0}^{(\tau)}  =   \frac{9+3\tau+3\tau^2+\tau^3}{8}- \frac{ \sqrt{2(1+\tau)} (11+2\tau+3\tau^2) }{16}.
\end{equation}
In particular, for $\tau=0$, we have 
\begin{equation} \label{p2 p4 tau0}
p_{2,0}^{(0)}= 1-\frac{\sqrt{2}}{2}, \qquad 
p_{4,0}^{(0)}= \frac{18-11\sqrt{2}}{ 16 }.
\end{equation}
The formula \eqref{p2 p4 tau0} also appeared in \cite{edelman1997probability}.
It is obvious that in the symmetric case when $\tau=1$, $p_{2,0}^{(1)}=p_{4,0}^{(1)}=0.$

On the other hand, by direct computations using \eqref{beta FN}, we have
\begin{align*}
&\beta_{1,2}[1]
= 2\sqrt{\pi} \frac{ \sqrt{2}-s }{ s }, \qquad \beta_{3,4}[1]= \sqrt{\pi} \frac{  12\sqrt{2}-16\sqrt{2}s+12\sqrt{2}s^4-7s^5  }{  2s^5 } , 
\\
&\beta_{3,2}[1]= -\sqrt{\pi} \frac{ 2\sqrt{2}- 2\sqrt{2}s^2+s^3  }{  s^3 }
,\qquad \beta_{1,4}[1]=  -\sqrt{\pi}  \frac{  2\sqrt{2}-6\sqrt{2} s^2+5s^3   }{ s^3 },
\end{align*}
where $s=\sqrt{1+\tau}$. 
Then by \eqref{p2n0 FN}, we have \eqref{p2 p4 tau}. 
\end{ex}

\bibliographystyle{abbrv}
\bibliography{RMTbib}
\end{document}